\documentclass[aps,notitlepage,twocolumn,nofootinbib,pra,10pt]{revtex4-1}
\usepackage{amsmath,amsfonts,amssymb,amsthm,bbm,graphicx,enumerate,times}
\usepackage{mathtools}
\usepackage[usenames,dvipsnames]{color}

\usepackage{hyperref}
\usepackage{tikz}
\usepackage{graphicx}
\usepackage{float}
\usetikzlibrary{shapes,positioning}
\usepackage{etoolbox}
%\usepackage[font=small,labelfont=bf]{caption}
%\addtolength{\abovecaptionskip}{5pt}
%\addtolength{\belowcaptionskip}{-5pt}

\definecolor{jens}{rgb}{0,.8,.5}

\definecolor{mgcolor}{rgb}{8,.4,.4}

\definecolor{ajcolor}{rgb}{.3,.2,.9}

% reduce hyphenation
%\pretolerance=200
%\tolerance=100 
%\emergencystretch=10pt

%%%================================================
%%%==============  new commands  =======================
%%%================================================
%%% colors

%%% mathbb
\usepackage{dsfont}
\newcommand{\id}{\mathds{1}}

%orthogonal group

%%% mathrm

 \renewcommand{\i}{\,\ensuremath\mathrm{i}}

%%% mathcal

% new thm environments
\newtheorem{theorem}{Theorem}
\newtheorem{lemma}[theorem]{Lemma}

\newtheorem{definition}[theorem]{Definition}

% \newtheorem{example}[theorem]{Example}
% \newtheorem{implication}[theorem]{Implication}
% \newtheorem{assumption}[theorem]{Assumption}
% \newtheorem{corollary}[theorem]{Corollary}
% \newtheorem{observation}[theorem]{Observation}

% math signs with small spacing
\newcommand{\spl}{\,{+}\,}

\newcommand{\seq}{\,{=}\,}
\newcommand{\sneq}{\,{\neq}\,}
\newcommand{\sle}{\,{<}\,}
\newcommand{\sleq}{\,{\leq}\,}
\newcommand{\sgr}{\,{>}\,}

\newcommand{\stimes}{\,{\times}\,}
\newcommand{\sto}{\,{\to}\,}
\newcommand{\sapprox}{\,{\approx}\,}
\newcommand{\selem}{\,{\in}\,}
\newcommand{\sdef}{\,{\coloneqq}\,}

\newcommand{\ket}[1]{\left.\left|{#1}\right.\right\rangle}

\newcommand{\bra}[1]{\left.\left\langle{#1}\right.\right|}

\newcommand{\braket}[2]{\left\langle #1 \middle| #2 \right\rangle}

%vacuum

\newcommand\vacket{{\ket{\emptyset}}}

%%% ------ specific to this project ----------

\DeclareMathAlphabet{\mathpzcc}{OT1}{pzc}{m}{it}
\DeclareMathAlphabet{\mathpzc}{T1}{pzc}{m}{it}{\huge}
\newcommand{\msp}{\phantom{-}}
\newcommand{\m}{\operatorname{\gamma}}
\newcommand{\fe}{\operatorname{f}^{\phantom{\dagger}}}
\newcommand{\ft}{\tilde{\operatorname{f}}^{\phantom{\dagger}}}
\newcommand{\fd}{\operatorname{f}^\dagger}
\newcommand{\ftd}{\tilde{\operatorname{f}}^\dagger}

\newcommand\Par[1][]{%
  \ifstrempty{#1}{%
    \mathcal P_\text{tot}
  }{
    \mathcal P_{#1}
  }
}

\newcommand\Pf[1]{\mathrm{Pf}\left(#1\right)}

\def\f{{\operatorname{f}^{(full)}}}
\def\nUG{{ :U_G:}}

\def\and{\quad\text{and}\quad}

\def\xor{\ \texttt{XOR}\ }

\newcommand\BC[1]{ {\{0,1\}^{\times #1}}}
\def\BCL{ \{0,1\}^{\times L}}
\def\BCR{ \{0,1\}^{\times r}}
\def\d{\text{d}}

%later include this
\begin{document}
\title{Holography and criticality in matchgate tensor networks}
\author{A.\ Jahn, M.\ Gluza, F.\ Pastawski, J.\ Eisert}
\affiliation{Dahlem Center for Complex Quantum Systems, Freie Universit{\"a}t Berlin, 14195 Berlin, Germany}

\date{\today}

\begin{abstract}
The AdS/CFT correspondence 
conjectures a holographic duality between gravity in a bulk space and a critical quantum field theory on its boundary.
Tensor networks have come to provide toy models to understand such bulk-boundary correspondences, shedding light on connections between geometry and entanglement. 
We introduce a versatile and efficient framework for studying tensor networks, extending previous tools for Gaussian matchgate tensors in $1{+}1$ dimensions. Using regular bulk tilings, we show that the critical Ising theory can be realized on the boundary of both flat and hyperbolic bulk lattices, obtaining highly accurate critical data. Within our framework, we also produce translation-invariant critical states by an efficiently contractible tensor network with the geometry of the multi-scale entanglement renormalization ansatz. Furthermore, we establish a link between holographic quantum error correcting codes and tensor networks. This work is expected to stimulate a more comprehensive study of tensor-network models capturing bulk-boundary correspondences.
\end{abstract}
\maketitle

The notion of holography in the context of bulk-boundary dualities, most famously expressed through the \emph{AdS/CFT correspondence} \cite{Maldacena98}, has had an enormously stimulating effect on recent developments in theoretical physics. 
A key feature of these dualities is the relationship between bulk geometry and boundary entanglement entropies \cite{VanRaamsdonk2010, Pastawski2016a, AreaReview}, prominently elucidated by the Ryu-Takayanagi formula \cite{PhysRevLett.96.181602}. 
Due to the importance of entanglement in the context of AdS/CFT \cite{PhysRevD.86.065007}, it was quickly realized that tensor networks are ideally suited for constructing holographic toy models, prominently the \emph{multiscale entanglement renormalization ansatz} (MERA) \cite{PhysRevLett.101.110501,MERAAlgorithms,PhysRevLett.100.130501}. 
The realization that \emph{quantum error correction} could be realized by a holographic duality \cite{Almheiri15} further connected to ideas from quantum information theory.
Despite the successful construction of several tensor network models that reproduce various aspects of AdS/CFT (see e.g.\ Refs.~\cite{Qi2013,Pastawski2015, Hayden2016}), a general understanding of the features and limits of tensor network holography is still lacking. 
Particular obstacles are the potentially large parameter spaces of tensor networks as well as the considerable computational cost of contraction.

In this work, we overcome some of these challenges by applying highly efficient contraction techniques developed for \emph{matchgate} tensors \cite{Valiant2002, Bravyi2008}, 
which replace tensor contraction by a Grassmann-variate integration scheme. 
These techniques allow us to comprehensively study the interplay of geometry and correlations in Gaussian fermionic tensor networks
in a versatile fashion, incorporating toy models for quantum error correction and tensor network approaches for conformal field theory (CFT), such as the MERA, into a single framework, highlighting the connections between them.
%and allows us to deform toy models arising from notions of quantum error correction to
%capture physically reasonable CFTs
Furthermore, this framework includes highly symmetrical tensor networks based on \textsl{regular tilings} (see Fig.\ \ref{FIG_TRI_CURVATURE}). We are thus in a position to efficiently probe the full space of Gaussian bulk-boundary correspondences from a small set of parameters, including the bulk curvature. 
We show that matchgate tensor networks with a variety of bulk geometries contain the Ising CFT in their parameter space
to remarkably good approximation as a special case, with 
properties similar to the wavelet MERA model \cite{PhysRevLett.116.140403,RigorousWavelet}. While regular hyperbolic tilings have recently been considered as a MERA alternative \cite{HyperInvariantTensors}, 
we show that flat tilings can lead to very similar boundary states.
In our studies, we restrict ourselves to tensor networks that are non-unitary and real, resembling a \textsl{Euclidean} evolution from bulk to boundary. In particular, we do not require the causal constraints of the MERA for efficient contraction, thus providing new approaches in the context of tensor network renormalization \cite{Evenbly2015TNR1,Evenbly2015TNR2}. 
While we provide significant evidence that tensor networks are capable of describing bulk/boundary correspondences beyond known models and introduce a framework for their study, our work is by no means exhaustive. We do hope to provide a starting point for more systematic studies of holography in tensor networks.% such as those including random tensors \cite{Hayden2016}.
\begin{figure}
\centering
\includegraphics[width=0.22\textwidth]{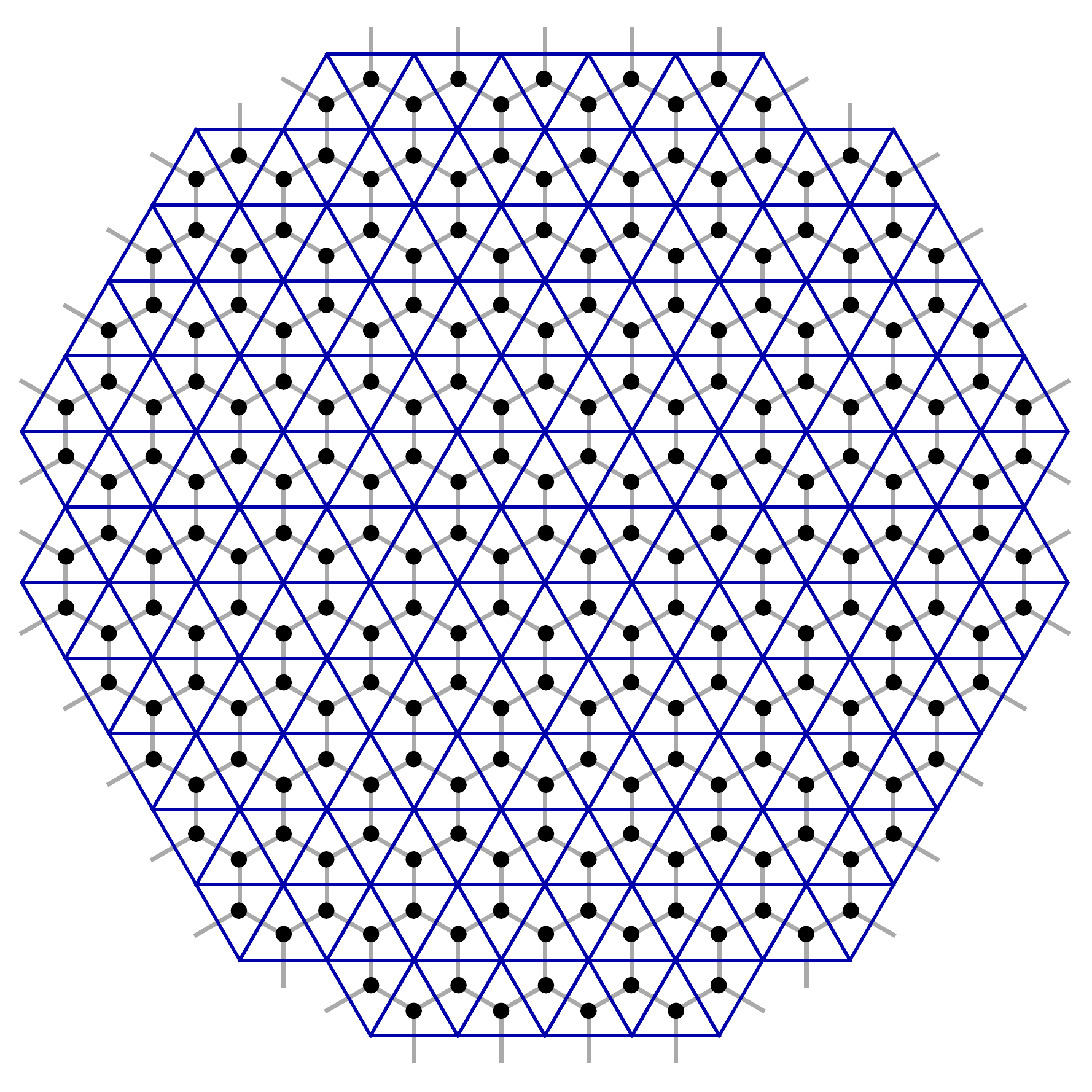}
\includegraphics[width=0.22\textwidth]{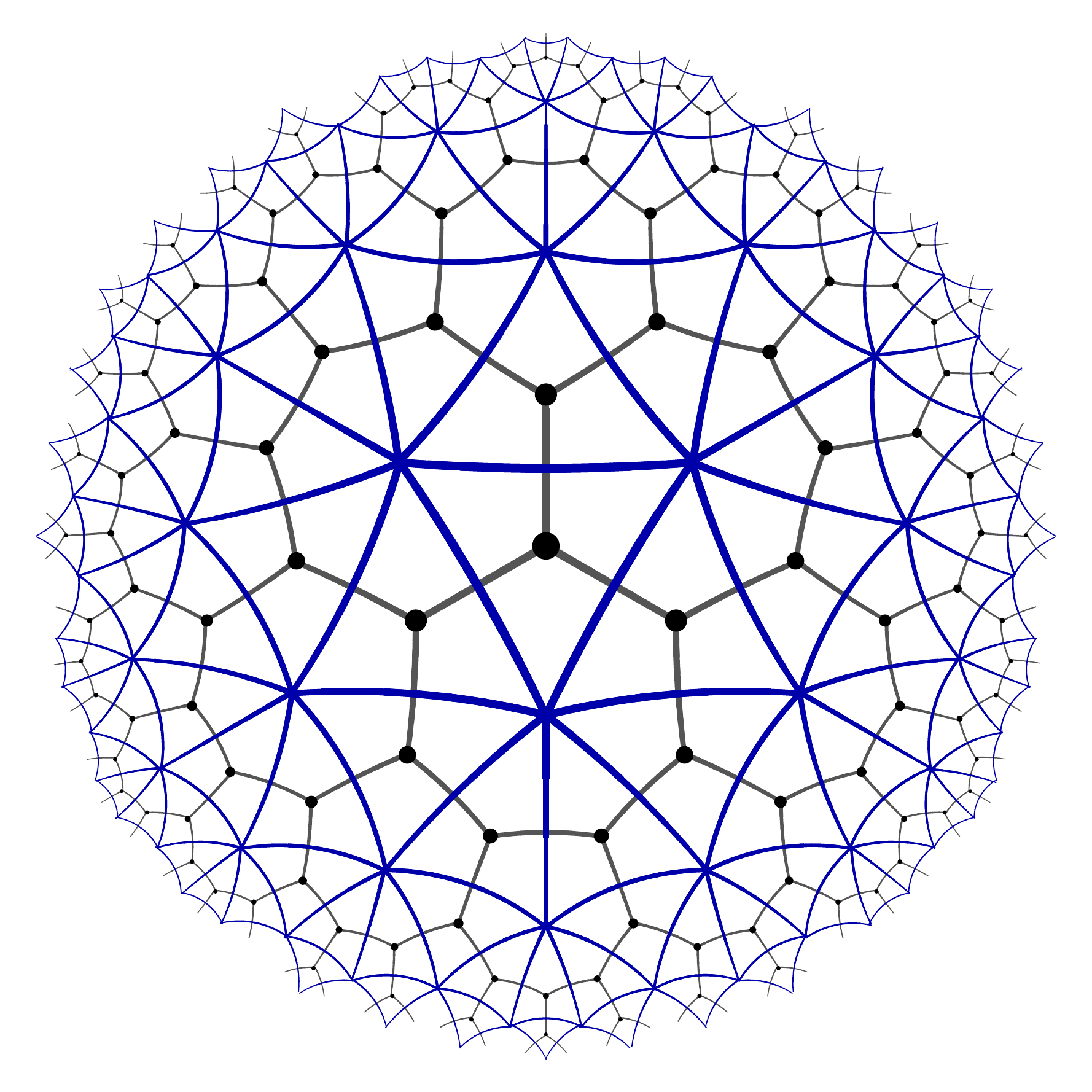}

\caption{Triangular tilings of flat and hyperbolic space (blue edges) and the corresponding tensor network (black lattice). In the matchgate formalism, joint edges between triangles correspond to an integration over a pair of Grassmann numbers, analogous to tensor network contraction over indices.
}
\label{FIG_TRI_CURVATURE}
\end{figure}
\paragraph*{Setting.}
We construct two-dimensional planar tensor networks with fermionic \emph{bulk} and \emph{boundary} degrees of freedom. The bulk degrees of freedom are associated with a set $V$ of vertices of a tensor network.  At each vertex $v \selem V$, a local tensor $T_v$ with $k_v$ indices is placed, which can be interpreted as a local fermionic state on $k_v$ sites. After contraction over all connected bulk indices, the $L$ remaining open indices are interpreted as boundary sites, with the boundary state specified by the full contracted tensor. Due to the planarity of the network, the boundary sites form a loop. The bulk geometry can be flat or negatively curved (a positively curved network closes in on itself after finite distance). We visualize our tensor networks by representing each tensor $T_v$ as a $k_v$-gon whose edges correspond to indices.
Thus, the tensor network is represented by a polygon tiling which determines the bulk geometry.
Adjacent edges between two polygons correspond to contracted indices and boundary edges to open ones. 
See Fig.\ \ref{FIG_TRI_CURVATURE} for examples.

Concretely, each bulk degree of freedom $v \selem V$ is associated with
 a local tensor $T_v: \{0,1\}^{\times r} \sto \mathbb C$ of tensor rank $r$ 
 (equal to the number of edges of the corresponding tile), which are contracted to form tensors of higher rank.
We denote the tensor component at indices $j \selem \BCR$ as $T_v(j)$ and the standard computational basis for $r$ boundary spins as $\ket {j} \sdef \otimes_{k=1}^r \ket{j_k}$. Each tensor is then equivalent to a state
\begin{align}
  \label{eq:psiT}
  \ket {\psi_v} =\sum_{j\in \BCR} T_v(j) \ket  {j} \text{ .}
\end{align}
For a broader introduction to tensor networks and their contractions, 
see Refs.\ \cite{Orus-AnnPhys-2014,SchuchReview,VerstraeteBig,EisertTensors}.

\paragraph*{Matchgate tensors.}
Instead of explicit tensor contraction along pairs of indices, we use the formalism from Ref.\ \cite{Bravyi2008} employing Grassmann integration. Any tensor $T$ can be represented by a Grassmann-variate \emph{characteristic function} 
\begin{align}
\label{eq:char}
\Phi_T(\theta) = \sum_{j\in \BC r} T(j) \theta_1^{j_1}\theta_2^{j_2}\ldots\theta_r^{j_r} \text{ ,}
\end{align}
where the $\theta_k$ are Grassmann numbers defined by the anti-commutation relation $\theta_k \theta_{k'}+\theta_{k'}\theta_k \seq 0$.
The contraction $T_{1\star 2}$ of two tensors $T_1$ and $T_2$ (of rank $r_1$ and $r_2$, respectively) over the last index of $T_1$ and the first index of $T_2$ is given by
\begin{equation}
T_{1\star 2}(x,y)=\sum_{z \in \lbrace 0,1 \rbrace} T_1(x,z)T_2(z,y) \text{ ,}
\end{equation}
where $x\selem\BC {(r_1-1)},y\selem \BC {(r_2-1)}$. Clearly, $T_{1\star 2}$ has rank $r_1+r_2-2$. The characteristic function of the contraction is obtained as
  \begin{align}
  \label{EQ_CONTR_INTEGR}
    \Phi_{T_{1\star 2}}(\xi) =\int {\rm d}\eta_1 \int {\rm d}\theta_{r_1} 
    \; \Phi_{T_1}(\theta) \Phi_{T_2}(\eta)\exp( \theta_{r_1}\eta_1) \text{ ,}
  \end{align}
 where we have used $\xi \seq ( \theta_1,\ldots,\theta_{r_1-1},\eta_2,\ldots,\eta_{r_2})$, and $\int {\rm d}\eta_1 \int {\rm d}\theta_{r_1}$ 
 denotes Grassmann integrals, anti-commuting multilinear functionals obeying $\int {\rm d}\xi_j \,\xi_j^{z_j} \seq  \delta_{z_j,1} $ (see Refs.\ \cite{Berezin, cahill1999density,Bravyi2008, bravyi2004lagrangian} for more details). 
A self-contained derivation of the equivalence of \eqref{EQ_CONTR_INTEGR} with tensor contraction, as well as a note on iterated integrals, is given in Appendix \ref{sec:app_int}. Anti-commutativity requires an appropriate labeling of all Grassmann variables, but such a labeling can always be found for contractions of planar networks \cite{Bravyi2008}.
Such Grassmann integrations are particularly efficient to compute for the case of \emph{matchgate} tensors, where their computation scales polynomially in the number of tensor indices.

Consider a rank-$r$ tensor $T(x)$ with inputs $x\selem \BC r$. We call $T(x)$ a  \emph{matchgate} if there exist an antisymmetric matrix $A \selem \mathbb C^{r\times r}$ and a $z\selem \BC r $ so that we can write
\begin{equation}
T(x) =\Pf{A_{|x \xor z}}T(z) \text{ ,}
\end{equation}
where $\Pf A$ is the Pfaffian of $A$, and $A_{|x}$ is the principal submatrix of $A$ acting on the subspace supported by $x$. Furthermore, we call $T(x)$ an even tensor if $T(x) \seq 0$ for any $x$ with odd $\sum_j x_j$.

A generic even matchgate has a simple Gaussian characteristic function of the form 
\begin{equation}
\Phi_T(\theta)=T(0)\exp\biggl(\frac12\sum_{j,k=1}^{r} A_{j,k} \theta_j \theta_k\biggr)\ \text{ ,}
\label{eq:genericChar}
\end{equation}
where $T(0)$, the tensor component for all-zero input, acts as a normalization factor. Apart from normalization, the full tensor is completely determined by $A$, which we therefore call the \emph{generating matrix}. 
Thus, the rules for contracting matchgate tensors can be written as rules for combining generating matrices. 
Full derivations of these, including the calculation of physical covariance matrices from the generating matrices, are provided in Appendices \ref{sec:app_conversion} and \ref{sec:app_contr}.
With our contraction rules, the computational cost of contracting two tensors is quadratic in the number of indices of the final tensor. Thus, we can bound the total computational cost for contracting an entire network of the type considered here by 
$O (L^2 N)$, where $L$ is the number of boundary sites and $N$ is the number of contracted tensors (for similar bounds on matchgate contraction, see Ref.\ \cite{Bravyi2008}).

Using Pauli matrices $\sigma^\alpha$ with $\alpha\selem\left\lbrace x,y,z \right\rbrace$, we can define Majorana operators $\m_i$ via the Jordan-Wigner transformation
\begin{align}
\label{EQ_JW_TRANS}
{\m}_{2k-1} &=
   ({\sigma^z})^{{\otimes}(k-1)}{\otimes}\,\sigma^x\,{\otimes}\left( \id_2 \right)^{\otimes(r-k)} \text{ ,}\\
{\m}_{2k}  &=
   ({\sigma^z})^{{\otimes}(k-1)}{\otimes}\,\sigma^y\,{\otimes}\left( \id_2 \right)^{\otimes(r-k)} \text{ .}
\end{align} 
The computational basis is then equivalent to an occupational basis.
In this context, we prove in Appendix \ref{sec:app_defs} that any fermionic Gaussian state vector in the form \eqref{eq:psiT} has coefficients $T(j)$ constituting a matchgate tensor. The converse is also true, providing a further perspective on the connection to free fermions \cite{terhal2002classical}.

\paragraph*{The holographic pentagon code.}

We will now apply our framework to the highly symmetric class of regular bulk tilings, first implementing the \emph{holographic error correcting code} (HaPPY code) proposed in Ref.\ \cite{Pastawski2015} and then exploiting the versatility of our framework to extend it towards more physical setups.
The HaPPY code furnishes a mapping between additional (uncontracted) bulk degrees of freedom on each tensor and the boundary state, realized by a bulk tiling of pentagons. Each pentagon tile encodes one fault-tolerant logical qubit via the encoding isometry of the \emph{five qubit code}.
This $[[ 5, 1, 3]]$ quantum error-correcting code \cite{Laflamme1996} saturates both the \emph{quantum Hamming bound} \cite{PhysRevA.54.1862,RevModPhys.87.307} as well as the \emph{singleton bound} \cite{RevModPhys.87.307} and can be expressed as a \emph{stabilizer code} \cite{Gottesman1997}. 

We observe that fixing the bulk degrees of freedom to computational basis states gives rise to a matchgate tensor network, as the logical computational basis states of the holographic pentagon code can be viewed as ground states of a quadratic fermionic Hamiltonian. This can be seen directly by applying \eqref{EQ_JW_TRANS} onto the \emph{stabilizers} $S_k$ of the underlying $[[5,1,3]]$ code, thus expressing it in terms of Majorana operators $\m_i$ and a total parity operator $\Par = (\sigma^z)^{\otimes 5}$:
\begin{equation}
\begin{aligned}
  S_1& = \sigma^x\otimes \sigma^z\otimes \sigma^z\otimes \sigma^x\otimes \id_2 = \i \m_7 \m_2 \text{ ,}\\
  S_2&=\id_2\otimes \sigma^x\otimes \sigma^z\otimes \sigma^z\otimes \sigma^x = \i \m_9 \m_4 \text{ ,}\\
  S_3&=\sigma^x\otimes \id_2\otimes \sigma^x\otimes  \sigma^z\otimes \sigma^z=\i\,\Par \m_6 \m_1 \text{ ,}\\
  S_4&=\sigma^z\otimes \sigma^x\otimes \id_2\otimes \sigma^x\otimes \sigma^z =\i\,\Par \m_8 \m_3 \text{ ,}\\
  S_5&=\sigma^z\otimes \sigma^z\otimes \sigma^x\otimes \id_2\otimes \sigma^x =\i\,\Par \m_{10} \m_{5} \text{ .}
\end{aligned}
\end{equation}
As the corresponding stabilizer Hamiltonian is given by $H=-\sum_{k=1}^5 S_k$, we find a doubly degenerate ground state whose degeneracy is lifted by the parity operator $\Par$. The resulting two states with parity eigenvalues $\pm 1$ correspond to the logical eigenstates $\bar{0}$ and $\bar{1}$, which are themselves ground states of purely quadratic Hamiltonians with different parity factors. 
Thus both computational basis states are pure Gaussian, leading us to the conclusion that for fixed computational input in the bulk, the holographic pentagon code yields a matchgate tensor on the boundary (see Fig.\ \ref{FIG_PENTAGON_CODE}). The explicit construction is given in Appendix \ref{ssec:happy}.
Using the Schl\"afli symbol $\lbrace p, q \rbrace$, where $p$ is the number of edges per polygon and $q$ the number of polygons around each corner, we can specify the hyperbolic geometry of the HaPPY model as a regular $\{5,4\}$ tiling.

\begin{figure}
\centering
\begin{equation*}
\begin{gathered}
\includegraphics[width=0.2\textwidth]{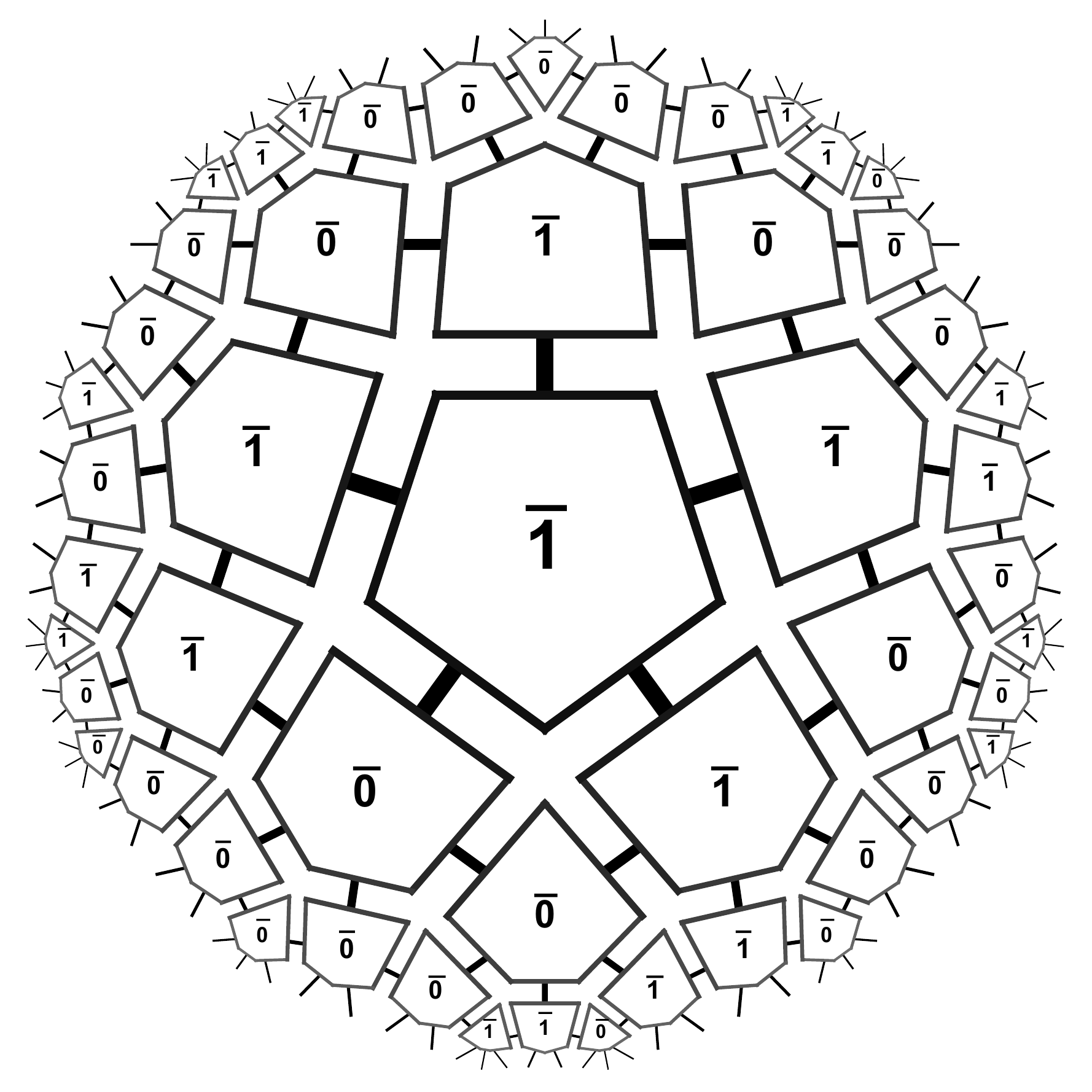}
\end{gathered}
\scalebox{1.5}{$\;=\;$}
\begin{gathered}
\includegraphics[width=0.2\textwidth]{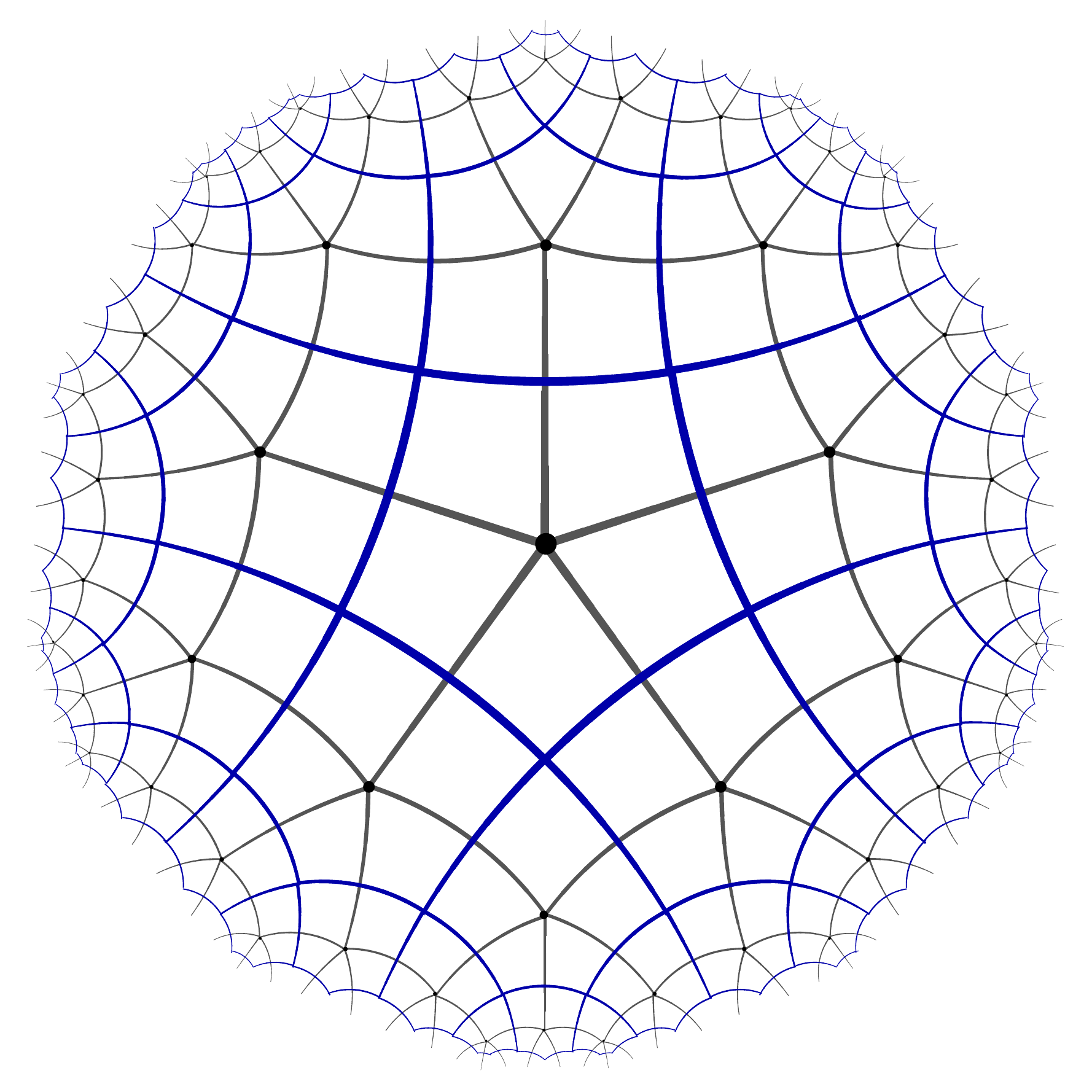}
\end{gathered}
\end{equation*}

\caption{The HaPPY code for fixed bulk inputs (left) is equal to a matchgate tensor network on a hyperbolic pentagon tiling (right).
}
\label{FIG_PENTAGON_CODE}
\end{figure}

We find that the correlation structure of this model is best captured in the Majorana picture.
Explicitly, consider the pentagon tiling of Ref.\ \cite{Pastawski2015} with all bulk inputs set to the positive-parity eigenvector $| \bar{0} \rangle$.
The entries of the Majorana covariance matrix $\Gamma_{i ,j} = \frac{\i\,}{2}\langle \psi | \left[ \m_i, \m_j \right] | \psi \rangle$ resulting from successive contraction steps are shown in Fig.\ \ref{FIG_COVMM_PENTA}. 
As we can see, both the individual pentagon state and the larger contracted states are characterized by a non-local pairing of Majorana fermions. 
Intriguingly, the contractions effectively connect Majorana pairs from each pentagon to a larger chain, so the pairs on the boundary of the contracted network can be seen as endpoints of a discretized ``geodesic'' spanning the bulk. 
While this discontinuous correlation pattern of $\Gamma_{i,j}$ makes the computation of CFT observables difficult, we can estimate the average correlation falloff by counting the relative frequency $n(d)$ of Majorana pairs at distance $d=|i-j|$ over which they connect points on the boundary. According to the results shown in Fig.\ \ref{FIG_FALLOFF_PENTA} (left), correlation falloff follows a power law $n(d) \propto d^{-1}$, as expected of a CFT. Furthermore, we compute the entanglement entropy $S_A$ of a subsystem $A$ of size $\ell$ averaged over all boundary positions, defined as
\begin{equation}
\mathbb{E}_\ell(S) = \sum_{k=1}^L S_{[k,k+\ell]} \text{ .}
\end{equation}
The result, shown in Fig.\ \ref{FIG_FALLOFF_PENTA} (right), closely follows the Calabrese-Cardy formula for periodic $1{+}1$-dimensional CFTs, given by \cite{Holzhey,CalabreseReview}
\begin{equation}
\label{EQ_CALABRESE_CARDY}
S_A = \frac{c}{3} \log \left( \frac{L}{\pi \epsilon} \sin\frac{\pi \ell}{L} \right) \simeq \frac{c}{3} \log \frac{\ell}{\epsilon} + O\left( (\ell/L)^2 \right) \text{ ,}
\end{equation}
with a numerical fit yielding $c\approx 4.2$ and $\epsilon\approx 1.1$ for a cutoff at $L=2605$ boundary sites.

The peculiar pair-wise correlation of boundary Majorana modes, suggesting a connection to \emph{Majorana dimer models} \cite{PhysRevB.94.115127}, will be more rigorously explored in the future. However, as the correlation structure clearly breaks the translation and scale invariance expected of CFT ground states, we now consider regular tilings with generic matchgate input.

\begin{figure}[htb]
\includegraphics[height=0.118\textheight]{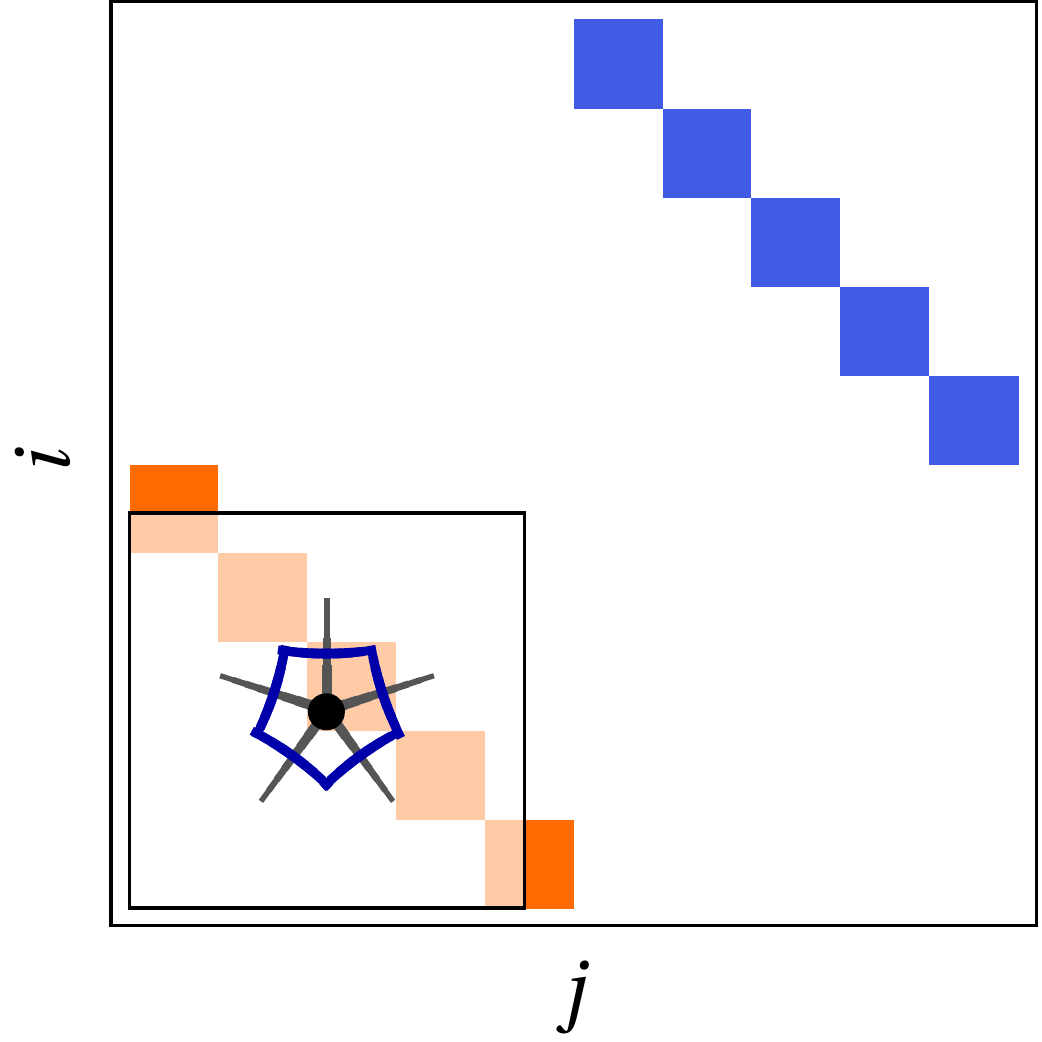}
\includegraphics[height=0.118\textheight]{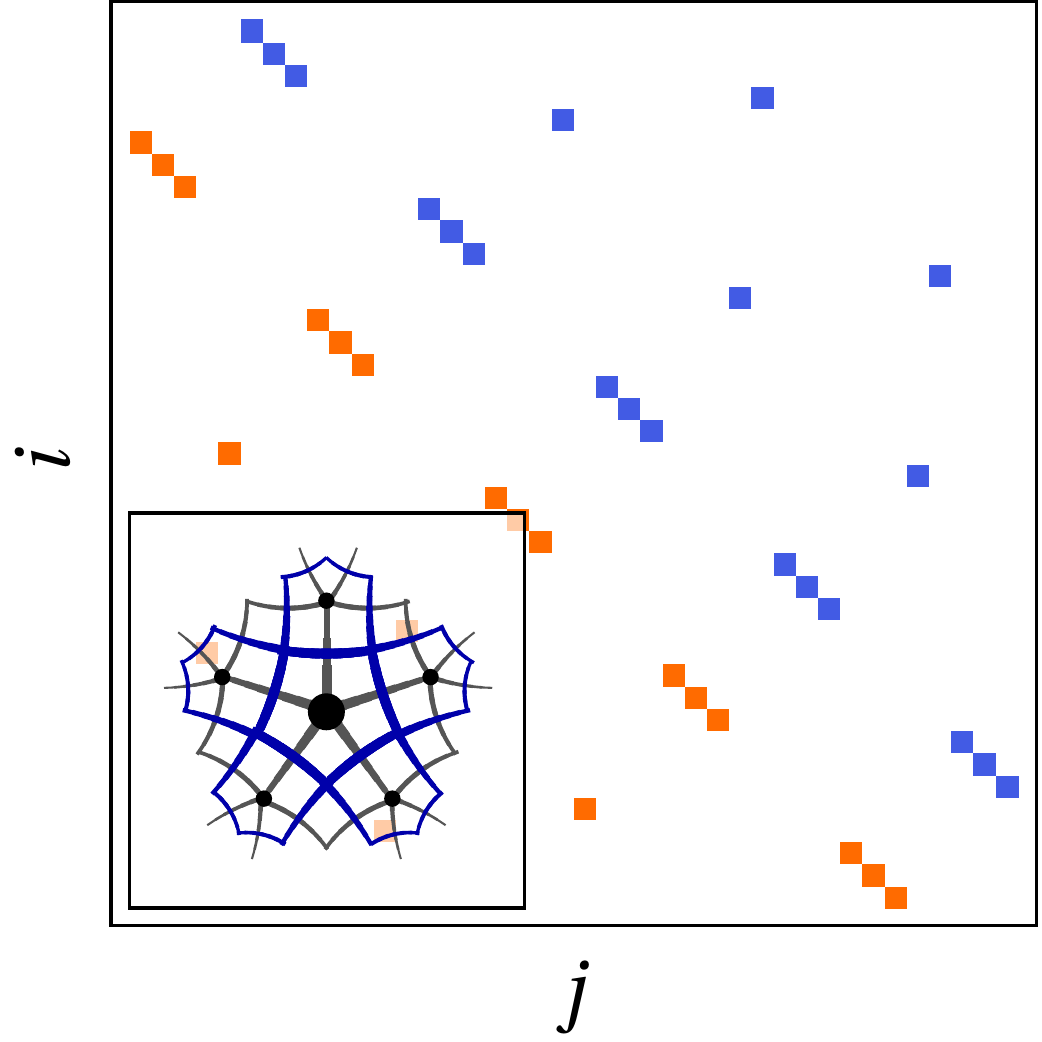}
\includegraphics[height=0.118\textheight]{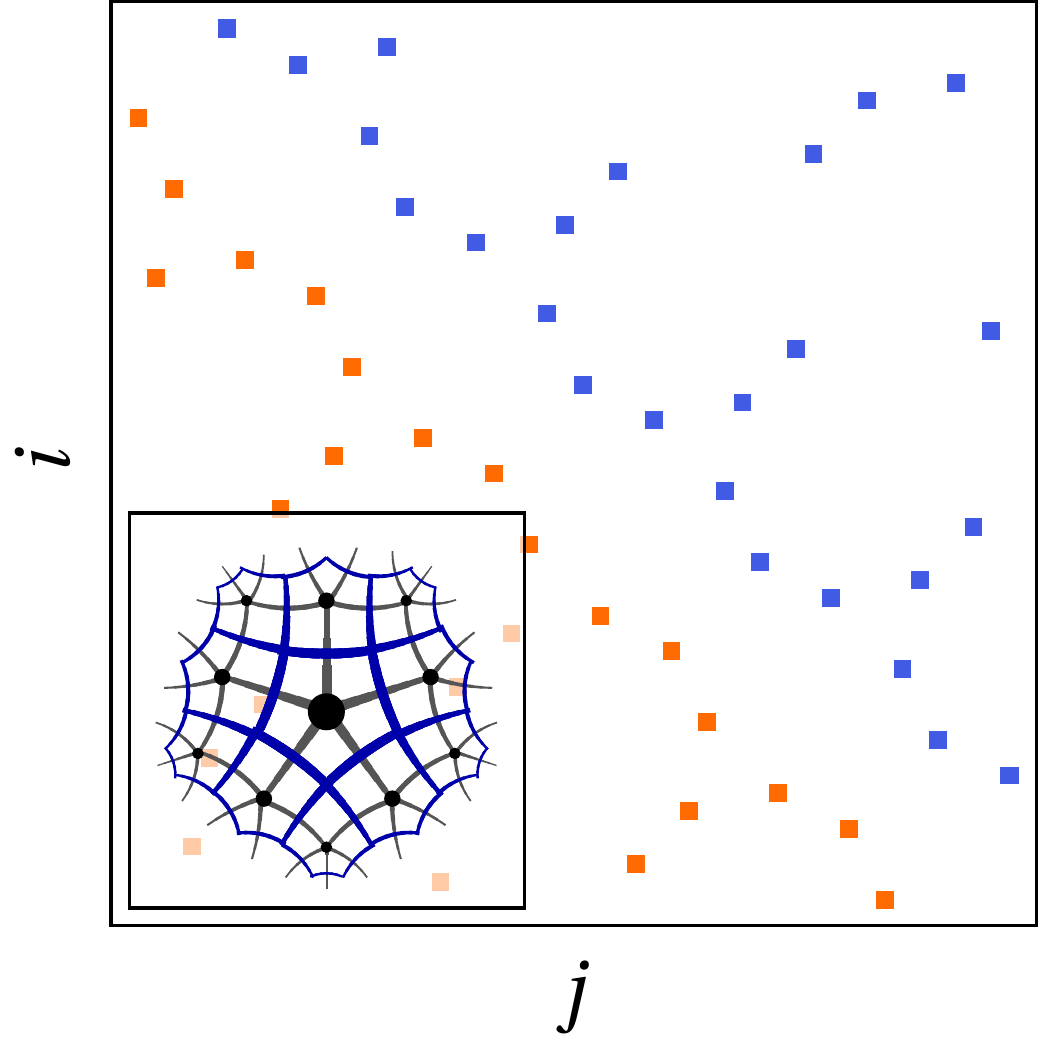}
\caption{Majorana covariance matrix $\Gamma$ for a boundary state of a pentagon tiling with 10, 40, and 50 boundary Majorana fermions (left to right). Entries $\Gamma_{i , j}$ are color-coded (orange = $+1$, blue = $-1$). The corresponding tiling/tensor network is shown in the lower left corner.}
\label{FIG_COVMM_PENTA}
\end{figure}

\begin{figure}[htb]
\includegraphics[width=0.23\textwidth]{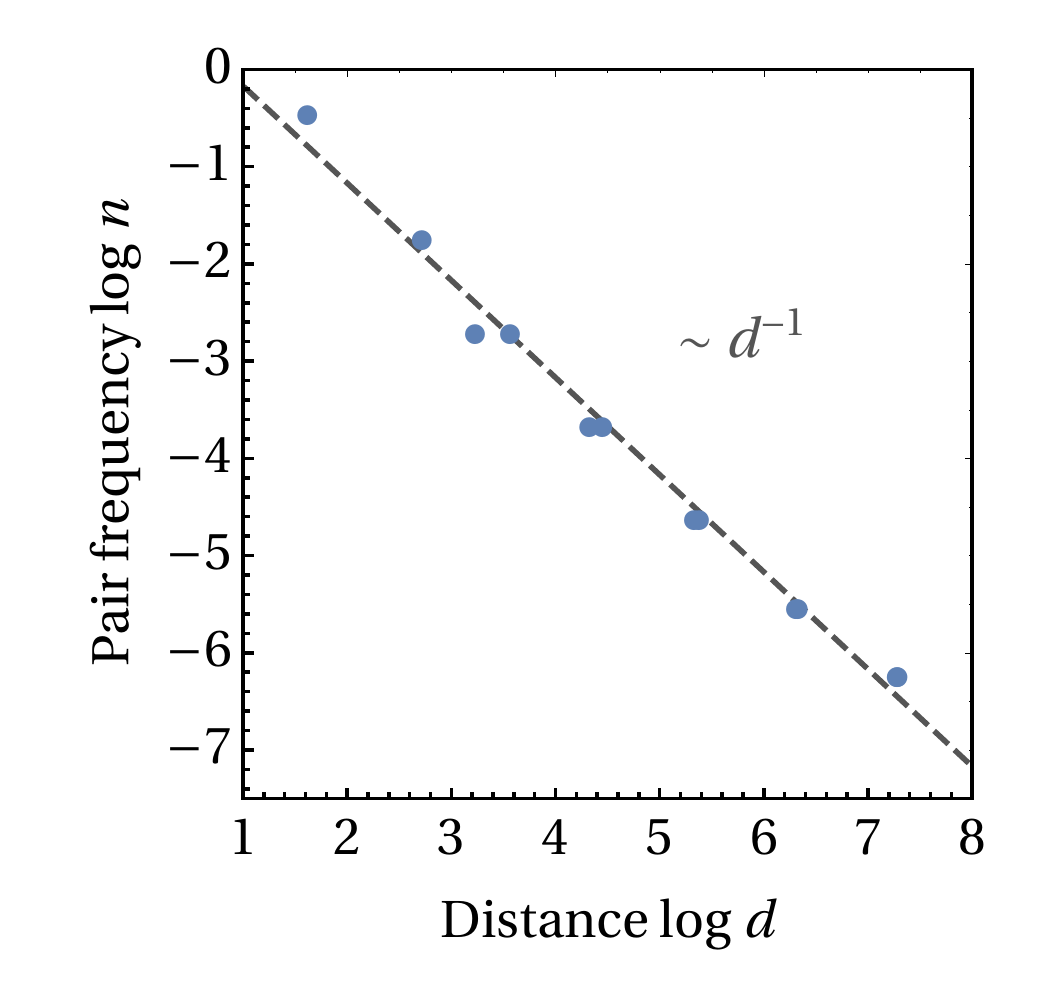}
\includegraphics[width=0.23\textwidth]{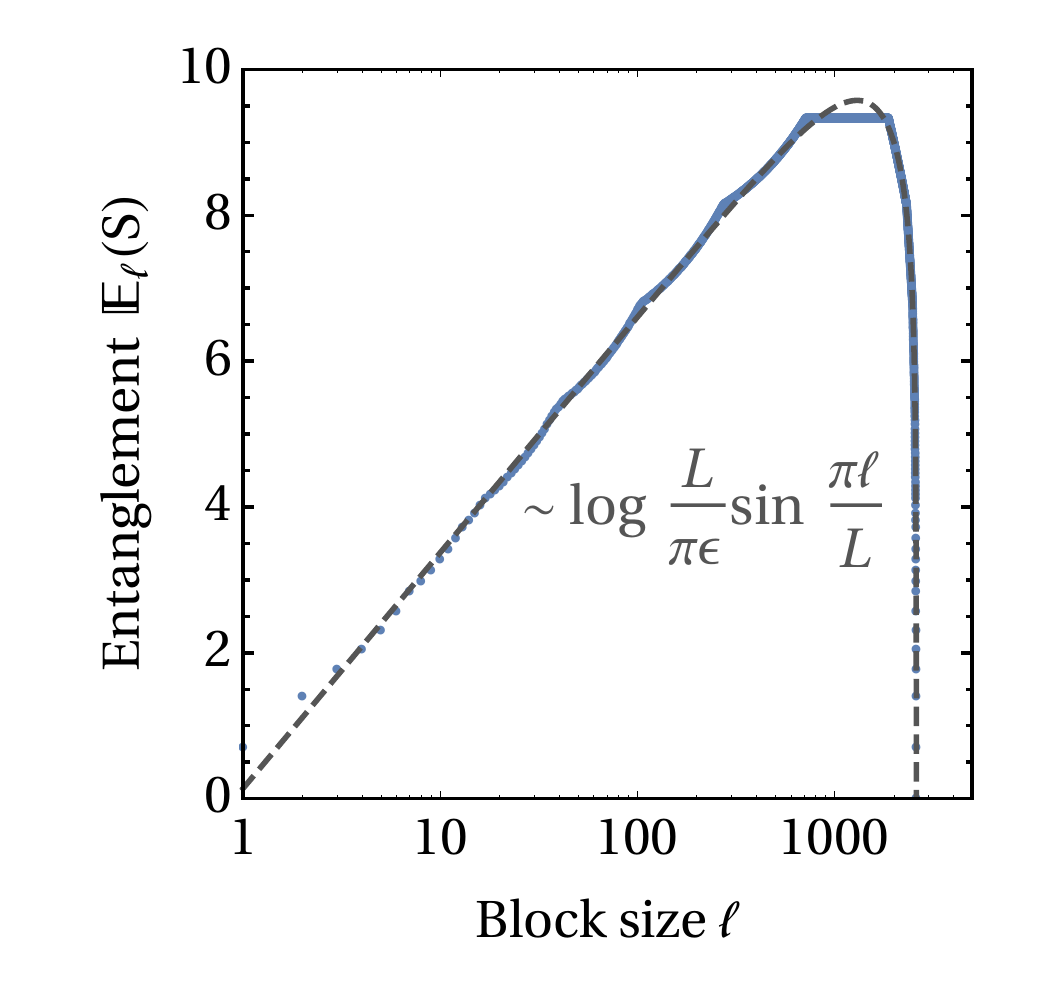}
\caption{Boundary state properties of the HaPPY code at 2605 boundary sites. \textsc{Left:} Average correlations at boundary distance $d$, computed as the relative frequency $n$ of Majorana pairs. Dashed gray line shows an $n(d) \sim 1/d$ numerical fit. \textsc{Right:} Scaling of average entanglement entropy $\mathbb{E}_\ell(S)$ with subsystem size $\ell$. Dashed gray line shows numerical fit using \eqref{EQ_CALABRESE_CARDY}.
}
\label{FIG_FALLOFF_PENTA}
\end{figure}

\paragraph*{Regular triangulations.}

As the boundary states of triangular tilings are necessarily Gaussian \cite{Bravyi2008}, 
we can study their properties comprehensively using matchgate tensors. The simplest such tilings are regular and isotropic, i.e.\ with each local tensor specified by the same antisymmetric $3 \stimes 3$ generating matrix $A$. Isotropy constrains its components to one parameter $a\seq A_{1,2} \seq A_{1,3} \seq A_{2,3}$.
The bulk topology follows from our choice of tiling. For triangular tilings ($p \seq 3$), setting $q \seq 6$ produces a flat tiling, whereas $q \sgr 6$ leads to a hyperbolic one (see Fig.\ \ref{FIG_TRI_CURVATURE}). 
Triangular tilings with $q \sle 6$ produce closed polyhedra
%(i.e., icosahedron, dodecahedron, tetrahedron)
 that are positively curved and lack the notion of an asymptotic boundary.
As a convention, we choose the local orientation of the triangles so that the generating matrix for the contracted boundary state satisfies $A^\prime_{i,j} \sgr 0$ for $i \sgr j$, corresponding to anti-periodic boundary conditions: Covariance matrix entries $\Gamma_{i,j}$ acquire a sign flip when cyclic permutions push either index $i$ or $j$ over the boundary, as relative ordering is reversed.

We now consider the boundary states of $\lbrace 3, k \rbrace$ bulk tilings.
The falloff of correlations along the boundary generally depends on $k$, i.e.\ the bulk curvature, as shown in Fig.\ \ref{FIG_TRI_CFALLOFF} (top row) for the $a \seq 0.25$ case. 
While correlations between the boundary Majorana fermions of a flat bulk fall off exponentially, a hyperbolic bulk produces a polynomial decay (up to finite-size effects at large distances and rounding errors at very small correlations). In the hyperbolic case, geodesics between boundary points scale logarithmically in boundary distance, so the falloff is still exponential in bulk distance, as we would expect in AdS/CFT \cite{Balasubramanian:1999zv}.

\begin{figure}[htb]
\hspace{-1cm}

\includegraphics[width=0.23\textwidth]{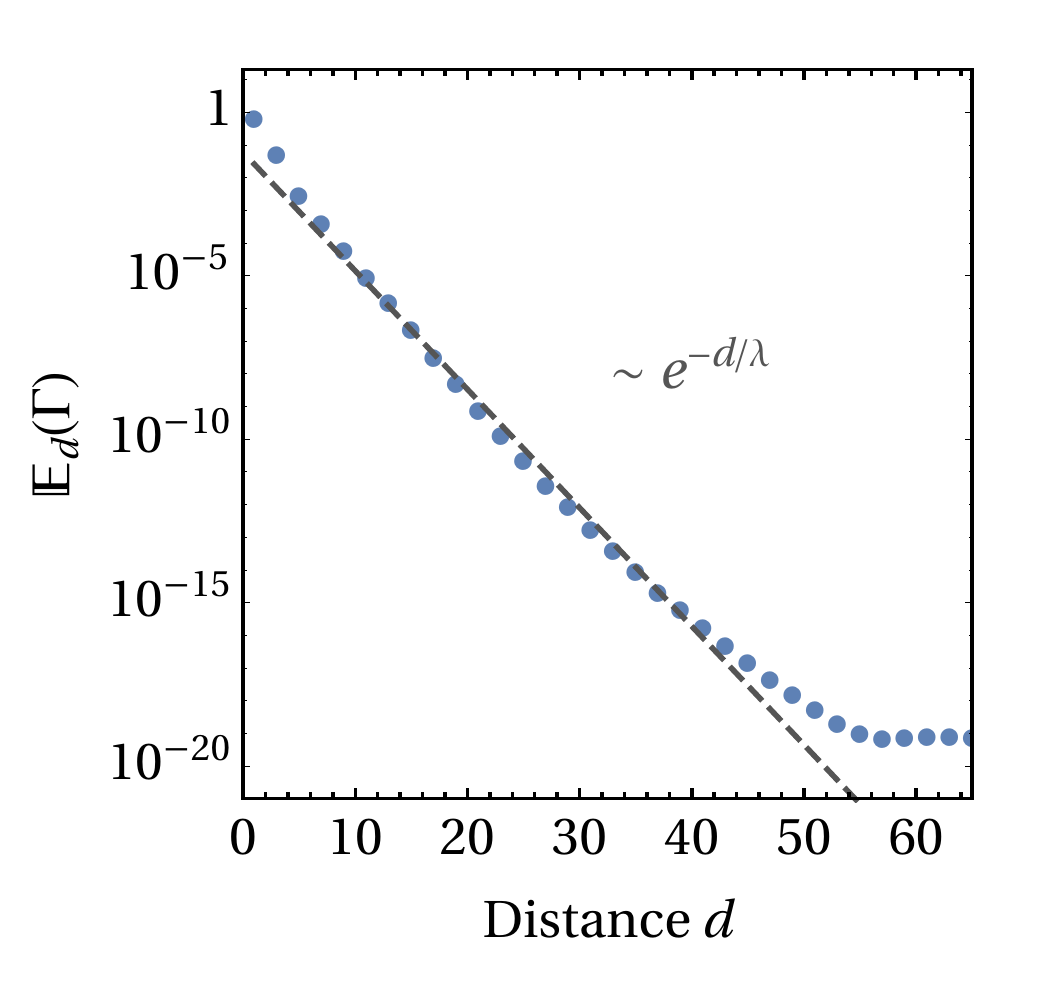}
\includegraphics[width=0.23\textwidth]{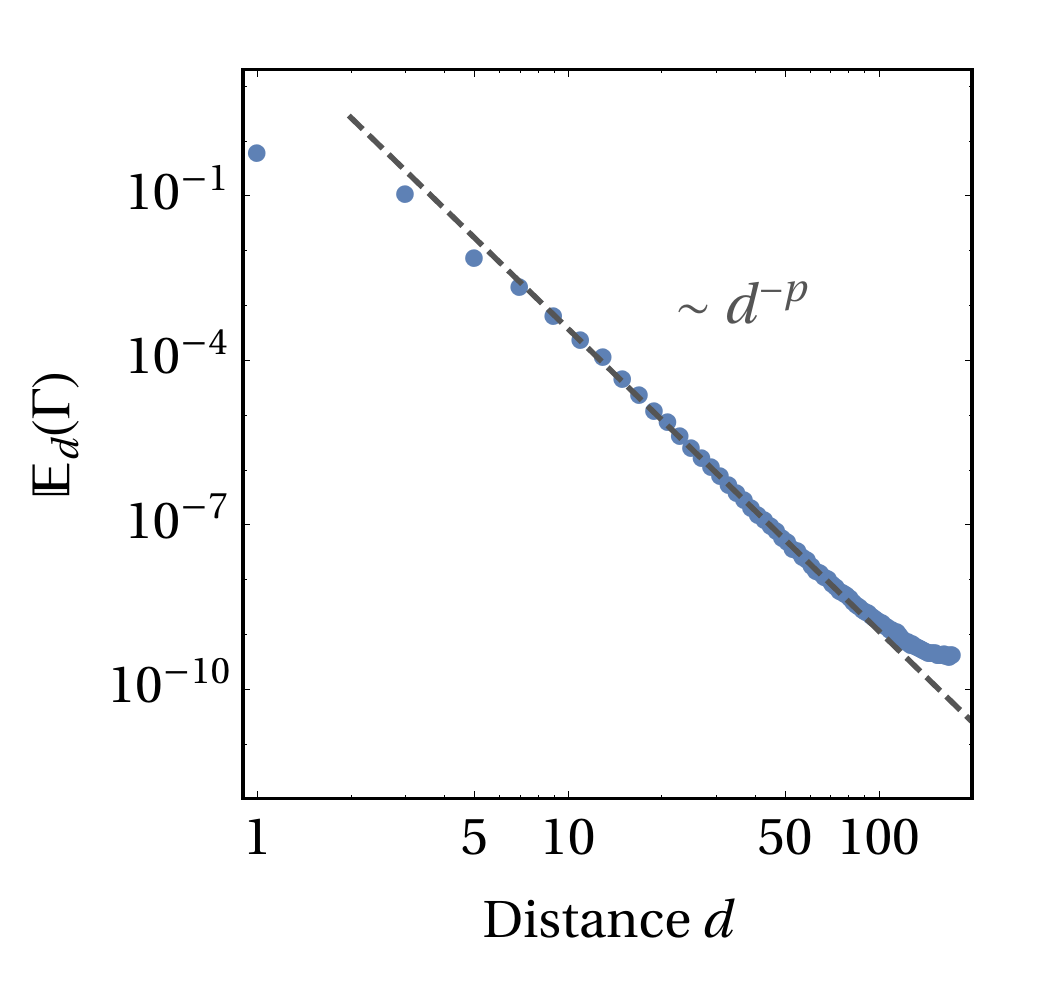}

\includegraphics[width=0.23\textwidth]{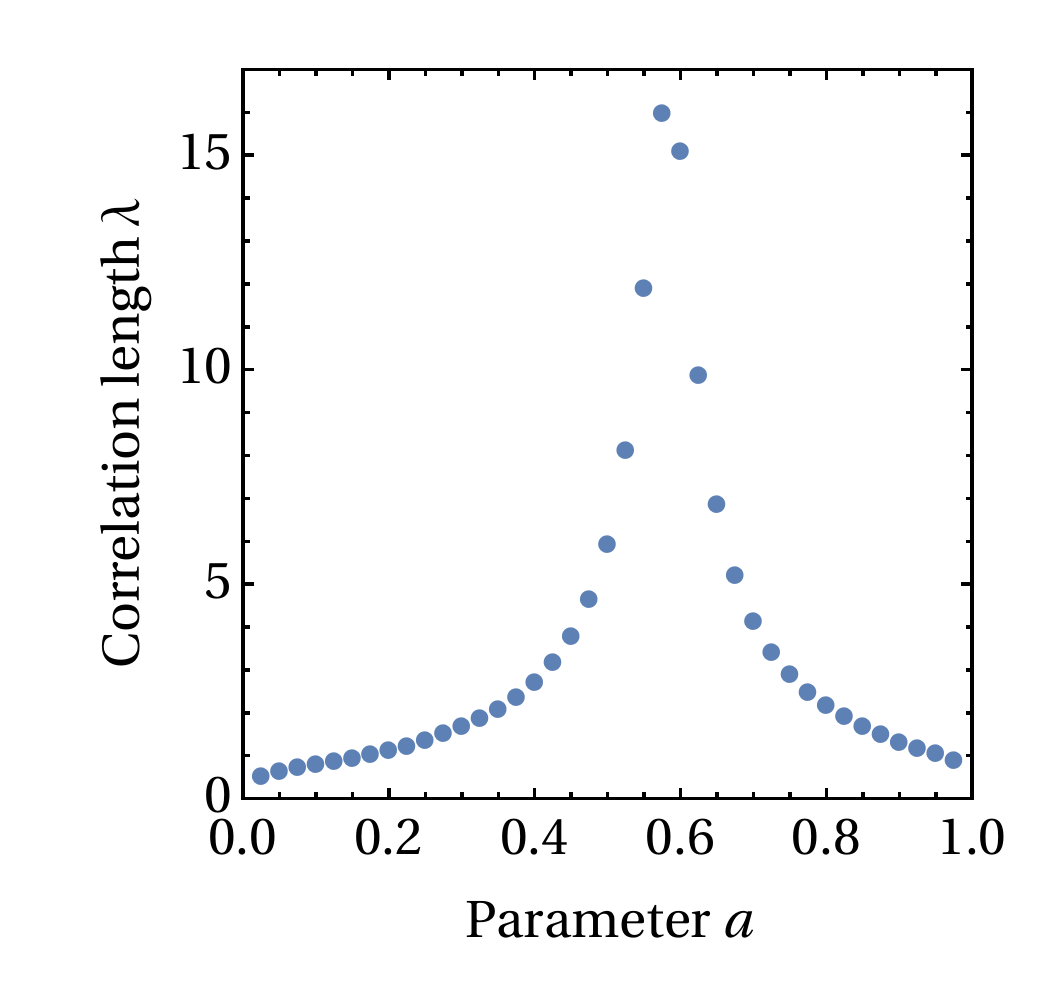}
\includegraphics[width=0.23\textwidth]{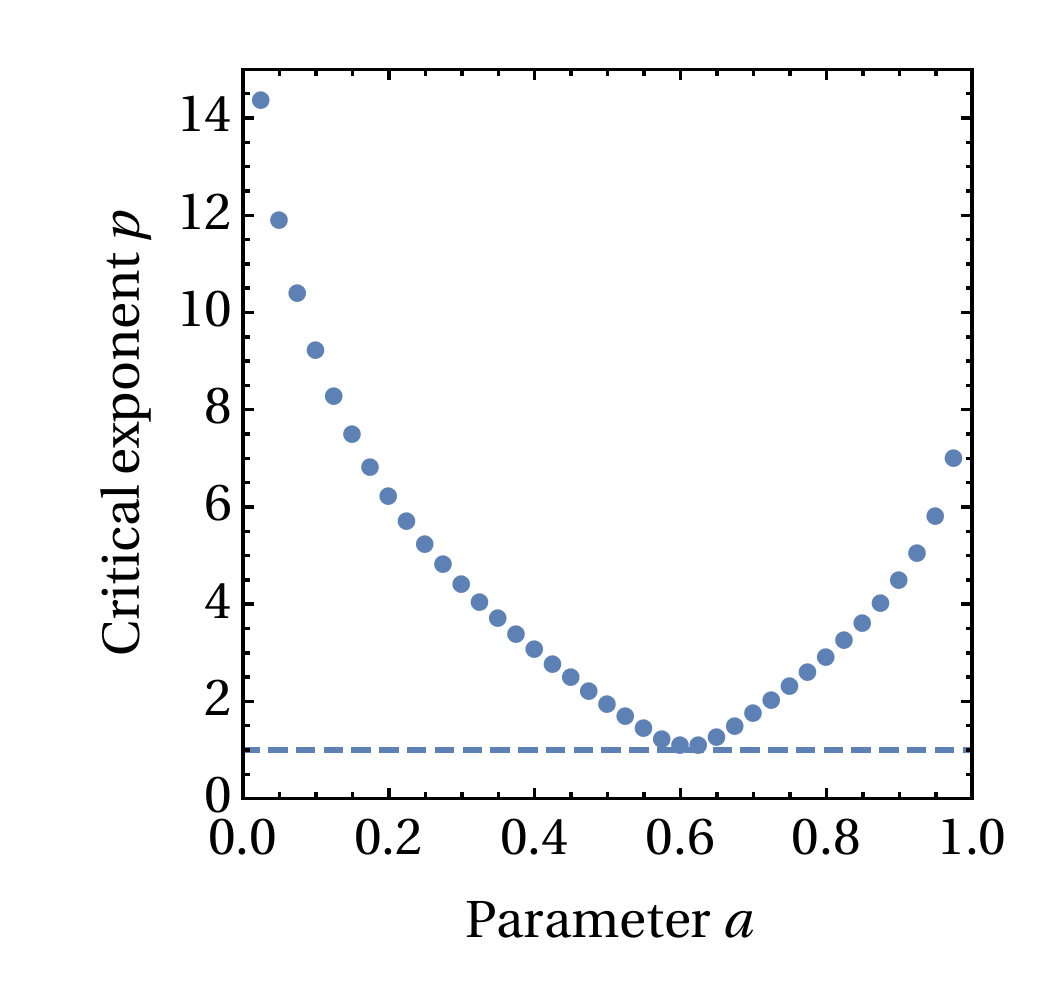}

\caption{\textsc{Top}: Mean value of Majorana covariance $\mathbb{E}_d(\Gamma) \seq  $ $
\sum_{k=1}^L |\Gamma_{k,k+d}|/L$ (with $\Gamma_{i,L+j} \seq  \Gamma_{i,j}$) at boundary distance $d$. For $\lbrace 3,6 \rbrace$ tiling with 150 boundary Majorana fermions (left) and $\lbrace 3,7 \rbrace$ tiling with 348 (right). $a \seq 0.25$ in both cases.
\textsc{Bottom}: Dependence of correlation falloff on $a$, for $\lbrace 3,6 \rbrace$ tiling with falloff $\sim e^{-d/\lambda}$ (left) and $\lbrace 3,7 \rbrace$ tiling with $\propto d^{-p}$ (right). 
}
\label{FIG_TRI_CFALLOFF}
\end{figure}

\begin{figure}[htb]
\hspace{-1cm}

\includegraphics[width=0.23\textwidth]{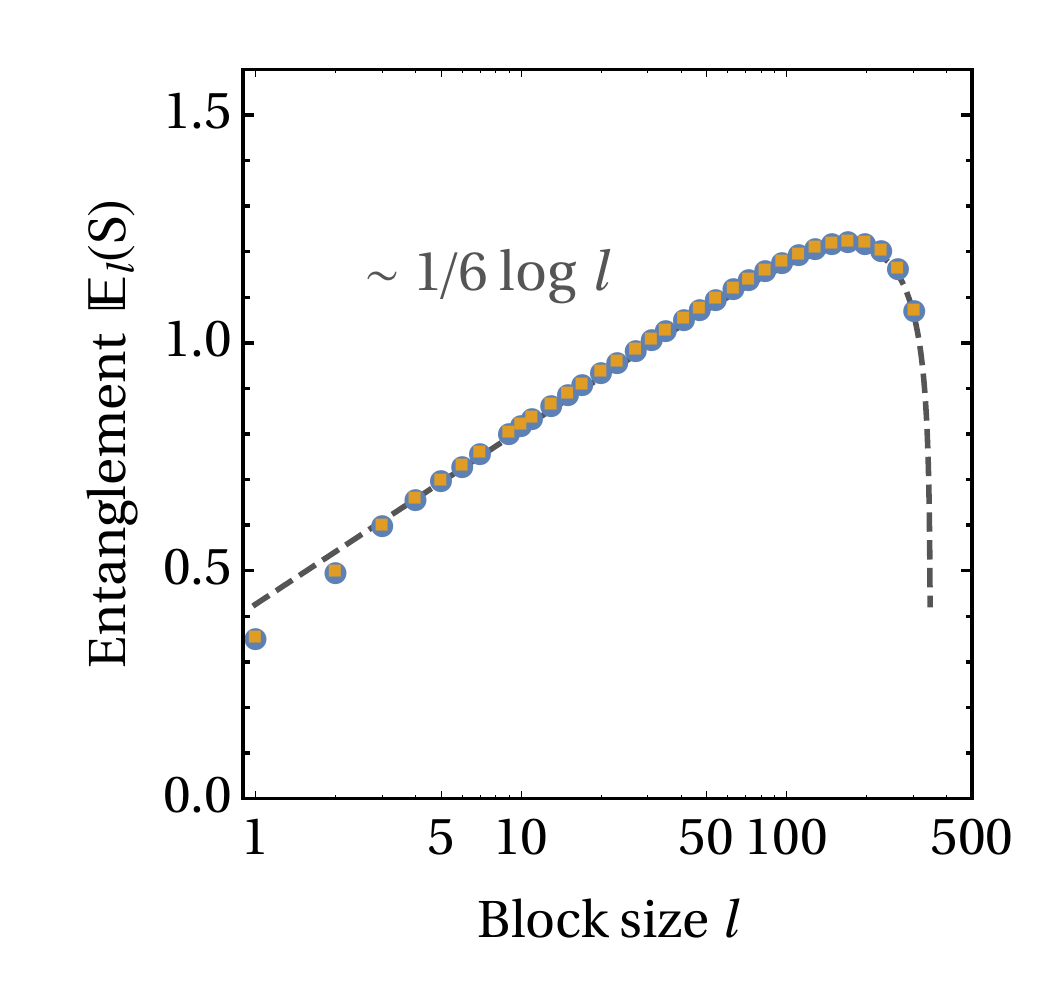}
\includegraphics[width=0.23\textwidth]{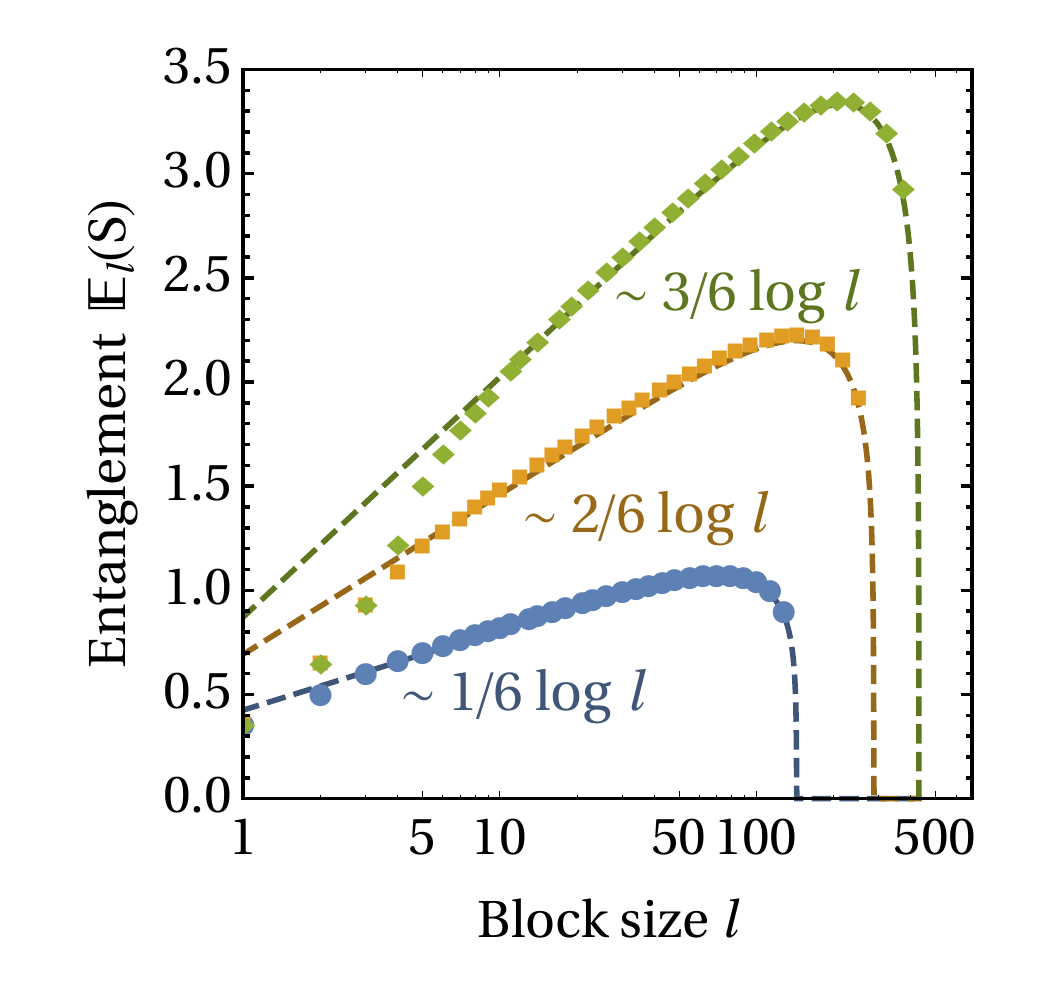}
\caption{Average entanglement entropy $\mathbb{E}_\ell(S)$ of a boundary subsystem with size $\ell$, numerical data and exact CFT solutions \eqref{EQ_CALABRESE_CARDY} (dashed lines). \textsc{Left:} Results for $\{3,6\}$ tiling (blue) at $a=0.580$ and $\{3,7\}$ tiling (yellow) at $a=0.609$ (for 348 Majorana sites each). \textsc{Right:} $\{3,8\}$ tiling with bond dimension $\chi=2,4,8$ on 144, 288, and 432 sites, respectively.
}
\label{FIG_REG_EE}
\end{figure}

Restricting ourselves to the $0 \sleq a \sleq 1$ region, we explore how quickly correlations decay in both settings. At $a \seq 0$ and $a \seq 1$, the boundary Majorana fermions only have neighboring pair correlations, either pairing within each edge ($a \seq 0$) or across the corners ($a \seq 1$). Thus correlation decay becomes infinite in the limits $a \sto 0 $ and $a \sto 1$, independent of bulk geometry.
We use numerical fits to study the remaining region $0 \sle a \sle 1$ (see Fig.\ \ref{FIG_TRI_CFALLOFF}, bottom row). For a hyperbolic bulk geometry the power law is generic with the slowest decay at $a \sapprox 0.61$, where we see a $\propto d^{-1}$ falloff over distance $d$. The exponential decay $\propto e^{-d/\lambda}$ generally produced by a flat bulk geometry, however, slows down to a power law (with correlation length $\lambda$ diverging) around $a \sapprox 0.58$, where correlations again decay as $\propto d^{-1}$. At their critical values, the boundary states of both bulk geometries have the same average properties.

\begin{table}[htb]
\begin{tabular}{c | c | c | c | c | c}
Parameter &  Exact & $\lbrace 3,6 \rbrace$ bulk & $\lbrace 3,7 \rbrace$ bulk & mMERA & Wavelets \\
\hline
$\epsilon_0$               & $-0.6366$      & $-0.6139$      & $-0.5617$      & $-0.6365$      & $-0.6211$\\
$c$	                       & $\msp 0.5000$  & $\msp 0.5006$  & $\msp 0.5018$  & $\msp 0.4958$  & $\msp 0.4957$ \\
$\Delta_\psi,\Delta_{\bar{\psi}}$    & $\msp 0.5000 $ & $\msp 0.4948$  & $\msp 0.4951$  & $\msp 0.5023$  & $\msp 0.5000$ \\
$\Delta_\epsilon$               & $\msp 1.0000 $ & $\msp 0.9856$  & $\msp 1.0121$  & $\msp 1.0027$  & $\msp 1.0000$ \\
$\Delta_\sigma$                 & $\msp 0.1250 $ & $\msp 0.1403$  & $\msp 0.1368$  & $\msp 0.1417$  & $\msp 0.1402$ \\
$C_{\sigma,\sigma,\epsilon}$ & $\msp 0.5000 $ & $\msp 0.5470$  & $\msp 0.5336$  & $\msp 0.5156$  & $\msp 0.4584$
\end{tabular}
\caption{Table of \emph{conformal data} for the regular $\lbrace 3,6 \rbrace$ and $\lbrace 3,7 \rbrace$ bulk tilings as well as the mMERA, compared to the exact results and the wavelet MERA \cite{PhysRevLett.116.140403}. Listed are the ground-state energy density $\epsilon_0$, central charge $c$, scaling dimensions $\Delta_\phi$ of the fields $\phi=\psi,\bar{\psi},\epsilon,\sigma$, and the structure constant $C_{\sigma,\sigma,\epsilon}$. The non-scaling of the identity $\id$ is discussed in Appendix \ref{ssec:app_conf_data}.}
\label{TBL_CFT_DATA}
\end{table}

Up to finite-size effects, this critical boundary theory turns out to be the \emph{Ising CFT}, as we confirm by computing a range of critical properties from the covariance matrix, shown in Table~\ref{TBL_CFT_DATA}. The entanglement entropy scaling, shown in Fig.\ \ref{FIG_REG_EE} (left), again matches the expected form \eqref{EQ_CALABRESE_CARDY} irrespective of the choice of tiling.
The Ising CFT state that we observe at the critical value of $a$ is the ground state of the Hamiltonian
\begin{equation}
\label{EQ_ISING_H}
H = \i \left( \sum_{i=1}^{N-1} \m_i \m_{i+1} + \m_1 \m_N \right) \text{ ,}
\end{equation}
where the sign of the boundary term $\m_1 \m_N$ signifies anti-periodic boundary conditions. Triangular tilings also incorporate more generic models: By associating each edge with a bond dimension $\chi \sgr 2$, it is possible to produce boundary theories with central charges in multiples of $1/2$, as laid out in Appendix \ref{ssec:app_higher_c} and shown in Fig.\ \ref{FIG_REG_EE} (right). Furthermore, by changing the tensor content in a central region of the network, a mass gap can be introduced, highlighting how radii in a hyperbolic bulk correspond to a renormalization scale on the boundary.
Details are provided in Appendix \ref{ssec:app_ir_cut}.

\paragraph*{Translation invariance and MERA.}

The regular bulk tilings considered so far possess a set of discrete symmetries. When choosing identical tensors on each polygon, the boundary states neccessarily inherit these symmetries, breaking translation invariance. To recover it, we consider a tiling with the same geometry as the MERA network. As we restrict ourselves to real generating matrices for the 3- and 4-leg matchgate tensors in this geometry, our model is not a unitary circuit but a model of Euclidean entanglement renormalization resembling imaginary time evolution, extending ideas from Refs.\ \cite{Evenbly2015TNR1,Evenbly2015TNR2}. This may provide a more realistic representation of the causal structure of an AdS time-slice than the standard MERA. Accordingly, the tensors of our \textsl{matchgate MERA} (mMERA) do not correspond to the usual (norm-preserving) isometries and disentanglers. Remarkably, we can still produce 
almost perfectly translation-invariant boundary states (Fig.~\ref{FIG_COVFEFE_COMPARISON}) while optimizing over only three parameters and recover the expected CFT properties (Table~\ref{TBL_CFT_DATA}). In particular, at bond dimension $\chi \seq 2$ the ground-state energy has a relative error of only $0.02\%$ compared to the exact solution. Note that the optimization process only takes a few minutes on a desktop computer for a network with hundreds of tensors. 
We also find that the $\chi \seq 2$ mMERA possesses a symmetry that allows us to write its 4-leg tensors as contractions of simpler 3-leg tensors (see Fig.\ \ref{FIG_MERA}), yielding a non-regular triangular tiling.  Whether alternating or quasi-periodic
tilings with a larger parameter space than regular tilings can also produce translation-invariant states is an interesting open question.
\begin{figure}
\centering
\includegraphics[height=0.14\textheight]{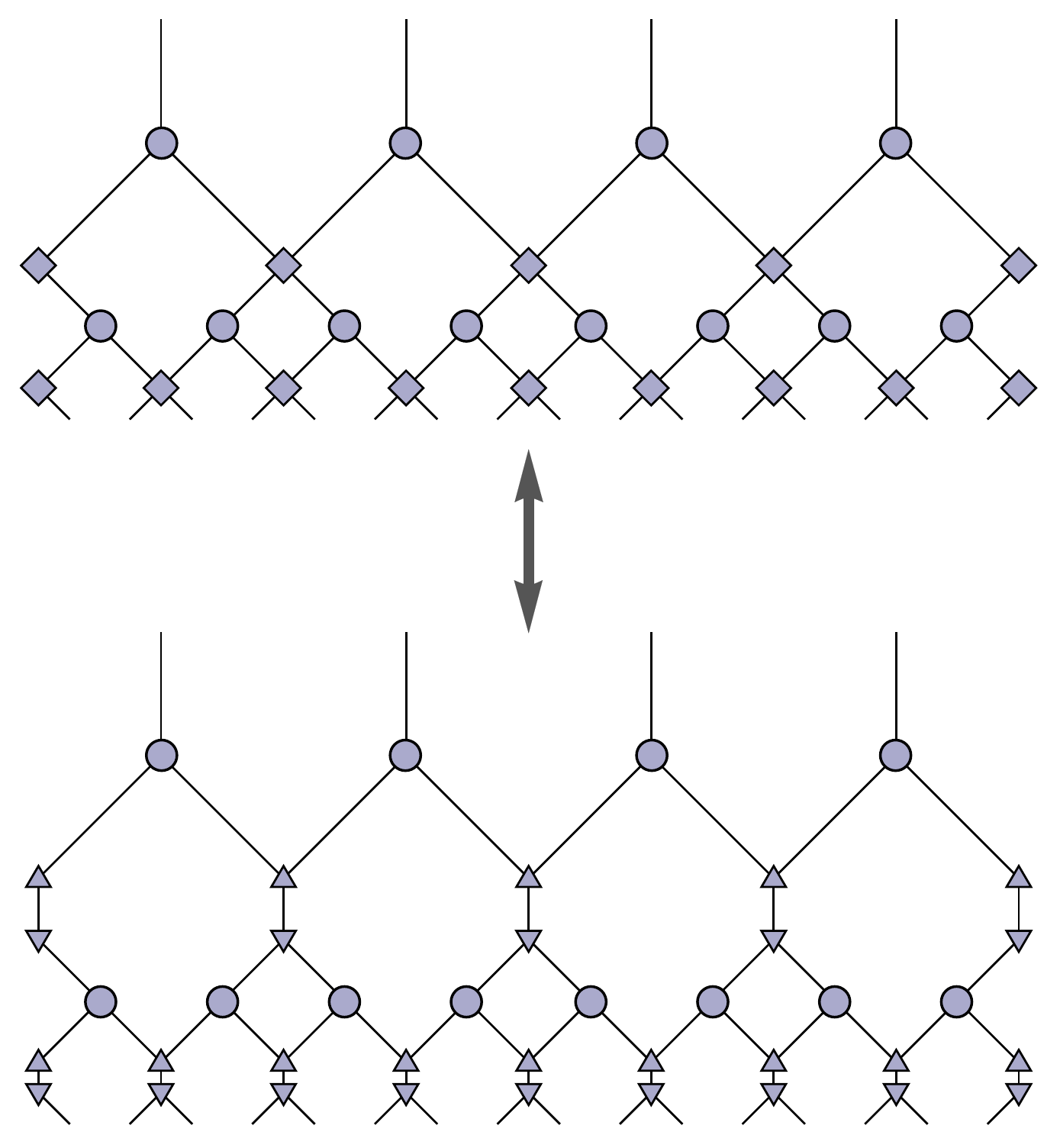}
\hspace{0.05\textwidth}
\includegraphics[height=0.14\textheight]{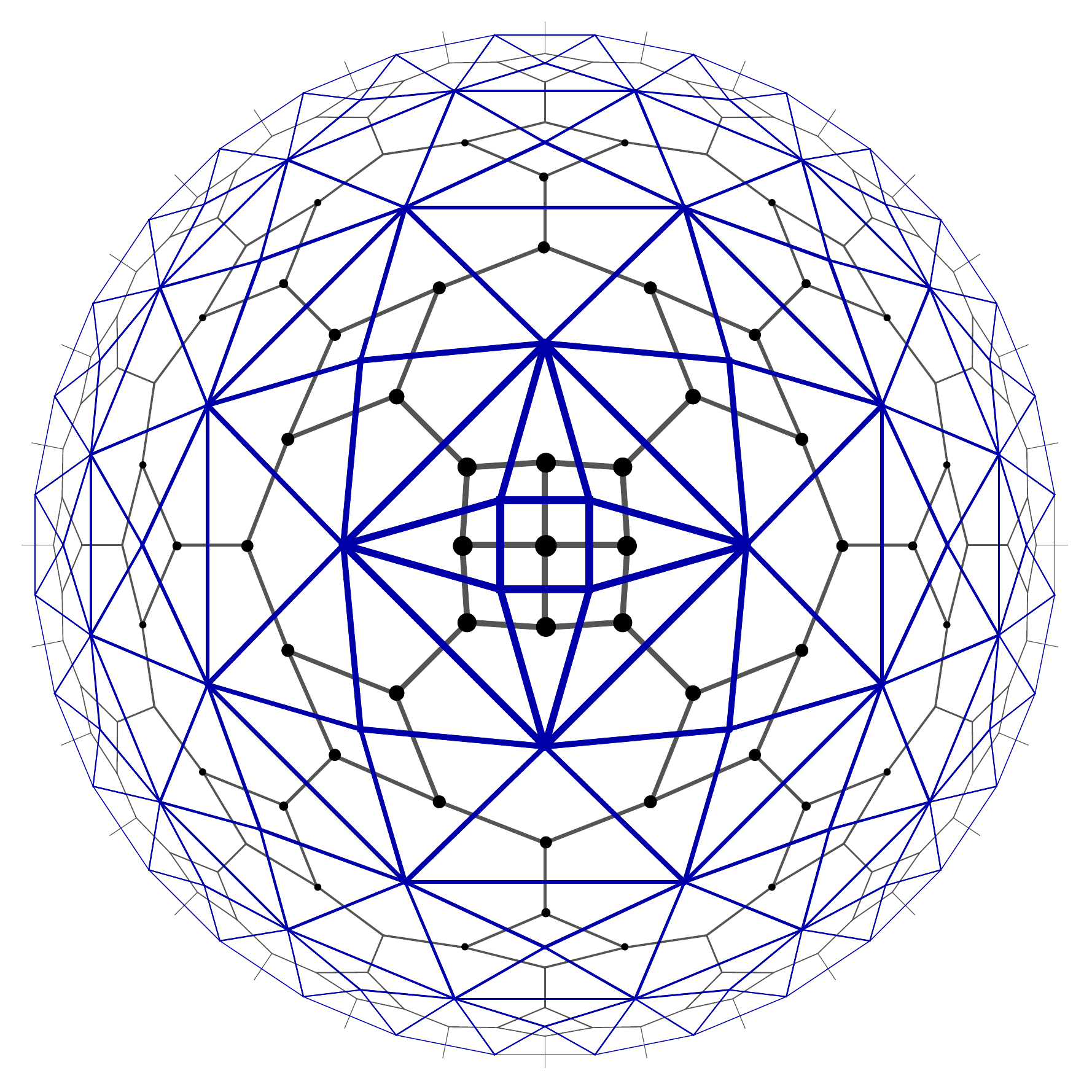}

\caption{\textsc{Left:} Mapping between standard MERA tensor network and a network of 3-leg tensors. \textsc{Right:} Mapping of the corresponding mMERA network onto a triangular tiling in the Poincar\'e disk.}
\label{FIG_MERA}
\end{figure}

\begin{figure}
\centering
\vspace{11pt}

\includegraphics[height=0.118\textheight]{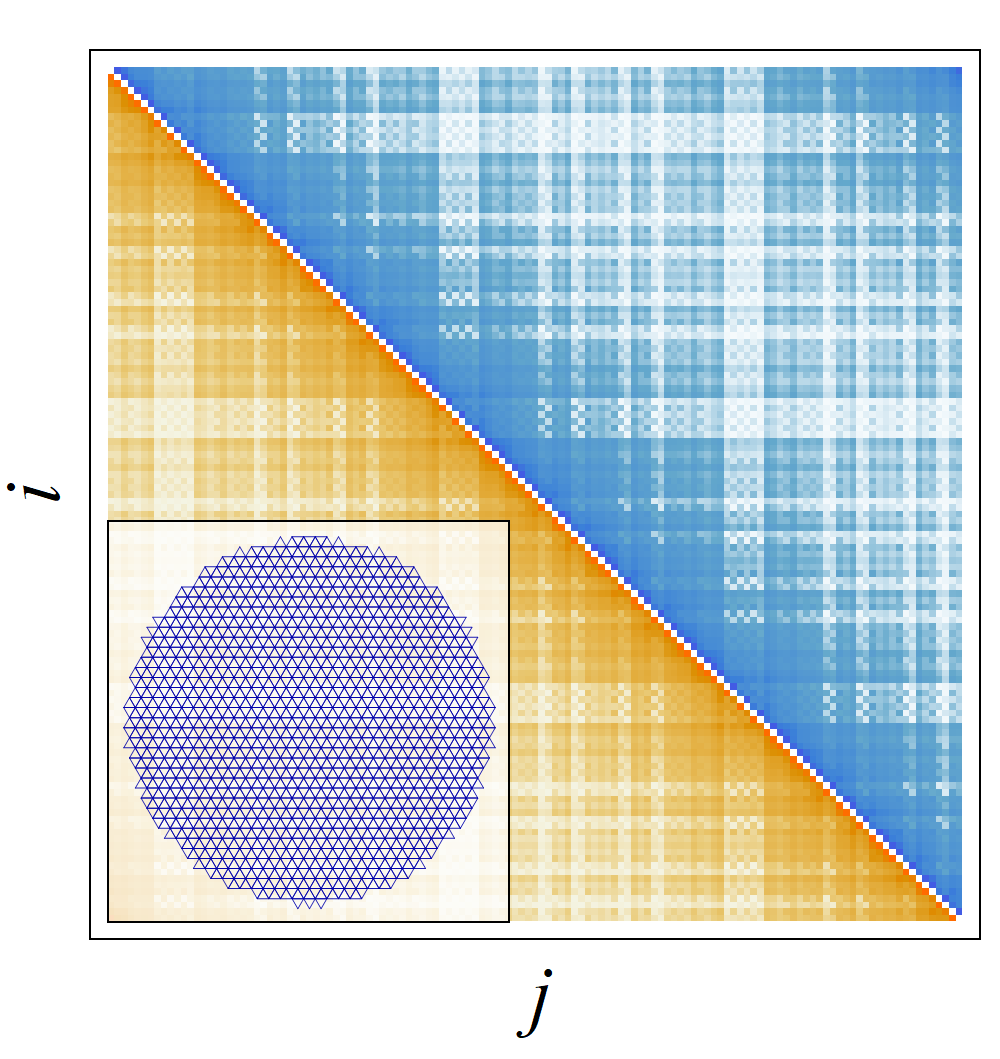}
\includegraphics[height=0.118\textheight]{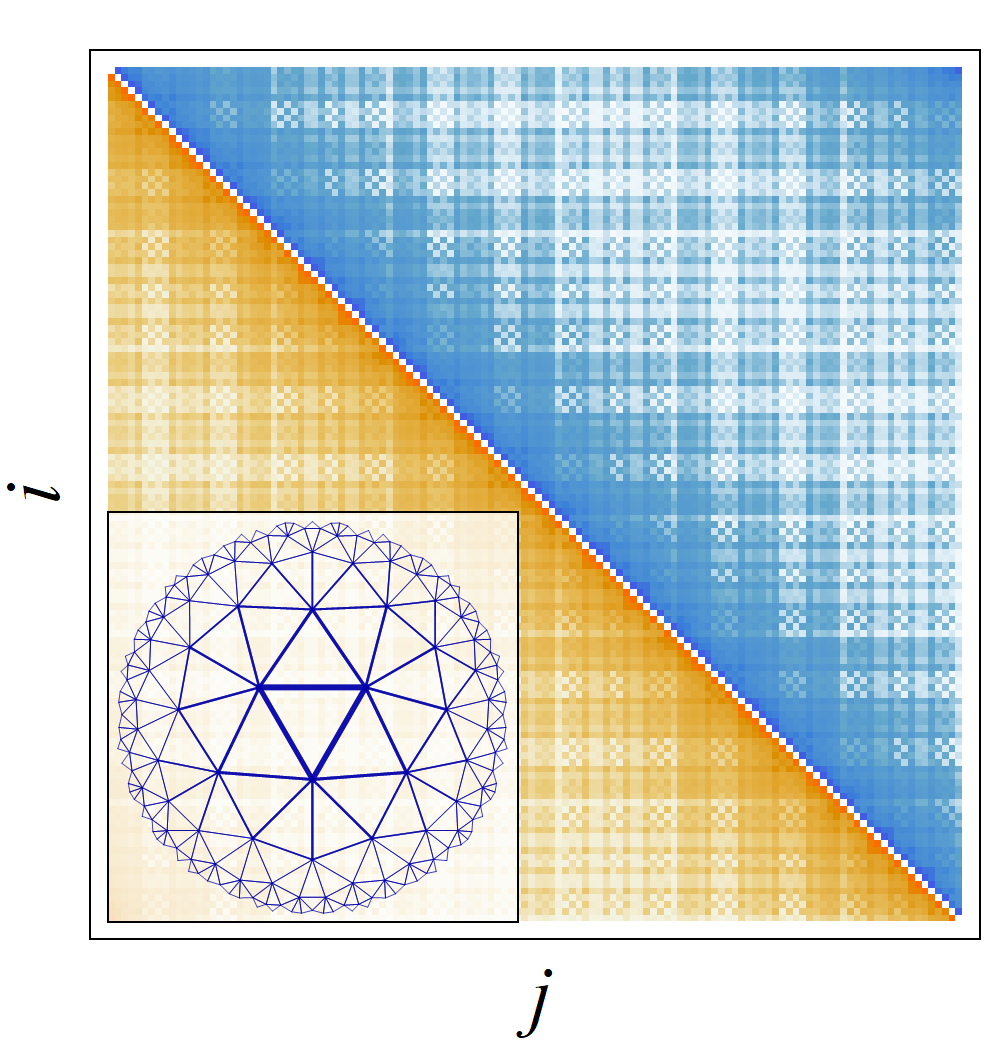}
\includegraphics[height=0.118\textheight]{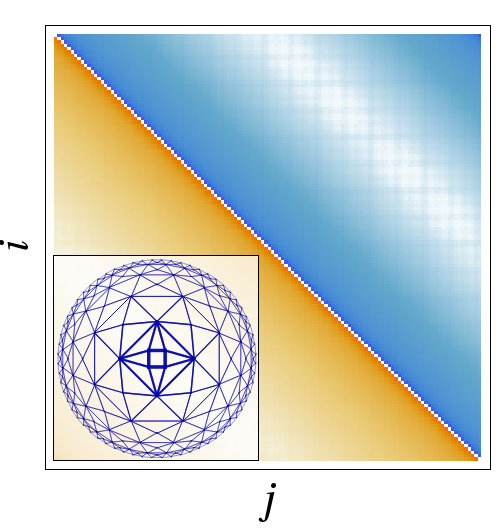}

\caption{Color-coded entries of the field correlation matrix $\langle \psi_i \psi_j \rangle \seq (\Gamma_{2i,2j-1} + \Gamma_{2i-1,2j})/4$ for boundary states of the $\lbrace 3,6 \rbrace$, 
$\lbrace 3,7 \rbrace$ and mMERA tiling at criticality (from left to right) with 129, 126 and 128 boundary sites, respectively. The corresponding tiling is shown in the lower left corner.
}
\label{FIG_COVFEFE_COMPARISON}
\end{figure}

\paragraph*{Discussion}

In this work, we have studied bulk-boundary correspondences in fermionic Gaussian tensor networks, introducing a 
versatile framework and a highly efficient contraction method based on matchgate tensors \cite{Valiant2002, Bravyi2008} for a wide class of flat and hyperbolic bulk tilings.
We showed that our framework includes the holographic pentagon code built from $5$-qubit stabilizer states for fixed bulk inputs. Its boundary states correspond to a non-local bulk pairing of Majorana fermions, opening an avenue to studying the state properties of this holographic model at large sizes. We explicitly computed 2-point correlators and entanglement entropies, which were found to exhibit critical scaling.
Beyond known models, we showed that critical and gapped Gaussian boundary states can be realized by various bulk tilings. In particular, the average scaling properties of the $c \seq 1/2$ Ising CFT (and multiples thereof) can be reproduced using regular one-parameter bulk triangulations with both flat and hyperbolic curvature. This is particularly unexpected for the flat case where boundary theories are typically gapped, and raises the question whether this appearance of criticality is retained in strongly interacting models, as well. Intriguingly, the appearance of equivalent boundary CFT states for flat and hyperbolic bulks resembles the effect of local Weyl transformations in Euclidean path-integrals \cite{Caputa:2017urj}.
Furthermore, we constructed the \textsl{mMERA}, a Euclidean matchgate tensor network based on the MERA geometry. Beyond the results achievable with regular triangulations, this tiling - which can also be expressed as a triangulation - recovers the Ising CFT with translation invariance, while requiring only three free parameters and little computational cost. 
Within the Gaussian setting, further studies could focus on positively curved bulks, higher-dimensional models and random tensors. Beyond Gaussianity, our framework is extendable to interacting bulk-boundary models
which can be explored with interacting fermionic tensor networks \cite{PhysRevA.80.042333,CorbozPEPSFermions, PhysRevA.81.052338,PhysRevB.95.245127,PhysRevB.95.075108}, yet avoiding the prohibitive computational effort of general methods for exact tensor contraction.

\paragraph*{Acknowledgements.} We would like to thank Aleksander Kubica for pointing out that $[[5,1,3]]$ code states 
\cite{Laflamme1996} are in fact ground states of quadratic Majorana Hamiltonians. We also thank Xiao-Liang Qi, Tadashi Takayanagi, and Pawel Caputa for helpful discussions. We thank the ERC (TAQ), the 
DFG (CRC 183, EI 519/7-1, EI 519/14-1), the Templeton Foundation, the EC (PASQuanS), 
the Studienstiftung, and the A.v.-Humboldt Foundation for support.

\newpage
\begin{widetext}
\begin{appendix}
\section*{Appendix}
This appendix contains additional details, technical calculations and proofs of the assertions made in the main text.
We begin by showing how to perform tensor contraction using Grassmann integration in Section\ \ref{sec:app_int}, followed by a minimal example of contracting two matchgate tensors.
In Section\ \ref{sec:app_defs}, we restructure the definitions made in Ref.\ \cite{Bravyi2008} in order to bring the theory of matchgates closer to the free fermionic formalism.
In particular, we prove the correspondence between matchgate tensors and fermionic Gaussian states.
In Section\ \ref{sec:app_conversion}, we show how to convert a generating matrix of a matchgate tensor to the covariance matrix of the corresponding state, yielding the physically relevant correlations.
In Section\ \ref{sec:app_contr}, we provide technical details and calculations for the contraction rules in the Grassmann formalism used in the numerical implementation.
In Section\ \ref{sec:app_explicit}, we give explicit examples of generating matrices relevant to the main text, show the extracted critical data from the respective covariance matrices, and present a construction of states with higher central charges.

\section{Tensor contractions in the Grassmann formalism}
\label{sec:app_int}
In this section, we review the approach to contraction of matchgate tensor networks through Grassmann integration \cite{Bravyi2008}.
In particular we present a simplified version of Lemma 5 from Ref.\ \cite{Bravyi2008} and explain this result through an example.
Grassmann variables will be denoted by $\theta$ and are a set of anti-commuting generators of an algebra
$
(  \theta_i \theta_j = - \theta_j \theta_i )
$
which nevertheless commutes with ordinary scalars $x$
$
(x \theta_i =  \theta_i x).
$
A general element  in this algebra may be written as
\begin{align}
 z=\sum _{k=0}^{n}\sum _{i_{1},\cdots ,i_{k}}c_{i_{1}\cdots i_{k}}\theta _{i_{1}}\cdots \theta _{i_{k}},
\end{align}
where $c_{i_{1}\cdots i_{k}}$ can be arbitrary complex coefficients and $i_k$ form an increasing sequence in $\{1,2,\ldots,n\}$.
In particular, given a tensor $T: \BC r \rightarrow \mathbb C$ we associate to it a polynomial in Grassmann numbers given by 
\begin{align}
\Phi_T(\theta) = \sum_{j\in \BC r} T(j) \theta_1^{j_1}\theta_2^{j_2}\ldots\theta_r^{j_r} \text{ ,}
\end{align}
which we call its characteristic function.

For simplicity, we consider contracting the last index of a rank-$r_1$ tensor $T_1$ with the first index of a rank-$r_2$ tensor $T_2$ where $r_1,r_2 \ge 1$.
Let us denote $r_1'=r_1-1$ and $r_2'=r_2-1$
This operation gives rise to a rank-$(r_1'+r_2')$ tensor $T_{1\star 2}$ with entries
  \begin{align}
  T_{1\star 2}(x,y)=\sum_{z \in \{0,1\}} T_1(x,z)T_2(z,y)
  \end{align}
for $x\in \BC {r_1'}$ and $y\in\BC{r_2'}$ being binary words.
The characteristic function for the contraction of two tensors can be obtained by
  \begin{align}
    \Phi_{T_{1\star 2}}(\tilde\theta,\tilde\eta) =\int d\eta_1\int d\theta_{r_1} \; \Phi_{T_1}(\theta) \Phi_{T_2}(\eta)\exp( \theta_{r_1} \eta_{1} )
  \end{align}
  where $\tilde \theta=(\theta_1,\ldots,\theta_{r_1'})$, $\tilde\eta=(\eta_1,\ldots,\eta_{r_2'})$ correspond to uncontracted indices and $\theta_{r_1}$ and $\eta_{1}$ are the two Grassmann numbers of the two indices that are being contracted.
Let us use $\exp(\theta_{r_1}\eta_1)=1+\theta_{r_1}\eta_1$ on the right hand side
\begin{align}
  RHS=\sum_{\substack{x\in\BC {r_1'}\\y\in\BC{r_2'}}} \sum_{a,b\in\{0,1\}} T_1(x,a)T_2(b,y)\int d\eta_1 \int d \theta_{r_1}\ \theta_1^{x_1}\theta_2^{x_2}\ldots\theta_{r_1'}^{x_{r_1'}} \theta_{r_1}^{a}\eta_1^b\eta_2^{y_1}\ldots\eta_{r_2}^{y_{r_2'}}(1+\theta_{r_1}\eta_1)
\end{align}
and observe that the two integrals commute with the first $r_1-1$ of the $\theta$'s and exponential factor commutes with the $\eta$'s. This gives
\begin{align}
  RHS=\sum_{\substack{x\in\BC {r_1'}\\y\in\BC{r_2'}}} \sum_{a,b\in\{0,1\}} T_1(x,a)T_2(b,y) \theta_1^{x_1}\theta_2^{x_2}\ldots\theta_{r_1'}^{x_{r_1'}} \left[ \int d\eta_1 \int  d \theta_{r_1} \ \theta_{r_1}^{a}\eta_1^b(1+\theta_{r_1}\eta_1)\right]
\; \eta_2^{y_1}\ldots\eta_{r_2}^{y_{r_2'}}.
\end{align}
For the middle bracket, we obtain
\begin{align}
  \int d\eta_1 \int  d \theta_{r_1}\ \theta_{r_1}^{a}\eta_1^b(1+\theta_{r_1}\eta_1)=\delta_{a,b}
\end{align}
and therefore
\begin{align}
  RHS= \sum_{\substack{x\in\BC {r_1'}\\y\in\BC{r_2'}}} \left[ \sum_{z\in\{0,1\}} T_1(x,z)T_2(z,y)\right] \theta_1^{x_1}\theta_2^{x_2}\ldots\theta_{r_1'}^{x_{r_1'}}\eta_2^{y_1}\ldots\eta_{r_2}^{y_{r_2'}}.
  \end{align}
We see that this is exactly the characteristic function for the tensor contraction.
Note that it is important that we contract the last index with the first one.
Lemma 5 of Ref.\ \cite{Bravyi2008}  generalizes this calculation to an arbitrary number of indices that are being contracted in an appropriate order -- this essentially could be derived by iterating the formula that we derived for the case of self-contractions.

\subsection{Minimal example}

As an example, 
we want to show the contraction of two tensors $T_A, T_B$ with Gaussian characteristic functions of the form
\begin{align}
  \Phi_{T_A}(\theta)=\exp\biggl(\frac 12\sum_{j,k=1}^{L} A_{j,k} \theta_j \theta_k\biggr)\;,
  \quad\quad
  \Phi_{T_B}(\theta)=\exp\biggl(\frac 12\sum_{j,k=1}^{L} B_{j,k} \theta_j \theta_k\biggr)\;.
\end{align}
As we summarize in Fig.\ \ref{FIG_TRICONTRACTION}, we associate the Grassmann numbers $\theta_1,\theta_2,\theta_3$ to $A$ and $\theta_4,\theta_5,\theta_6$ to $B$, and fix $\theta_3$ and $\theta_4$ on the edges between the triangles to be integrated out, yielding a state with four edges whose correlation matrix $C$ shall be computed from $A$ and $B$ via Grassmann integration
\begin{equation}\label{eq:ContractToC}
\Phi_C(\Theta)= \int \d\theta_4 \d\theta_3\,  e^{\theta_3 \theta_4 + \frac{1}{2}\sum_{j,k=1}^6 (A\oplus B)_{j,k}\theta_j\theta_k}
\end{equation}
where $\Theta=( \theta_1, \theta_2, \theta_5, \theta_6 )^\top$ contains the four Grassmann numbers that remain after integration.
Taking $A, B$ as input, we find that after the integration $\Phi_{T_C}$ is again Gaussian (the technicalities of integration will be dealt with in Section\ \ref{sec:app_contr}) and the generating matrix is 
\begin{equation}
C = \left(
\begin{matrix}
0                & A_{1 , 2}         & A_{1 , 3} B_{4 , 5} & A_{1 , 3} B_{4 , 6} \\
-A_{1 , 2}         & 0               & A_{2,  3} B_{4 , 5} & A_{2,  3} B_{4 , 6} \\
-A_{1 , 3} B_{4 , 5} &-A_{2,  3} B_{4 , 5} & 0               & B_{5 , 6}         \\
-A_{1 , 3} B_{4 , 6} &-A_{2,  3} B_{4 , 6} &-B_{5 , 6}         & 0               \\
\end{matrix}
\right) \text{ .}
\end{equation}
We observe that the newly created entries in the upper right corner are in fact a dyadic product and this block is a lower-rank matrix.

\begin{figure}[htb]
\centering
\includegraphics[width=0.46\textwidth]{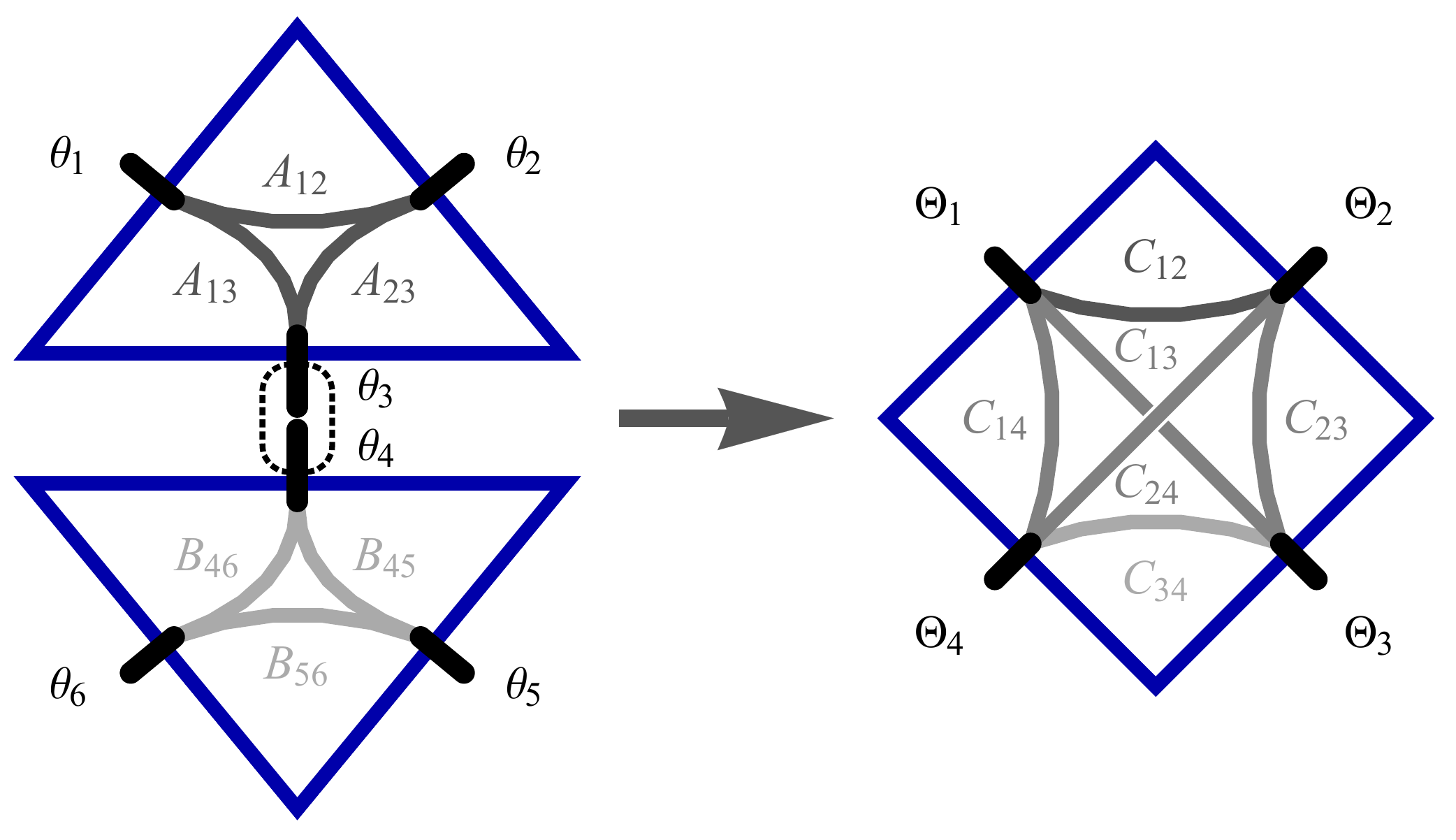}

\caption{Contraction of two triangle states with Grassmann-variable edges $\lbrace \theta_1,\theta_2,\theta_3 \rbrace$ and $\lbrace \theta_4,\theta_5,\theta_6 \rbrace$ into a state with four edges $\lbrace \Theta_1,\Theta_2,\Theta_3,\Theta_4 \rbrace$.}
\label{FIG_TRICONTRACTION}
\end{figure}

\section{Matchgates and fermionic Gaussian states}
\label{sec:app_defs}
We first discuss definitions of matchgate tensors and then explain the connection to fermionic Gaussian states.
\subsection{Definitions of matchgates}
For completeness, we shortly recapitulate the characterization of matchgates by Bravyi in Ref.\
\cite{Bravyi2008}.
Originally matchgates \cite{valiant2002quantum} were characterized as the local tensors of a tensor network that can be contracted efficiently through the \emph{Fisher-Kastelyn-Temperley method} \cite{Temperley1961, Kasteleyn1961}.
Subsequently, the following algebraic characterization has been found \cite{cai2007theory}.
\begin{definition}[Matchgate equations]
A rank-$r$ tensor $T$ is a matchgate if for all $x,y \in \BC r$ it holds that
\begin{align}
  \sum_{k:\ x_k\neq y_k} T(x\xor e^k)T(y\xor e^k)(-1)^{\sum_{j=1}^{k-1}(x_j+y_j)}=0
\end{align}
where $(e^k)_q=\delta_{k,q}$.
\end{definition}
Proposition 2 of Ref.\ \cite{Bravyi2008} shows that one can equivalently define matchgates through Pfaffians, leading to the following equivalent definition.
\begin{definition}[Matchgates as Pfaffians]
A rank-$r$ tensor $T$ is a matchgate if there exists a reference index $z \in \BC r$ and an anti-symmetric matrix $A\in \mathbb C^{r\times r}$, such that for any $x\in \BC r$
\begin{align}
T(x)=\Pf{A_{|x\xor z}}T(z).
\end{align}
In particular, $A$ is explicitly given by $A_{j,k}=T(e^j \xor e^k\xor z)/T(z)$ for $j<k$ if $T(z)\neq0$ or $A\equiv0$ if $T\equiv 0$ and we denote by $A_{|x\xor z}$ the restriction of $A$ to the entries indicated by $x\xor z$. 
\end{definition}
\emph{Proof of the equivalence of these definitions.}
In particular, proposition 2 of Ref.\ \cite{Bravyi2008} shows that whenever $T(z)\neq 0$ then a new matchgate tensor $T'$ fulfills $T'(x)=T(x \xor z)/T(z)=\Pf{A_{|x}}$ which is derived from the matchgate equations.
From this we have that $T'(x\xor z)=T(x\xor z\xor z)/T(z)=T(x)/T(z)=\Pf{A_{|x\xor z}}T(z)$ and therefore $T(x)=\Pf{A_{|x\xor z}}T(z)$.
Finally, for the trivial matchgate tensor $T =  0$ both definitions agree, too.

The converse direction can be shown by a simplification of the argument of Theorem 2 of Ref.\ \cite{Bravyi2008}. 
We start with $T(x)=\Pf{A_{|x\xor z}}T(z)$.
If $T(z)=0$ then $T$ fulfills the matchgate equations trivially.
Otherwise we consider $T'$ with entries $T'(x) = T(x\xor z)/T(z) = \Pf{A_{|x}}$ which has a Gaussian characteristic function 
\begin{align}
  \Phi_{T'}(\theta)=\exp\biggl(\frac 12\sum_{j,k=1}^{L} A_{j,k} \theta_j \theta_k\biggr)\;.
\end{align}
This is argued as follows. 
As this is a Gaussian characteristic function by the theory of Ref.\ \cite{bravyi2004lagrangian} the Lemma 1 in Ref.\ \cite{Bravyi2008} applies which shows that $T'$ fulfills matchgate equations. 
Finally, from Proposition 1 in op.\ cit., or by a shift of variables, we find that $T$ also satisfies these equations.
\qed

The following lemma shows that matchgates have Gaussian characteristic functions.
\begin{lemma}[Grassmann exponentials]
Let $A=-A^\top\in \mathbb C^{r\times r}$ for some positive integer $r$, then we have
\begin{equation}
\exp\biggl(\frac 12\sum_{j,k=1}^{L} A_{j,k} \theta_j \theta_k\biggr)=\sum_{x\in \BC r} \Pf{A_{|x}}\ \theta_1^{x_1}\theta_2^{x_2}\ldots\theta_r^{x_r}\;.
\label{eq:PhiPF}
\end{equation}
\end{lemma}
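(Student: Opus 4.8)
\begin{equation*}
\exp\biggl(\frac 12\sum_{j,k=1}^{r} A_{j,k} \theta_j \theta_k\biggr)=\sum_{x\in \BC r} \Pf{A_{|x}}\ \theta_1^{x_1}\theta_2^{x_2}\ldots\theta_r^{x_r}
\end{equation*}
by expanding the Grassmann exponential as a power series and matching the coefficient of each monomial $\theta_1^{x_1}\cdots\theta_r^{x_r}$ against the Pfaffian $\Pf{A_{|x}}$. The first thing I would observe is that the quadratic form $Q \coloneqq \frac12\sum_{j,k} A_{j,k}\theta_j\theta_k$ is an \emph{even} element of the Grassmann algebra, so it commutes with everything, and hence $\exp(Q)=\sum_{m\geq0} Q^m/m!$ is a genuine finite sum (it terminates because any product of more than $r$ distinct $\theta$'s vanishes). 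Because each $\theta_j$ squares to zero, the only surviving monomials are the square-free ones $\theta_1^{x_1}\cdots\theta_r^{x_r}$ indexed by $x\in\BC r$, and a monomial with $\sum_j x_j = 2m$ can only arise from the $m$-th power $Q^m/m!$. Thus the whole claim reduces, for each even-weight $x$, to a purely combinatorial coefficient identity.

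Next I would fix a subset $S=\supp(x)$ of even size $2m$ and compute the coefficient of $\theta_S \coloneqq \prod_{s\in S}\theta_s$ (taken in increasing order) in $Q^m/m!$. Expanding $Q^m = \bigl(\tfrac12\sum_{j,k}A_{j,k}\theta_j\theta_k\bigr)^m$ gives a sum over sequences of $m$ pairs; only those sequences whose pairs partition $S$ into two-element blocks contribute, since each index must appear exactly once. The $1/m!$ cancels the ordering of the $m$ pairs, the factor $2^{-m}$ together with the antisymmetry $A_{k,j}=-A_{j,k}$ handles the two orderings within each pair, and commuting the $\theta$'s into the canonical increasing order $\theta_S$ produces exactly the sign $\operatorname{sgn}$ of the associated permutation. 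Collecting these contributions gives precisely the combinatorial definition of the Pfaffian,
\begin{equation*}
\Pf{A_{|S}} = \frac{1}{2^m m!}\sum_{\pi\in \mathcal S_{2m}}\operatorname{sgn}(\pi)\prod_{i=1}^{m} A_{\pi(2i-1),\pi(2i)},
\end{equation*}
restricted to the rows and columns indexed by $S$, which is by definition $\Pf{A_{|x}}$.

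The main obstacle, and the step I would be most careful about, is the bookkeeping of \emph{signs}. Two sign sources must be reconciled: the Grassmann reordering sign incurred when moving the $2m$ factors $\theta_{j_1}\theta_{k_1}\cdots\theta_{j_m}\theta_{k_m}$ into the canonical order $\theta_S$, and the permutation sign $\operatorname{sgn}(\pi)$ appearing in the Pfaffian. These should match on the nose, but verifying this requires tracking the anticommutation relation $\theta_a\theta_b=-\theta_b\theta_a$ through the sort. The cleanest way to close the argument, which I would adopt to avoid an error-prone direct count, is an induction on $r$ (equivalently on $m$): Laplace-expand $\Pf{A_{|S}}$ along its first row via the standard recursion $\Pf{A_{|S}}=\sum_{s\in S, s\neq s_0}(-1)^{?}A_{s_0,s}\Pf{A_{|S\setminus\{s_0,s\}}}$, and match it against the recursion obtained by factoring out the lowest-index variable $\theta_{s_0}$ from $\exp(Q)$. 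This reduces the sign verification to a single transposition at each step, where it is transparent, and the base cases $r=0,1$ are immediate since $\Pf$ of the empty matrix is $1$.
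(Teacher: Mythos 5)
Your proposal is correct and follows essentially the same route the paper indicates: the paper omits the proof but describes precisely this strategy of truncating the exponential series, regrouping terms with the same normal-ordered monomial, and identifying the reordering sign with the permutation sign in the combinatorial definition of the Pfaffian. Your added suggestion of closing the sign bookkeeping by induction via the Pfaffian's row-expansion recursion is a sensible refinement but not a different method.
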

We omit the proof which proceeds by using the definition of the exponential series which is truncated to first $\lfloor n/2 \rfloor$ powers of the quadratic form and then regrouping terms that have the same normal ordered Grassmann monomial. 
Keeping track of the sign in such reordering, yields the sign of the permutation and subsequently the Pfaffian can be identified.
From this it follows that an even matchgate with $z=\overline 0$ and covariance matrix $A$ has a Gaussian characteristic function 
\begin{equation}
\Phi_T(\theta)=T(\overline 0)\exp\biggl(\frac 12 \sum_{j,k=1}^{L} A_{j,k} \theta_j \theta_k\biggr)\;.
\end{equation}
For matchgates with $z\neq \overline 0$ we refer the reader to Theorem 2 of Ref.\ \cite{Bravyi2008} for a general form of the characteristic function.
Note that the set of creation operators generates a Grassmann algebra too because $\{\fd_j, \fd_k\}=\fd_j \fd_k + \fd_k \fd_j=0$.
This means that if we calculate $\exp((1/2)\sum_{j,k=1}^{L} A_{j,k} \fd_j \fd_k )$ from 
the definition and simplify all terms, we will make the same reordering as in the Grassmann number case, picking up the same sign differences.
Hence, \eqref{eq:PhiPF} is valid if we replace Grassmann numbers by the creation operators, which gives a physical interpretation to the characteristic function as normal-ordered operators.

\subsection{Fermionic Gaussian states}
Our statements will concern even Gaussian state vectors of the form $\ket{\psi_G} = U_G\vacket$.
This means that we fix the reference state vector $\vacket$ to be the vacuum and use a Gaussian unitary that has an even and quadratic generator 
\begin{equation}
H(G)=\frac{\i\,}{4} \sum_{j,k=1}^{2L} G_{j,k} \m_j \m_k
\end{equation}
where $G=-G^\top \in \mathbb R^{2L\times 2K}$ and the $\{\m_j\}$ are the self-adjoint Majorana operators which satisfy the Clifford relations $\{\m_j,\m_k\}=2\delta_{j,k}\id$.
Alternatively, they can be defined via Jordan-Wigner transformation or by relating them to the canonically anticommuting creation-annihilation operators through $\m_{2j-1}=\fe_j+\fd_j$ and $\m_{2j}   =-\i  (\fe_j-\fd_j)$.
As we are using even generators, the resulting state will conform to the fermionic parity superselection rule.
Note that as reference state one could in principle use a Fock-basis state with an odd particle content and again use a Gaussian unitary, however either choice results in the same physics and can be related by an appropriate particle-hole redefinition.

The Bloch-Messiah reduction \cite{bloch1962canonical, kraus2009quantum} employs a unitary transformation $U_1$ of the single particle basis to decouple a given Gaussian state vector as 
\begin{align}
\ket {\psi_G}=U_1\prod_{k=1}^{L/2}(v_{k}+u_{k}\fd_{2k-1}\fd_{2k})\vacket 
\end{align}
where the coefficients $v, u$ depend on $G$.
A particle number preserving (PNP) Gaussian transformation is a unitary of the form $U_1=e^{-\i H_1}$ where $H_1=\sum_{j,k=1}^L h_{j,k} \fd_j \fe_k$ with $h=h^\dagger \in \mathbb C^{L\times L}$.
For the above defined Gaussian states we have the following result \cite{bloch1962canonical, kraus2009quantum}.
\begin{lemma}[Bloch-Messiah reduction]
There exists a PNP transformation $U_1$ decoupling the state vector $\ket {\psi_G}$ in the following sense 
\begin{align}
\ket {\psi_G}=\prod_{k=1}^{\lfloor L/2\rfloor}(v_{k}\id+u_{k}p^\dagger_{2k-1}p^\dagger_{2k})\vacket.
\label{eq:BlochMessiah}
\end{align}
Here, $v,u\in \mathbb C^{\times \lfloor L/2\rfloor}$ and $p_j = U_1 \fe_j U_1^\dagger$.
\end{lemma}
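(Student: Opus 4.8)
The plan is to reduce the statement to a normal-form result for the Bogoliubov transformation implemented by $U_G$, so that the particle-number-preserving (PNP) transformation $U_1$ and the pair coefficients $v_k,u_k$ can simply be read off. First I would record how $U_G$ acts on the mode operators: since $H(G)$ is quadratic and even, conjugation by $U_G$ is linear, $U_G\fe_j U_G^\dagger=\sum_k(\alpha_{j,k}\fe_k+\beta_{j,k}\fd_k)$ for some $\alpha,\beta\in\mathbb C^{L\times L}$, and preservation of the canonical anticommutation relations forces $\alpha\alpha^\dagger+\beta\beta^\dagger=\id$ together with $\alpha\beta^\top$ antisymmetric; equivalently, the $2L\stimes 2L$ block matrix $\bigl(\begin{smallmatrix}\alpha & \beta\\ \bar\beta & \bar\alpha\end{smallmatrix}\bigr)$ is an element of the fermionic Bogoliubov group. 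This is the object whose structure controls $\ket{\psi_G}=U_G\vacket$.

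Second, I would invoke the Bloch--Messiah (singular-value-type) factorization of this Bogoliubov element. The claim is that it decomposes as $U_G=U_L\,U_{\mathrm{pair}}\,U_R$ at the level of Gaussian unitaries, where $U_L,U_R$ are PNP (their generators contain only $\fd_j\fe_k$ terms, so they do not mix creation with annihilation operators) and $U_{\mathrm{pair}}$ is a canonical pairing rotation acting within the fixed pairs $(2k-1,2k)$ through angles $\theta_k$. The $2\times 2$ block pattern is exactly what produces the product over $k=1,\dots,\lfloor L/2\rfloor$ in \eqref{eq:BlochMessiah}; for odd $L$ the single leftover mode is unpaired and passive, and is absorbed into $U_L,U_R$, which is why the floor appears.

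Third, I would translate the operator factorization back to the state. The rightmost factor satisfies $U_R\vacket=\vacket$ up to an irrelevant global phase, because a PNP generator annihilates the vacuum. The pairing factor acts on the vacuum as $U_{\mathrm{pair}}\vacket=\prod_{k=1}^{\lfloor L/2\rfloor}(\cos\theta_k\,\id+\sin\theta_k\,\fd_{2k-1}\fd_{2k})\vacket$, since each elementary pair rotation $\exp\!\big(\theta_k(\fd_{2k-1}\fd_{2k}-\fe_{2k}\fe_{2k-1})\big)$ builds a single BCS pair on the vacuum. Setting $U_1:=U_L$, $v_k:=\cos\theta_k$ and $u_k:=\sin\theta_k$, and conjugating the creation operators through $U_1$ via $p_j=U_1\fe_j U_1^\dagger$, then gives precisely \eqref{eq:BlochMessiah}.

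The main obstacle is establishing the factorization in the second step, i.e.\ that the off-diagonal pairing block can be brought to $2\times 2$ canonical form by a single unitary that is \emph{simultaneously} compatible with the diagonalization of the particle-number-conserving block. Concretely this rests on the fact that the relevant pairing matrix is antisymmetric, hence admits a Youla normal form with $2\times 2$ blocks under congruence by one unitary, and that this same unitary can be used to organize the remaining data into pairs; the compatibility of these two decompositions is the crux. I would also need to treat the degenerate pairs with $\cos\theta_k=0$ (fully occupied) or $\sin\theta_k=0$ (empty), which cause no trouble since the construction never divides by a singular value, and this is precisely the mechanism by which Bloch--Messiah covers all even Gaussian states rather than only those with $\braket{\emptyset}{\psi_G}\neq 0$.
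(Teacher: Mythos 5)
Your argument is correct, but it takes a genuinely different route from the paper. You factorize the Gaussian unitary itself, $U_G=U_L\,U_{\text{pair}}\,U_R$, by applying the Bloch--Messiah/singular-value decomposition to the Bogoliubov blocks $(\alpha,\beta)$, and then push this onto the vacuum using $U_R\vacket=\vacket$ and the explicit action of each elementary pair rotation. The paper instead works at the level of the \emph{state's} correlation data: it forms the hopping matrix $C_{j,k}=\langle\fd_j\fe_k\rangle$ and the pairing matrix $P_{j,k}=\langle\fe_j\fe_k\rangle$ and invokes a relation of Bogoliubov between $C$ and $P$ (a consequence of purity) to show that a single unitary of the one-particle space simultaneously diagonalizes the Hermitian $C$ and block-diagonalizes the antisymmetric $P$; that unitary is then lifted to the PNP transformation $U_1$ and the decoupled form is read off. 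The crux is the same in both versions---one unitary must be compatible with two normal forms---and you correctly locate it (compatibility of the SVD of $\alpha$ with the Youla form of the pairing block) just as the paper locates it in the $C$--$P$ relation; both proofs ultimately defer this linear-algebra step to the cited literature. What your route buys is that the state reduction is immediate once the operator factorization is granted, and the degenerate pairs with $v_k=0$ are covered without caveat, whereas the paper's subsequent passage to the exponential generating-matrix form must assume $v_k\neq 0$ and treat fully occupied pairs as a limit. What the paper's route buys is that it stays entirely within the covariance-matrix language used throughout the rest of the appendix, which is the form in which the result is actually consumed.
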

\begin{proof}
We recapitulate the idea of the proof that can be found in the Appendix A of Ref.\ \cite{kraus2009quantum} or in 
Ref.\ \cite{bloch1962canonical}.
The main idea is to show that there exists a unitary transformation that puts into the normal form both the coherent hopping correlations $C_{j,k}=\langle \fd_j \fe_k\rangle$ and pairing terms $P_{j,k}=\langle \fe_j \fe_k\rangle$.
A relation first derived by Bogoliubov between $C$ and $P$ shows that both matrices can be put into a normal form simultaneously by a unitary transformation $U \in\mathbb C^{L\times L}$ \cite{bloch1962canonical, kraus2009quantum}.
For $C$ due to hermiticity we seek a diagonal form, while for $P$ a block diagonal form.
The transformation $U$ that achieves this can be viewed as a representation in mode space of a PNP transformation $U_1$ which defines the special mode operators $p_k=U_1 \fe_k U_1^\dagger= \sum_{k'}U_{k,k'}\fe_k$ where the correlation matrices are particularly simple.
The state vector $\ket {\psi_G}$ expressed in the $\{p_k\}$ takes the particularly simple form
\begin{align}
  \ket \psi = \prod_{k=1}^{L/2}(v_{2k-1}\id+u_{2k}p^\dagger_{2k-1}p^\dagger_{2k})\vacket\;,
\end{align}
where $|v_k|^2+|u_k|^2=1$ and we use the invariance of the vacuum under PNP transformations.
\end{proof}

This means that normal-ordering of a generic $U_G$ with respect to the vacuum is again Gaussian, i.e., it is a quadratic operator exponential which now only contains creation operators
\begin{align}
\ket \psi = \nUG \vacket =\sqrt Z^{-1} \exp\biggl(\frac{1}{2}\sum_{j,k=1}^{L} A_{j,k} \fd_j \fd_k\biggr)\vacket\;.
\end{align}
Indeed, let us for now assume that $v_k\neq 0$.
We now use $f^{\dagger 2}=0$ to write \eqref{eq:BlochMessiah} as
\begin{align}
  \ket \psi& =   \prod_{k=1}^{L/2}v_{2k-1}e^{\frac{u_{2k}}{v_{2k-1}}p^\dagger_{2k-1}p^\dagger_{2k}}\vacket\\
           & =  \sqrt Z^{-1}  e^{\sum_{k=1}^{L/2}\frac{u_{2k}}{2v_{2k-1}}[ p^\dagger_{2k-1}, p^\dagger_{2k}]}\vacket\;,
\end{align}
where we have defined 
\begin{equation}
\sqrt Z^{-1}:=\prod_{k=1}^{L/2}v_{2k-1}.
\end{equation}
In the next step, we exploit the result that any complex anti-symmetric matrix $\tilde A$ can be put into a normal form by a unitary conjugation, i.e., $\tilde A=U N(\lambda) U^\dagger$ where 
\begin{equation}
N(\lambda)=\bigoplus_k \begin{pmatrix} 0& \lambda_k \\-\lambda_k &0 \end{pmatrix} 
\end{equation}
when $L$ is even, with complex $\{\lambda_k\}$ \cite{bloch1962canonical, kraus2009quantum}.
If $L$ is odd, it takes the same form, with an additional $1\times 1$ block of $0$'s.
Thus, if $U_1$ acts as $U$ on modes we will have for $p_k=U_1 \fe_k U_1^\dagger= \sum_{k'}U_{k,k'}\fe_k$
\begin{align}
\ket \psi =\sqrt Z^{-1} \exp\biggl(\frac{1}{2}\sum_{j,k=1}^{L} A_{j,k} \fd_j \fd_k\biggr)\vacket=  \nUG\vacket\;,
\end{align}
where 
\begin{align}	\lambda_k =\frac{u_{2k}}{2v_{2k-1}}.
\end{align}
Clearly, by tuning $u,v$ we can reach any spectrum so any complex anti-symmetric matrix $A$ can be obtained.
For generic states this is enough, while basis states can be expressed as a limit of such expressions.

Finally, we observe that the algebra of creation operators alone is isomorphic to that of Grassmann by replacing $\theta_j$ by $\fd_j$. 
Then we can rewrite
\begin{align}
\label{EQ_PSI_CREATION_OP}
\ket \psi &= \sum_{j\in\BCL} \mathcal T(j) (\fd_1)^{j_1}\ldots (\fd_L)^{j_L}\vacket 
\end{align}
where $\mathcal T(j) =\sqrt Z^{-1} \Pf{A_{|j}}$ which is by definition a matchgate tensor.
Conversely, one can go the other direction defining as state using a matchgate tensor which will be a fermionic Gaussian state.
An alternative derivation of this fact based on a generalized Wick's theorem \cite{bravyi2016complexity} is given below.

\subsection{Alternative proof for matchgates corresponding to Gaussian states}

We express a fermionic Gaussian state vector in the occupation basis
\begin{align}
\ket \psi=\sum_{x\in\{0,1\}^n}\mathcal T(x)\ket x, 
\end{align}
where we have defined the \emph{amplitude tensor} 
$\mathcal T$. Its components can be obtained via $\mathcal T(x)=\braket x {\psi}$ where $\ket x=(\fd_1)^{x_1}\ldots (\fd_L)^{x_L}\vacket$ denotes an occupation basis state.
Note, that identifying $\vacket = {\ket\downarrow}^{\otimes L}$ and using the Jordan-Wigner transformation as in the main-text, this ordering of creation operators will yield $\ket x = (\sigma_1^+)^{x_1}\ldots(\sigma_L^+)^{x_L}\ket\downarrow^{\otimes L}= \otimes_{j=1}^L \ket{x_j}$.
Fixing a basis state $\ket z$, we may define the $z$-offset basis $\ket x_z = (\m_{2L-1})^{x_L} \ldots (\m_1)^{x_1}\ket z$.
We would like to show that $\mathcal T$ is a matchgate tensor by making use of the following result
\cite{bravyi2016complexity}.

\begin{lemma}[Generalized Wick's theorem]
For two Gaussian state vectors $\ket {\phi_1},\ket{\phi_2}$ and corresponding covariance matrices $M_1$, $M_2$, we have the generalized Wick's theorem
\begin{align}
\label{eq:general_wick_thm}
  \bra{\phi_1} (\m_1)^{x_1} \ldots (\m_{2L})^{x_{2L}}\ket{\phi_2}=\langle{\phi_1}\vert{\phi_2}\rangle\Pf{\i\Delta_{|x}}
\end{align}
where 
\begin{align}
\Delta=(-2\id +\i M_1-\i M_2)(M_1+M_2)^{-1}\;.
\end{align}
\end{lemma}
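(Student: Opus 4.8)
The plan is to reduce the two-sided matrix element to an ordinary single-state Wick expansion. The crucial elementary observation is that for any operator $O$ one has $\bra{\phi_1} O \ket{\phi_2} = \Tr[\ket{\phi_2}\!\bra{\phi_1}\,O]$, so the left-hand side is the expectation value of the Majorana monomial in the (generally non-Hermitian) operator $\rho_{12} = \ket{\phi_2}\!\bra{\phi_1}$. Since $\ket{\phi_1}$ and $\ket{\phi_2}$ are Gaussian, each is of the form $U_i\vacket$ for a Gaussian $U_i$, and a product of Gaussian operators is again Gaussian by the composition rules set up in Appendix~\ref{sec:app_contr}; hence $\rho_{12}$ is a Gaussian operator with a well-defined complex covariance matrix. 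The entire statement then follows from Wick's theorem applied to $\rho_{12}$, once its normalized two-point function is identified with $\i\Delta$.

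To make this precise while simultaneously establishing Gaussianity, I would work with the Grassmann generating functional $\chi(\theta) = \bra{\phi_1}\exp(\sum_{j=1}^{2L}\theta_j\m_j)\ket{\phi_2}$, where the $\theta_j$ are auxiliary Grassmann variables. Because each $\theta_j^2=0$, the exponential truncates to a Majorana-linear, displacement-type Gaussian operator, and its matrix element between two even Gaussian vectors is itself a Gaussian function of the sources; evenness (the parity superselection imposed earlier) kills the one-point terms, so $\chi(\theta) = \braket{\phi_1}{\phi_2}\exp(\tfrac12\sum_{j,k}\theta_j (\i\Delta)_{j,k}\theta_k)$ with no linear piece. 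Expanding the left-hand side in the $\theta_j$ produces, as the coefficient of the ordered monomial selected by $x$, precisely $\bra{\phi_1}(\m_1)^{x_1}\cdots(\m_{2L})^{x_{2L}}\ket{\phi_2}$ (up to routine reordering signs), while expanding the right-hand side and invoking the Grassmann-exponential identity~\eqref{eq:PhiPF} turns the quadratic exponential into $\braket{\phi_1}{\phi_2}\sum_x\Pf{(\i\Delta)_{|x}}\,\theta_1^{x_1}\cdots\theta_{2L}^{x_{2L}}$. Matching coefficients yields the claimed Pfaffian formula.

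The main obstacle is the explicit identification of the kernel, i.e. proving that $\chi$ is Gaussian with $\i\Delta = \i(-2\id + \i M_1 - \i M_2)(M_1+M_2)^{-1}$. I would obtain this by representing each state through its fermionic Berezin kernel and writing both $\braket{\phi_1}{\phi_2}$ and the sourced overlap as a single Gaussian Grassmann integral over the internal modes. Composing the two kernels adds their quadratic forms, and performing the Gaussian integration inverts the combined form, which is exactly where the propagator $(M_1+M_2)^{-1}$ appears; the linear source terms then contract through this propagator to produce the two-point matrix $\i\Delta$. A useful consistency check is antisymmetry of $\i\Delta$, required for the Pfaffian to exist: writing $S=M_1+M_2$ and $D=M_1-M_2$, antisymmetry is equivalent to $\{S,D\}=0$, hence to $M_1^2 = M_2^2$, which holds because pure Gaussian states satisfy $M_i^2=-\id$; the same purity also makes the diagonal case $\phi_1=\phi_2$ collapse to the standard Pfaffian Wick theorem.

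Two technical loose ends remain. First, the formula presupposes that $M_1+M_2$ is invertible; when it is singular (for instance orthogonal states, $\braket{\phi_1}{\phi_2}=0$) I would establish the identity on the dense set of pairs with $\det(M_1+M_2)\neq 0$ and extend by continuity, using that, after clearing the overlap factor, both sides are polynomial in the matrix entries. Second, one must adopt the sign convention for the $M_i$ consistent with Ref.~\cite{bravyi2016complexity} (which differs from the main-text $\Gamma$ by an overall sign); with that convention fixed, the restriction $\Pf{(\i\Delta)_{|x}}$ matches the stated notation and completes the proof.
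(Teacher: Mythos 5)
The paper does not prove this lemma at all: it is imported verbatim from Ref.~\cite{bravyi2016complexity} and used as a black box in the subsequent ``Matchgates and Gaussian states'' lemma, so there is no in-paper argument to compare yours against. Judged on its own terms, your strategy --- the generating functional $\chi(\theta)=\bra{\phi_1}\exp(\sum_j\theta_j\m_j)\ket{\phi_2}$, parity evenness killing the linear (and all odd) terms, the identity \eqref{eq:PhiPF} converting the quadratic exponential into the sum of Pfaffian minors, and the density-plus-continuity extension to singular $M_1+M_2$ --- is the standard fermionic-linear-optics route (essentially the Lagrangian representation of Ref.~\cite{bravyi2004lagrangian}), and your consistency checks are correct and worth having: writing $S=M_1+M_2$, $D=M_1-M_2$ one indeed finds $\Delta^\top=-\Delta$ iff $\{S,D\}=2(M_1^2-M_2^2)=0$, which purity guarantees, and the diagonal case collapses to $\Delta=-M^{-1}=M$, i.e.\ ordinary Wick.

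The substantive gap is that the heart of the lemma --- that the quadratic kernel of $\chi$ is specifically $\i\Delta$ with $\Delta=(-2\id+\i M_1-\i M_2)(M_1+M_2)^{-1}$, including the $-2\id$ term and the asymmetry between $M_1$ and $M_2$ --- is described (``compose the Berezin kernels, integrate out the internal modes, read off the propagator'') but never computed. Everything else in your write-up is bookkeeping around that identity, so without actually performing the Gaussian Grassmann integral you have reduced the lemma to itself. Two smaller loose ends: the reordering sign $(-1)^{|x|(|x|-1)/2}$ relating the coefficient of $\theta_1^{x_1}\cdots\theta_{2L}^{x_{2L}}$ in $\prod_j(1+\theta_j\m_j)$ to the ordered Majorana monomial arises on both sides of the matching and must be shown to cancel rather than waved at; and the claim that both sides become polynomial after clearing the overlap factor is not obvious by degree counting alone --- it needs a Pfaffian-minor (Jacobi-type) identity of the form $\Pf{S}\,\Pf{(S^{-1})_{|x}}=\pm\Pf{S_{|x^{c}}}$, together with the relation between $|\braket{\phi_1}{\phi_2}|$ and $\det(M_1+M_2)$, to make the continuity argument airtight.
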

This general result should also be useful in various settings, in particular for studying non-Gaussian states with the methods of fermionic linear optics as it allows to calculate observables in linear combinations of pure Gaussian states.
Here is a first possible application.
\begin{lemma}[Matchgates and Gaussian states]
For a Gaussian state vector $\ket {\psi}= U_G \vacket$ define the $z$-offset such that $\braket z\psi\neq0$.
Then the amplitude tensor $\mathcal T$ in the $z$-offset basis representation $\ket \psi=\sum_{x\in\{0,1\}^L}\mathcal T(x)\ket x_z$ is a matchgate tensor.
\end{lemma}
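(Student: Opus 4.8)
The plan is to recognize the offset-basis amplitude $\mathcal{T}(x)$ as a single matrix element of Majorana operators sandwiched between the two Gaussian state vectors $\ket z$ and $\ket\psi$, and then invoke the generalized Wick's theorem above to convert that matrix element directly into a Pfaffian of exactly the form required by the Pfaffian characterization of matchgates.

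First I would rewrite the coefficient. The offset basis $\ket x_z = (\m_{2L-1})^{x_L}\ldots(\m_1)^{x_1}\ket z$ is orthonormal, since each odd Majorana $\m_{2k-1} \seq \fe_k + \fd_k$ merely flips the occupation of mode $k$ up to a sign, sending $\ket z$ to a distinct occupation basis state for each distinct $x$. Because the $\m_j$ are self-adjoint, taking the adjoint of the ket and reversing the product gives
\begin{align}
  \mathcal{T}(x) = {}_z\!\braketn{x}{\psi} = \bra z \m_1^{x_1}\m_3^{x_2}\cdots\m_{2L-1}^{x_L}\ket\psi \text{ .}
\end{align}
This is precisely the left-hand side of the generalized Wick's theorem with $\phi_1 \seq z$, $\phi_2 \seq \psi$, and the $2L$-bit string $y(x)$ supported on the odd sublattice by $y_{2k-1} \seq x_k$, $y_{2k} \seq 0$.

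Next I would apply the theorem to get $\mathcal{T}(x) = \braketn z\psi\,\Pf{\i\Delta_{|y(x)}}$ with $\Delta = (-2\id + \i M_z - \i M_\psi)(M_z + M_\psi)^{-1}$, where $M_z, M_\psi$ are the covariance matrices of $\ket z$ and $\ket \psi$. Since $y(x)$ touches only odd indices, the principal submatrix $\i\Delta_{|y(x)}$ lives entirely on odd-indexed rows and columns, so I would define the $L\stimes L$ matrix $A$ by $A_{k,k'} \seq (\i\Delta)_{2k-1,2k'-1}$. As $\i\Delta$ is antisymmetric (which is exactly what makes the Pfaffians in the Wick theorem well-defined), its principal submatrix $A$ is antisymmetric, and by construction $\i\Delta_{|y(x)} = A_{|x}$. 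This yields
\begin{align}
  \mathcal{T}(x) = \braketn z\psi\,\Pf{A_{|x}} = \mathcal{T}(0)\,\Pf{A_{|x}} \text{ ,}
\end{align}
which is the Pfaffian form of a matchgate with reference index $0$ (the all-zeros string, for which $\ket 0_z = \ket z$ and $\Pf{A_{|0}} = 1$), establishing that $\mathcal{T}$ is a matchgate tensor.

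The hard part will not be the algebra but the two well-definedness points that the Wick step silently assumes. The first is orthonormality of the offset basis, needed to identify $\mathcal{T}(x)$ with ${}_z\!\braketn{x}{\psi}$; this follows from the occupation-flip action of the $\m_{2k-1}$ noted above. The second is invertibility of $M_z + M_\psi$, needed for $\Delta$ to exist: here the hypothesis $\braketn z\psi \sneq 0$ is exactly what rescues the argument, since a nonvanishing overlap of two pure Gaussian states is tied to invertibility of the sum of their covariance matrices, and the degenerate cases can be reached as a limit of states with nonsingular $M_z + M_\psi$ using continuity of both sides of the Pfaffian representation. A secondary, purely bookkeeping hazard is the sign and ordering incurred when passing from the ket $\ket x_z$ to the corresponding bra; since only odd indices carry nonzero exponents, the reordering into the strictly increasing index order demanded by the theorem is unambiguous and can be tracked explicitly.
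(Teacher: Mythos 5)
Your proposal is correct and follows essentially the same route as the paper: rewrite $\mathcal T(x)$ as $\bra z (\m_1)^{x_1}\cdots(\m_{2L-1})^{x_L}\ket\psi$, apply the generalized Wick's theorem to the pair $(\ket z,\ket\psi)$ with the odd-sublattice support string $\overline x = x\otimes(1,0)^\top$, and read off $A$ as the odd-indexed principal submatrix of $\i\Delta$. Your additional remarks on orthonormality of the offset basis and on invertibility of $M_z+M_\psi$ (handled by the hypothesis $\braketn z\psi\neq 0$ and a limiting argument) make explicit two points the paper leaves implicit, but do not change the argument.
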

\begin{proof}
We define $\mathcal T(x) =~_z\braket x \psi$.
We want to show that there exists an anti-symmetric matrix $A$ such that we have $\mathcal T(x)=\mathcal T(\overline 0) \Pf{A_{| x}}$ for all $x$.
Let us observe that $\ket z$ is Gaussian and we can obtain the components via  $\mathcal T(x)=\bra z (\m_1)^{x_1} \ldots (\m_{2L-1})^{x_L} \ket\psi$.
Next, we use the generalized Wick's theorem as stated above for the Gaussian state vectors $\ket z$ and $\ket \psi$ which gives us $\mathcal T(x)= \braket z \psi \Pf{\i\Delta_{|\overline x}}=\mathcal T(\overline 0)  \Pf{\i\Delta_{|\overline x}}$ where $\overline x = x \otimes (1,0)^\top $ and therefore $\mathcal T$ is a matchgate tensor because we have $A=\i\Delta_{|\overline 1\otimes (1,0)^\top}$ such that $\mathcal T(x)=\mathcal T(\overline 0) \Pf{A_{|x}}$.
\end{proof}

\section{Conversion of generating matrices to covariance matrices}
\label{sec:app_conversion}

We now convert generating matrices to their corresponding covariance matrix, the entries of which are
given by
\begin{align}
\label{def:cov_mat}
\Gamma_{j,k}(\psi) =\bra{\psi}\tfrac{\i\,}{2} [\m_j,\m_k]\ket \psi .
\end{align}
This can be seen as the inverse procedure to the Bloch-Messiah reduction that was used above to calculate the normal ordering.
Again, the calculation is based on the normal form of anti-symmetric matrices: We use the fact that any anti-symmetric matrix $A\in \mathbb C^{L\times L}$ can be put into a normal form  $A=W^\top \Sigma W$ where $W\in O(L)$ and $\Sigma$ is block-diagonal consisting of $2\times 2$ blocks of the form 
\begin{align}
\tfrac12\begin{pmatrix} 0& \lambda_k \\-\lambda_k &0 \end{pmatrix}
\end{align} for $L$ even and additionally a $0$ block  if it is odd.
If $A$ is real, it is easy to find $W$ from the eigenvectors of the hermitian matrix $\i A$, while in the general case the appendix of Ref.\ \cite{bloch1962canonical} provides implicitly a possible algorithm.
In the following we will prove a conversion formula for the case of \emph{real} anti-symmetric generating matrices, as the general case is not necessary for the main-text results.
In this case $\lambda_k$'s are real and we will assume that $W$ is such that $\lambda_k>0$ without loss of generality.
Using this convention we define a set of angles $\phi_k$ by identifying
\begin{align}\label{eq:angles}
\cos(\phi_k) := 1/(1+\lambda_k^2 )^{1/2} \quad \text{and}\quad \sin(\phi_k):= \lambda_k/(1+\lambda_k^2 )^{1/2} .
\end{align}
With these definitions, we state the following conversion lemma.
\begin{lemma}[$A \rightarrow \Gamma$ conversion]
Let $A=-A^\top\in\mathbb R^{L\times L}$ with normal form $A=W^\top\Sigma W$ as above.
Then the state vector $\ket \psi=\sqrt Z^{-1} \exp(\frac{1}{2}\sum_{j,k=1}^{L} A_{j,k} \fd_j \fd_k)\vacket$ has the covariance matrix
\begin{align}
  \Gamma(\psi)=\Xi  \tilde WV_\phi \Xi^{-1} \Gamma(\emptyset) (\Xi  \tilde W V_\phi\Xi^{-1})^\top
\end{align}
where $\tilde W = W\otimes \id_2$, $V_\phi=\oplus_{k=1}^{L/2}(\cos(\phi_k)\id+\i\sin(\phi_k) \sigma^y\otimes\sigma^x)$ when $L$ is even, or append $\oplus \id_1$ if it is odd and $\Xi =\oplus_{k=1}^n \begin{pmatrix} 1&1\\-\i&\i \end{pmatrix}$.
\end{lemma}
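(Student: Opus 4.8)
The plan is to reduce everything to the covariance transformation law for Gaussian unitaries and then to read off the relevant single-particle matrix from the normal form of $A$. Concretely, if $\ket\psi = U\vacket$ for a Gaussian unitary $U$ whose action on the Majorana operators is $U^\dagger\m_j U = \sum_k O_{j,k}\m_k$ with $O$ real and orthogonal, then substituting into \eqref{def:cov_mat} and using bilinearity of the commutator gives at once $\Gamma(\psi) = O\,\Gamma(\emptyset)\,O^\top$, where $\Gamma(\emptyset)$ is the fixed block-diagonal vacuum covariance matrix. Since $\Gamma(\emptyset)$ is then inert, the entire content of the lemma is to show that the orthogonal matrix $O$ generated by $A$ factorizes as $O = \Xi\tilde W V_\phi\Xi^{-1}$.

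First I would use the normal form $A = W^\top\Sigma W$ to decouple the Gaussian exponential. Defining rotated creation operators $\tilde f^\dagger_a = \sum_j W_{a,j}\fd_j$ — legitimate fermionic operators because $W\in O(L)$ — turns $\tfrac12\sum_{j,k}A_{j,k}\fd_j\fd_k$ into a sum $\sum_k c_k\,\tilde f^\dagger_{2k-1}\tilde f^\dagger_{2k}$ of commuting, mode-disjoint pairing terms, with each $c_k$ proportional to $\lambda_k$. The mode rotation $W$ is exactly a particle-number-preserving transformation of the type appearing in the Bloch--Messiah lemma; in the doubled space of (annihilation, creation) pairs it acts as $\tilde W = W\otimes\id_2$, leaving the two-dimensional particle/hole factor untouched. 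Because $(\fd)^2 = 0$, each exponentiated pairing term collapses to $(\,\cos\phi_k\,\id + \sin\phi_k\,\tilde f^\dagger_{2k-1}\tilde f^\dagger_{2k}\,)\vacket$ after normalization, with the Bogoliubov coefficients fixed by \eqref{eq:angles}.

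Next I would isolate a single pair of modes. The BCS-type state $\cos\phi_k\vacket + \sin\phi_k\,(\text{pair})$ is produced from the vacuum by a two-mode Bogoliubov rotation whose single-particle representation, on the four-dimensional space of the two paired modes, is precisely the block $\cos\phi_k\,\id + \i\sin\phi_k\,\sigma^y\otimes\sigma^x$; collecting these blocks over $k$ assembles $V_\phi$. Finally, $\Xi = \oplus_k\bigl(\begin{smallmatrix}1&1\\-\i&\i\end{smallmatrix}\bigr)$ is nothing but the linear change of basis $(\fe_j,\fd_j)^\top\mapsto(\m_{2j-1},\m_{2j})^\top$ dictated by $\m_{2j-1} = \fe_j+\fd_j$ and $\m_{2j} = -\i(\fe_j-\fd_j)$. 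Conjugating the complex-basis transformation $\tilde W V_\phi$ by $\Xi$ re-expresses it in the self-adjoint Majorana basis, yielding $O = \Xi\tilde W V_\phi\Xi^{-1}$ and hence the claim through $\Gamma(\psi) = O\,\Gamma(\emptyset)\,O^\top$.

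The delicate part will be the exact bookkeeping in the single-pair block: one must verify that the Bogoliubov rotation realizing $\cos\phi_k\,\id + \sin\phi_k\,\tilde f^\dagger_{2k-1}\tilde f^\dagger_{2k}$ carries the stated $\i\,\sigma^y\otimes\sigma^x$ structure with the correct operator ordering, and must track all factors of $2$ relating the entries of $\Sigma$, the pairing coefficients $c_k$, and the angles \eqref{eq:angles}, as well as every sign and transpose produced by composing $\tilde W$, $V_\phi$ and the non-unitary-looking $\Xi$. A clean consistency check throughout is that, although $\Xi$ and $V_\phi$ are separately complex, the product $O = \Xi\tilde W V_\phi\Xi^{-1}$ must come out real and orthogonal, since it maps the self-adjoint $\m_j$ among themselves; enforcing this reality simultaneously validates the angle identification \eqref{eq:angles} and the chosen normal form $\Sigma$.
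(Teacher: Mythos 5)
Your proposal follows essentially the same route as the paper's proof: decouple the quadratic form with the PNP rotation $W$ (acting as $\tilde W=W\otimes\id_2$ in the doubled space), rewrite each normalized pairing factor as a two-mode Bogoliubov rotation $e^{\phi_k\Omega_k}$ of the vacuum with the angles of \eqref{eq:angles}, identify its single-particle block $\cos\phi_k\,\id+\i\sin\phi_k\,\sigma^y\otimes\sigma^x$, and conjugate by $\Xi$ into the Majorana basis before applying $\Gamma(\psi)=O\,\Gamma(\emptyset)\,O^\top$. The "delicate part" you flag is exactly what the paper settles by integrating the $\partial_{\phi_k}$ equations of motion for the four-component vector $(\ft_{2k-1},\ftd_{2k-1},\ft_{2k},\ftd_{2k})$, so the argument is correct and matches the paper's.
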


\begin{proof}
\def\f{\widetilde f^\dagger}

Let $U_W$ be the Gaussian particle number preserving unitary that implements the $W$ action on the modes 
\begin{align}
\fe_j =U_W \fd_j U_W^\dagger=\sum_{k=1}^L W_{j,k} \fd_k.
\end{align}
This choice puts the quadratic form into the normal form because
\begin{align}
\sum_{j,k=1}^LA_{j,k}\fd_j \fd_k= \sum_{j,k=1}^L \sum_{j',k'=1}^L (W^\top)_{j,j'} \Sigma_{j',k'}W_{k',k}\fd_j \fd_k=\sum_{j',k'=1}^L  \Sigma_{j',k'} \ftd_{j'}\ftd_{k'}
\end{align}
which gives
\begin{align}
\ket \psi = \sqrt Z^{-1}e^{ \sum_{j,k=1}^L  A_{j,k}\fd_j \fd_k}\vacket =\sqrt Z^{-1}\prod_{k=1}^{L/2} e^{\lambda_k \ftd_{2k-1} \ftd_{2k}}\vacket =\sqrt Z^{-1}\prod_{k=1}^{L/2} (1+\lambda_k \ftd_{2k-1} \ftd_{2k})\vacket .
\end{align}
 For ease of notation all sums and products going up to $\lfloor L/ 2 \rfloor$ will be denoted with an $L/2$ upper limit.
From this form we can read off the normalization of $\ket \psi$
\begin{align}
\label{EQ_PSI_NORM}
\braket{\psi}\psi= Z^{-1} \prod_{k=1}^{L/2}(1+\lambda_k^2) =1 ,
\end{align}
i.e., $Z=\prod_{k=1}^{L/2}({1+\lambda_k^2})$.
Having found that the state is decoupled, for fixed $k$ each term can be promoted to a unitary with the same action on the vacuum $\vacket$.
Using the angles $\phi_k$ defined above through \eqref{eq:angles} we find that 
\begin{align}
\frac{1+\lambda_k \ftd_{2k-1} \ftd_{2k}}{\sqrt{1+\lambda_k^2}}\vacket=
%\frac{1+\lambda_k ( \ftd_{2k-1} \ftd_{2k}+ \ft_{2k-1} \ft_{2k})}{\sqrt{1+\lambda_k}}\vacket=
(\cos \phi_k \id + \sin \phi_k \Omega_k)\vacket .
\end{align}
where we have defined 
\begin{align}
\Omega_k:=\ftd_{2k-1} \ftd_{2k}+ \ft_{2k-1} \ft_{2k} 
\end{align}
which is anti-symmetric and hence can be used as a generator for a unitary operator in Hilbert space.
By observing that $\Omega_k^2\vacket=-\vacket$ we find that $U_{\phi_k}=\cos \phi_k \id + \sin \phi_k \Omega_k=e^{\phi_k \Omega_k}$ 
and using that the $\{\Omega_k\}$ commute we finally arrive at
\begin{align}
\ket \psi= e^{\sum_{k=1}^{L/2} \phi_k \Omega_k } \vacket =: U_\phi \vacket \end{align}
which is an explicit (even) Gaussian rotation of the vacuum.
To summarize this decoupling step, we have decoupled the normal-form of the state with $U_W$ to the Bloch-Messiah form and found the Gaussian unitary $U_\phi$ that rotates the vacuum into the state vector $\ket \psi$.% $U_W\ket \psi$.
Note, that this allows to split any Gaussian unitary $U_G$ acting on the vacuum into a particle number preserving part $U_W$ and a squeezing part $U_\phi$.
For Majorana operators $\tilde \m_{2k-1}= \ft_{k} +\ftd_{k}$ and $\tilde \m_{2k}=-\i(\ft_{k}-\ftd_{k})$, we define the matrix $\widetilde \Gamma$ with entries
\begin{equation}
\widetilde \Gamma_{j,k}(\psi) :=\bra{\psi}\tfrac{\i\,}{2} [\tilde \m_j,\tilde \m_k]\ket \psi.
\end{equation}
By noting that $\tilde \m_{2k-1}=\sum_{j=1}^{L}W_{k,j} \m_{2j-1}$ and $\tilde \m_{2k}=\sum_{j=1}^{L}W_{k,j} \m_{2j}$, 
we find  the relation $\widetilde \Gamma(\psi)= \tilde W \Gamma(\psi) \tilde W^t$ where 
\begin{equation}
\tilde W= W\otimes \begin{pmatrix} 1&0\\0&1 \end{pmatrix}.
\end{equation}
Denoting by \begin{equation}\Gamma(\emptyset)=\bigoplus_{k=1}^L\begin{pmatrix} 0&1\\-1&0 \end{pmatrix}\end{equation} the vacuum covariance matrix, it remains to show that $\widetilde \Gamma= W_\phi \Gamma(\emptyset) W_\phi^\top$ where $W_\phi$ is defined by $U_\phi^\dagger \tilde \m_j U_\phi =\sum_{k=1}^{2L}(W_\phi)_{j,k} \tilde \m_k$.
Indeed we find that for $\ft_k(\phi)=U_\phi^\dagger \ft_k U_\phi$ we have the block-decoupled equations of motion
\begin{align}
\partial_{\phi_k} 
\begin{bmatrix} \ft_{2k-1}(\phi)\\ \ftd_{2k-1}(\phi) \\ \ft_{2k}(\phi)\\ \ftd_{2k}(\phi) \end{bmatrix}=
\begin{bmatrix}
0&0&0&1\\
0&0&1&0\\
0&-1&0&0\\
-1&0&0&0
\end{bmatrix}
\begin{bmatrix} \ft_{2k-1}(\phi)\\ \ftd_{2k-1}(\phi) \\ \ft_{2k}(\phi)\\ \ftd_{2k}(\phi) \end{bmatrix}
=\i\sigma^y\otimes \sigma^x\begin{bmatrix} \ft_{2k-1}(\phi)\\ \ftd_{2k-1}(\phi) \\ \ft_{2k}(\phi)\\ \ftd_{2k} (\phi)\end{bmatrix}
\end{align}
and therefore
\begin{align}
\begin{bmatrix} \ft_{2k-1}(\phi)\\ \ftd_{2k-1}(\phi) \\ \ft_{2k}(\phi)\\ \ftd_{2k}(\phi) \end{bmatrix}=
\begin{bmatrix}
\cos(\phi_k)&0&0&\sin(\phi_k)\\
0&\cos(\phi_k)&\sin(\phi_k)&0\\
0&-\sin(\phi_k)&\cos(\phi_k)&0\\
-\sin(\phi_k)&0&0&\cos(\phi_k)
\end{bmatrix}
\begin{bmatrix} \ft_{2k-1}\\ \ftd_{2k-1} \\ \ft_{2k}\\ \ftd_{2k} \end{bmatrix}
=(\cos(\phi_k)\id+\sin(\phi_k) \i\sigma^y\otimes\sigma^x)\begin{bmatrix} \ft_{2k-1}\\ \ftd_{2k-1} \\ \ft_{2k}\\ \ftd_{2k} \end{bmatrix}.
\end{align}
We collect all such rotations to $V_\phi=\oplus_{k=1}^{L/2}(\cos(\phi_k)\id+\i\sin(\phi_k) \sigma^y\otimes\sigma^x)$ when $L$ is even, and append $\oplus\id_1$ if it is odd.
Using the relation  $\tilde m =\Xi \f$ with
\begin{align}
\Xi =\oplus_{k=1}^n \begin{pmatrix} 1&1\\-\i&\i \end{pmatrix}
\end{align}
we can switch between the vector of creation anihiliation operators and Majorana operators which gives $W_\phi=\Xi V_\phi\Xi^{-1}$, so we find
\begin{align}
  \Gamma(\psi)=\Xi  \tilde W V_\phi \Xi^{-1} \Gamma(\emptyset) (\Xi \tilde W V_\phi \Xi^{-1})^\top\ .
\end{align}
\end{proof}

\section{Contraction rules for generating matrices}
\label{sec:app_contr}
\subsection{Contracting two tensors}
Now, we explicitly show how contracting two tensors $U$ and $V$ into a tensor $W$ combines the generating matrices $A$ and $B$ of the two original tensors into a larger generating matrix $C$. We assume that all three tensors are even matchgates, and can thus be written as
\begin{align}
U\left( \Theta_U \right) &= c_U \exp\left( \frac{1}{2} \Theta_U^T  A  \Theta_U \right) \text{ ,} \\
V\left( \Theta_V \right) &= c_V \exp\left( \frac{1}{2} \Theta_V^T  B  \Theta_V \right) \text{ ,}\\
\label{EQ_W_GAUSSIAN_DEF}
W\left( \Theta_W \right) &= c_W \exp\left( \frac{1}{2} \Theta_W^T  C  \Theta_W \right) \text{ ,}
\end{align}
where we have defined the vectors $\Theta_U$, $\Theta_V$, and $\Theta_W$ of Grassmann variables $\theta_i$ as
\begin{align}
\Theta_U &:= (\theta_1,\dots,\theta_{d_U})^T \text{ ,} \\
\Theta_V &:= (\theta_{d_U+1},\dots,\theta_{d_U+d_V})^T \text{ ,} \\
\Theta_W &:= (\theta_1,\dots,\theta_{d_U-1},\theta_{d_U+2},\dots,\theta_{d_U+d_V})^T \text{ .}
\end{align}
Thus, $A$, $B$, and $C$ are $d_U \times d_U$, $d_V \times d_V$ and $(d_U+d_V-2) \times (d_U+d_V-2)$ matrices, respectively. All are anti-symmetric. Note that we want to trace out the degrees of freedom corresponding to the Grassmann variables $\theta_{d_U}$ and $\theta_{d_U+1}$, i.e., the last index of $U$ and the first index of $V$.
As we showed earlier, the contraction is equivalent to the Grassmann integration
\begin{align}
W(\Theta_W) &= \int \text{d}\theta_{d_U+1}\int \text{d}\theta_{d_U} \exp\left( \theta_{d_U} \theta_{d_U+1} \right) U(\Theta_U) V(\Theta_V) \\
&= c_U c_V \int \text{d}\theta_{d_U+1}\int \text{d}\theta_{d_U} \exp\left( \theta_{d_U} \theta_{d_U+1} + \frac{1}{2} \Theta_U^T  A  \Theta_U + \frac{1}{2} \Theta_V^T  B  \Theta_V \right).
\end{align}
Notice that we can easily factorize exponentials because binomial terms in Grassmann variables commute, thus making the Baker-Campbell-Hausdorff formula trivial. This also allows us the remove all terms independent of $\theta_{d_U}$ and $\theta_{d_U+1}$ from the integral
\begin{align}
W(\Theta_W) &= c_U c_V \exp\left( \sum_{i=1}^{d_U-2} \sum_{j=i+1}^{d_U-1} A_{i , j}  \theta_i \theta_j + \sum_{i=2}^{d_V-1} \sum_{j=i+1}^{d_V} B_{i , j}  \theta_{d_U+i} \theta_{d_U+j} \right) \nonumber \\
& \cdot \int \text{d}\theta_{d_U+1} \int \text{d}\theta_{d_U} \exp\left( \theta_{d_U} \theta_{d_U+1} +  \sum_{i=1}^{d_U-1} A_{i  , d_U}  \theta_i \theta_{d_U} + \sum_{j=2}^{d_V} B_{1, j}  \theta_{d_U+1} \theta_{d_U+j} \right).
\end{align}
The expansion of the integrand exponential is fairly simple, as all powers higher than two vanish,
according to
\begin{align}
\exp\left( \theta_{d_U} \theta_{d_U+1} +  \sum_{i=1}^{d_U-1} A_{i  , d_U}  \theta_i \theta_{d_U} + \sum_{j=2}^{d_V} B_{1, j}  \theta_{d_U+1} \theta_{d_U+j} \right) = 1 &+ \theta_{d_U} \theta_{d_U+1} + \sum_{i=1}^{d_U-1} A_{i  , d_U}  \theta_i \theta_{d_U} + \sum_{j=2}^{d_V} B_{1, j}  \theta_{d_U+1} \theta_{d_U+j} \nonumber\\
&+ \sum_{i=1}^{d_U-1} \sum_{j=2}^{d_V} A_{i  , d_U} B_{1, j}  \theta_i \theta_{d_U} \theta_{d_U+1} \theta_{d_U+j}.
\end{align}
Applying the integral leaves us with
\begin{align}
W(\Theta_W) &= c_U c_V \exp\left( \sum_{i=1}^{d_U-2} \sum_{j=i+1}^{d_U-1} A_{i , j}  \theta_i \theta_j + \sum_{i=2}^{d_V-1} \sum_{j=i+1}^{d_V} B_{i, j}  \theta_{d_U+i} \theta_{d_U+j} \right) 
\left( 1 + \sum_{i=1}^{d_U-1} \sum_{j=2}^{d_V} A_{i  , d_U} B_{1, j}  \theta_i \theta_{d_U+j} \right) \nonumber\\
&= c_U c_V \exp\left( \sum_{i=1}^{d_U-2} \sum_{j=i+1}^{d_U-1} A_{i , j}  \theta_i \theta_j + \sum_{i=2}^{d_V-1} \sum_{j=i+1}^{d_V} B_{i , j}  \theta_{d_U+i} \theta_{d_U+j} + \sum_{i=1}^{d_U-1} \sum_{j=2}^{d_V} A_{i  , d_U} B_{1, j}  \theta_i \theta_{d_U+j} \right) \text{ .}
\end{align}
We were able to turn the last factor into an exponential because higher powers of the contraction sum are zero, i.e.,
\begin{equation}
\left(\sum_{i=1}^{d_U-1} \sum_{j=2}^{d_V} A_{i  , d_U} B_{1, j}  \theta_i \theta_{d_U+j} \right)^2 = \left(\sum_{i=1}^{d_U-1} A_{i  , d_U} \theta_i \right)^2 \left(\sum_{j=2}^{d_V} B_{1, j} \theta_{d_U+j} \right)^2 = 0 \text{ ,}
\end{equation}
where we have used the fact that any linear combination of Grassmann numbers $\sum_i a_i \theta_i$ is again a Grassmann number, squaring to zero.
We can now write the explicit structure of $W(\Theta_W)$ in terms of $c_W$ and $C$ in \eqref{EQ_W_GAUSSIAN_DEF}. Obviously, $c_W = c_U c_V$. The matrix $C$ is composed of $A$ and $B$ according the  pattern
\begin{align}
\label{EQ_C_FROM_AB_CONTRACTED}
C &= 
\begin{pmatrix}
A_{1 , 1}                 & \cdots & A_{1  , d_U-1}               & A_{1  , d_U} B_{1, 2}     & \cdots & A_{1  , d_U} B_{1  , d_V}\\
\vdots                  & \ddots & \vdots                      & \vdots                  & \ddots & \vdots \\
-A_{1  , d_U-1}          & \cdots & A_{d_U-1  , d_U-1}           & A_{d_U-1  , d_U} B_{1, 2} & \cdots & A_{d_U-1  , d_U} 
B_{1  , d_V} \\
-A_{1  , d_U} B_{1 , 1}    & \cdots & -A_{d_U-1  , d_U} B_{1 , 1}    & B_{2 , 2}                 & \cdots & B_{2  , d_V} \\
\vdots                  & \ddots & \vdots                      & \vdots                  & \ddots & \vdots \\
-A_{1  , d_U} B_{1  , d_V}& \vdots & -A_{d_U-1  , d_U} B_{1  , d_V}& B_{d_V,  2}             & \cdots & B_{d_V  , d_V} \\
\end{pmatrix} \\
&= \begin{pmatrix}
A_\text{UL} & \begin{pmatrix} A_{1  , d_U} \\ \vdots \\ A_{d_U-1  , d_U} \end{pmatrix} \begin{pmatrix} B_{1, 2} & \cdots & B_{1  , d_V} \end{pmatrix} \\
-\begin{pmatrix} B_{1, 2} \\ \vdots \\ B_{1  , d_V} \end{pmatrix} \begin{pmatrix} A_{1  , d_U} & \cdots & A_{d_U-1  , d_U} \end{pmatrix} & B_\text{BR} 
\end{pmatrix} \text{ .}
\end{align}
The submatrices $A_\text{UL}$ and $B_\text{BR}$ are the upper-left and bottom-right part of the matrices $A$ and $B$, respectively, with one row and column removed. anti-symmetry of $A_\text{UL}$ and $B_\text{BR}$, and by extensions $C$, implies that all diagonal elements are zero.

This result is indeed quite natural, seen from a diagrammatic perspective, where the matrix  $C$ defines the 2-point correlators of the contracted state. 
The correlation between uncontracted Grassmann variables that lie either completely in $\Theta_U$ or $\Theta_V$ remains unaffected by the contraction. 
Correlators $C_{i ,j}$ between a $\theta_i$ in $\Theta_U$ and a $\theta_j$ in $\Theta_V$ are simply given by $A_{i  , d_U} B_{1  j}$, i.e., the product of the correlators over the contracted edge $(\theta_{d_U},\theta_{d_U+1})$. 

\subsection{Self-contractions}
Now, consider the more complicated case of self-contraction. We start with the tensor $T(\Theta)$ given by
\begin{equation}
T\left( \Theta \right) = c \exp\left( \frac{1}{2} \Theta^T  A  \Theta \right) = c \exp\left( \sum_{i=1}^{d-1} \sum_{j=i+1}^d A_{i , j} \theta_i \theta_j \right) \text{ ,}
\end{equation}
using the $d$ Grassmann variables $\Theta = (\theta_1,\dots,\theta_d)$. Without loss of generality (we can always perform index permutation), we want to contract the first two indices of $T$, i.e., contract over $\theta_1$ and $\theta_2$. Again writing the contraction as a Grassmann integration, we find
\begin{align}
T(\Theta)_{1 \star 2} &= c \int \text{d}\theta_2\int \text{d}\theta_1 \exp\left( \theta_1 \theta_2 + \sum_{i=1}^{d-1} \sum_{j=i+1}^d A_{i , j} \theta_i \theta_j \right) \nonumber\\
&= c \exp\left( \sum_{i=3}^{d-1} \sum_{j=i+1}^d A_{i , j} \theta_i \theta_j \right) \int \text{d}\theta_2\int \text{d}\theta_1 
\exp\left( (1+A_{1, 2}) \theta_1 \theta_2 + \sum_{j=3}^d \left(A_{1, j} \theta_1 \theta_j + A_{2, j} \theta_2 \theta_j \right) \right).
\end{align}
Again, we can expand the exponential explicitly, as all terms beyond second order vanish:
\begin{align}
\exp\left( (1+A_{1, 2}) \theta_1 \theta_2 + \sum_{j=3}^d \left(A_{1, j} \theta_1 \theta_j + A_{2, j} \theta_2 \theta_j \right) \right) = & 1 + (1+A_{1, 2}) \theta_1 \theta_2 + \sum_{j=3}^d \left(A_{1, j} \theta_1 \theta_j + A_{2, j} \theta_2 \theta_j \right) \nonumber\\
&+ \sum_{i=3}^{d-1} \sum_{j=i+1}^d (A_{i ,1} A_{2, j} - A_{i, 2} A_{1, j}) \theta_1 \theta_2 \theta_i \theta_j .
\end{align}
Only the second and fourth term survive the integration, giving us
\begin{align}
T(\Theta)_{1 \star 2} &= c \exp\left( \sum_{i=3}^{d-1} \sum_{j=i+1}^d A_{i , j} \theta_i \theta_j \right) \left( 1 + A_{1, 2} + \sum_{i=3}^{d-1} \sum_{j=i+1}^d (A_{i ,1} A_{2, j} - A_{i, 2} A_{1, j}) \theta_i \theta_j \right) \nonumber\\
&= c  (1 + A_{1, 2}) \exp\left( \sum_{i=3}^{d-1} \sum_{j=i+1}^d A_{i , j} \theta_i \theta_j \right) \exp \left( \sum_{i=3}^{d-1} \sum_{j=i+1}^d \frac{A_{i ,1} A_{2, j} - A_{i, 2} A_{1, j}}{1+A_{1, 2}} \theta_i \theta_j \right) \nonumber\\
&= c  (1 + A_{1, 2}) \exp\left( \sum_{i=3}^{d-1} \sum_{j=i+1}^d \left(A_{i , j} + \frac{A_{i ,1} A_{2, j} - A_{i, 2} A_{1, j}}{1+A_{1, 2}} \right) \theta_i \theta_j \right).
\end{align}
To get to the second line, we require that the square (and thus higher powers) of the $O(\theta_i \theta_j)$ term vanish:
\begin{align}
\left(\sum_{i=3}^{d-1} \sum_{j=i+1}^d (A_{i ,1} A_{2, j} - A_{i, 2} A_{1, j}) \theta_i \theta_j \right)^2 
% &= \left( \frac{1}{2}\sum_{i=3}^{d} \sum_{j=3}^d (A_{i ,1} A_{2, j} - A_{i, 2} A_{1, j}) \theta_i \theta_j \right)^2 \nonumber\\
&= \left( \sum_{i=3}^{d} \sum_{j=3}^d A_{i ,1} A_{2, j} \theta_i \theta_j \right)^2 \nonumber\\
&= \left( \sum_{i=3}^{d} A_{i ,1} \theta_i \right)^2 \left( \sum_{j=3}^d A_{2, j} \theta_j \right)^2 = 0.
\end{align}
Thus, we can express the contracted tensor in the form
\begin{equation}
T(\Theta)_{1 \star 2} = c_{1 \star 2} \exp\left( \frac{1}{2} \Theta_{1 \star 2}^T  A_{1 \star 2}   \Theta_{1 \star 2}  \right) \text{ ,}
\end{equation}
where $\Theta_{1 \star 2} = (\theta_3, \theta_4, \dots, \theta_d)$ contains the uncontracted Grassmann variables, and the constant $c_{1 \star 2}$ and $(d-2)\times(d-2)$ matrix $A_{1 \star 2}$ are given by the original constant $c$ and matrix $A$ according to
\begin{align}
c_{1 \star 2} &= (1 + A_{1, 2})  c \text{ ,}\\
(A_{1 \star 2})_{i , j} &= A_{i , j} + \frac{A_{i ,1} A_{2, j} - A_{i, 2} A_{1, j}}{1+A_{1, 2}}  \nonumber \\
&= \frac{A_{i , j} + A_{i , j} A_{1, 2} + A_{i ,1} A_{2, j} - A_{i, 2} A_{1, j}}{1+A_{1, 2}} \text{ .}
\end{align}

The self-contraction integrates out the $A_{1, 2}$ correlator, redefining our ``vacuum term'' $c$. $(A_{1 \star 2})_{i , j}$ now contains all connected and disconnected correlations between site $i$ and $j$, divided by the vacuum contributions.

%Again, a diagrammatric interpretation is readily available: As we ``contract out'' interactions involving $\theta_1$ and $\theta_2$, we are effectively factoring out the correlators $A_{1, 2}$ between both variables. This redefines our vacuum (i.e. $T_{0 0 \dots 0}$) according to $c \to (1 + A_{1, 2})  c$. As for the case of the contraction of two tensors in \eqref{EQ_C_FROM_AB_CONTRACTED}, the two-point correlators $A_{i , j}$ also receive additional contributions from the previous four-point correlations over the contracted variables. These are again computed as simple factors: $A_{i ,1} A_{2, j}$ is the correlation going from $i$ to $1$ and $2$ to $j$ (i.e.\ $i\to 1\to 2\to j$), while $A_{i ,1} A_{2, j}$ corresponds to the reverse path $i\to 2\to 1\to j$. These paths do not contain the $A_{1, 2}$ correlator, so we have to divide the additional term by the ``normalization factor'' $1+A_{1, 2}$.

\subsection{Cyclic permutations}

In order to contract smaller matchgate tensors into larger ones, we need one additional ingredient: rules for cyclic permutation. Our prescription for contracting two tensors $U$ and $V$ works by contracting the last index of $U$ with the first index $V$, while in the self-contraction case we contracted the first two indices of a tensor $T$. Clearly, we can satisfy both conditions by cyclically permuting the indices of the tensors in question, i.e., relabeling the Grassmann variables. We write a cyclic permutation by $n$ bits as $\sigma_n(\Theta)$, for example
\begin{equation}
\sigma_1\left( \theta_1 \theta_2 - \theta_2 \theta_3 \right) = \theta_1 \theta_3 + \theta_2 \theta_3 \text{ .}
\end{equation}
It is easy to see that the cyclic permutation of a Gaussian matchgate tensor $T(\Theta)$ is given by
\begin{align}
\sigma_n\left(T(\Theta)\right) = \sigma_n\left( c \exp\left( \frac{1}{2} \Theta^T  A  \Theta \right)\right) = c \exp\left( \frac{1}{2} \Theta^T  \sigma_n(A)  \Theta \right) \text{ ,}
\end{align}
where the new correlation matrix $\sigma_n(A)$ is simply $A$ where the $i$-th row and $j$-th colummn is replaced by the $(i+n)$th row and the $(j+n)$th column modulo $m$ (where $m$ is the length of the vector of Grassman variables $\Theta$).
With these rules for permutations, contractions and self-contractions, we can contract any planar network of Gaussian matchgate tensors. For odd tensors, where an integral over an additional source term of auxiliary Grassmann variables is required, the rules become significantly more complicated.

\subsection{Graph orientation and boundary conditions}

We now show how a complete network is contracted using the tools developed earlier. As a concrete example, consider the contraction of 11 pentagons (i.e., tensors with five indices) in a $\lbrace 5, 4 \rbrace$ tiling shown in Fig.\  \ref{FIG_CONTR_STEPS}. 
We start with an initial labeling of all pentagon edges in a clockwise orientation, with each index $i$ corresponding to an independent Grassmann variable $\theta_i$. 
Starting from the central tensor, we start contracting adjacent tensors, using cyclic permutations of the indices to ensure that the largest index of the first tensor is adjacent to the smallest index of the second tensor. This process can be easily repeated until a tensor with two adjacent edges is encountered. 
We then contract from the edge with the smaller index (in clockwise orientation), which leaves a protruding double-edge that can be removed through self-contraction.

\begin{figure}[htb]
\centering
\includegraphics[width=0.25\textwidth]{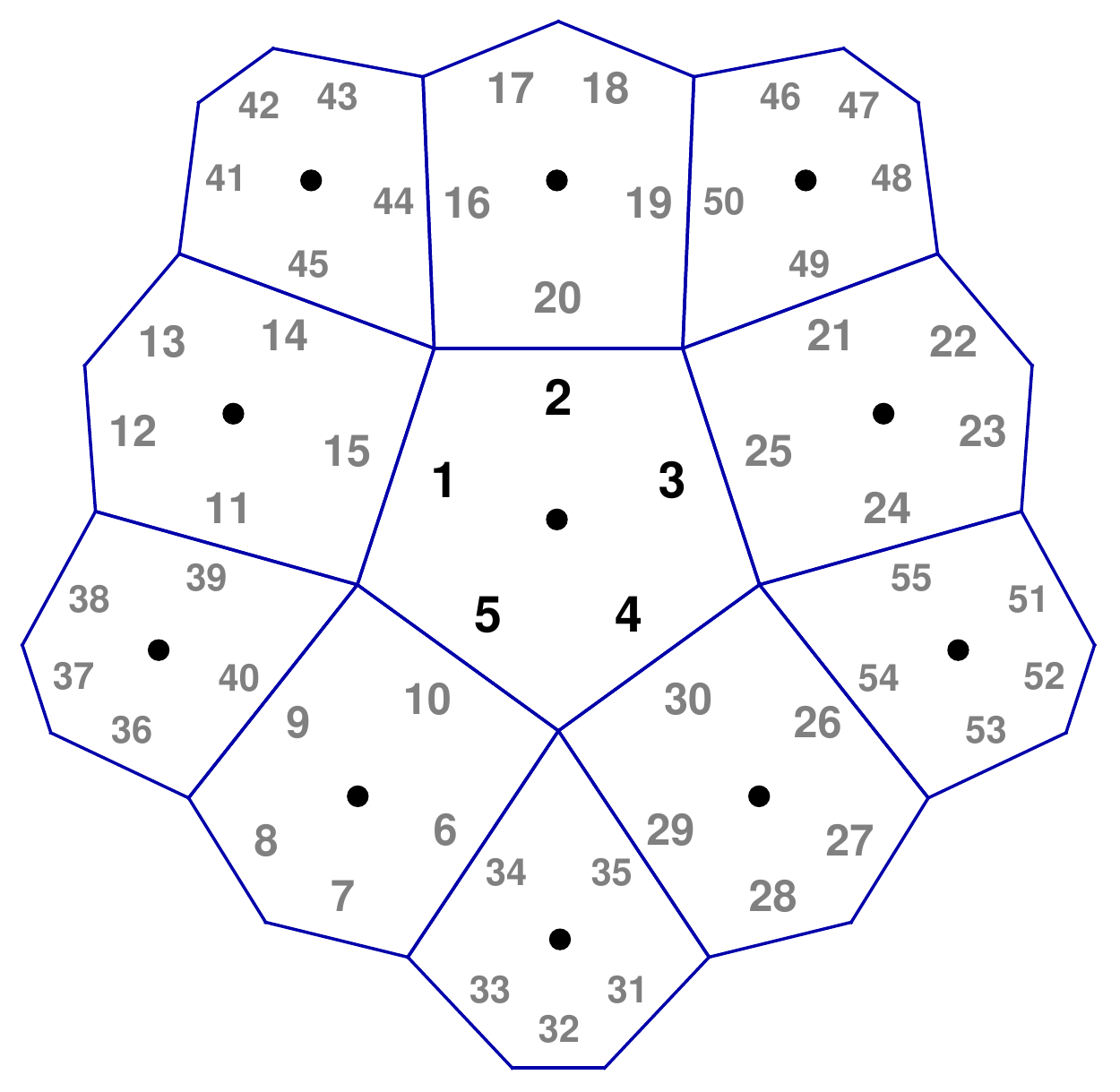}
\includegraphics[width=0.25\textwidth]{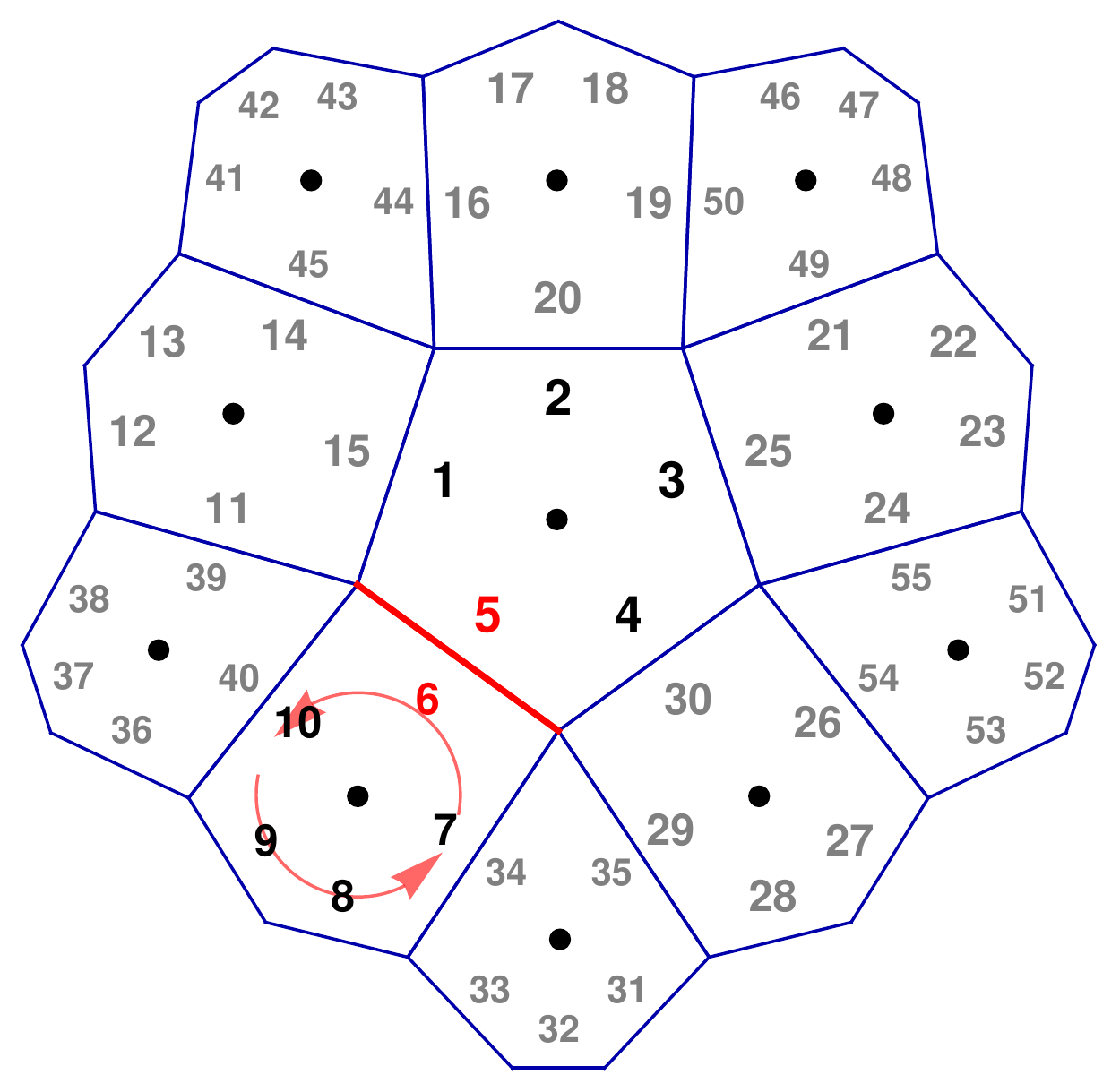}
\includegraphics[width=0.25\textwidth]{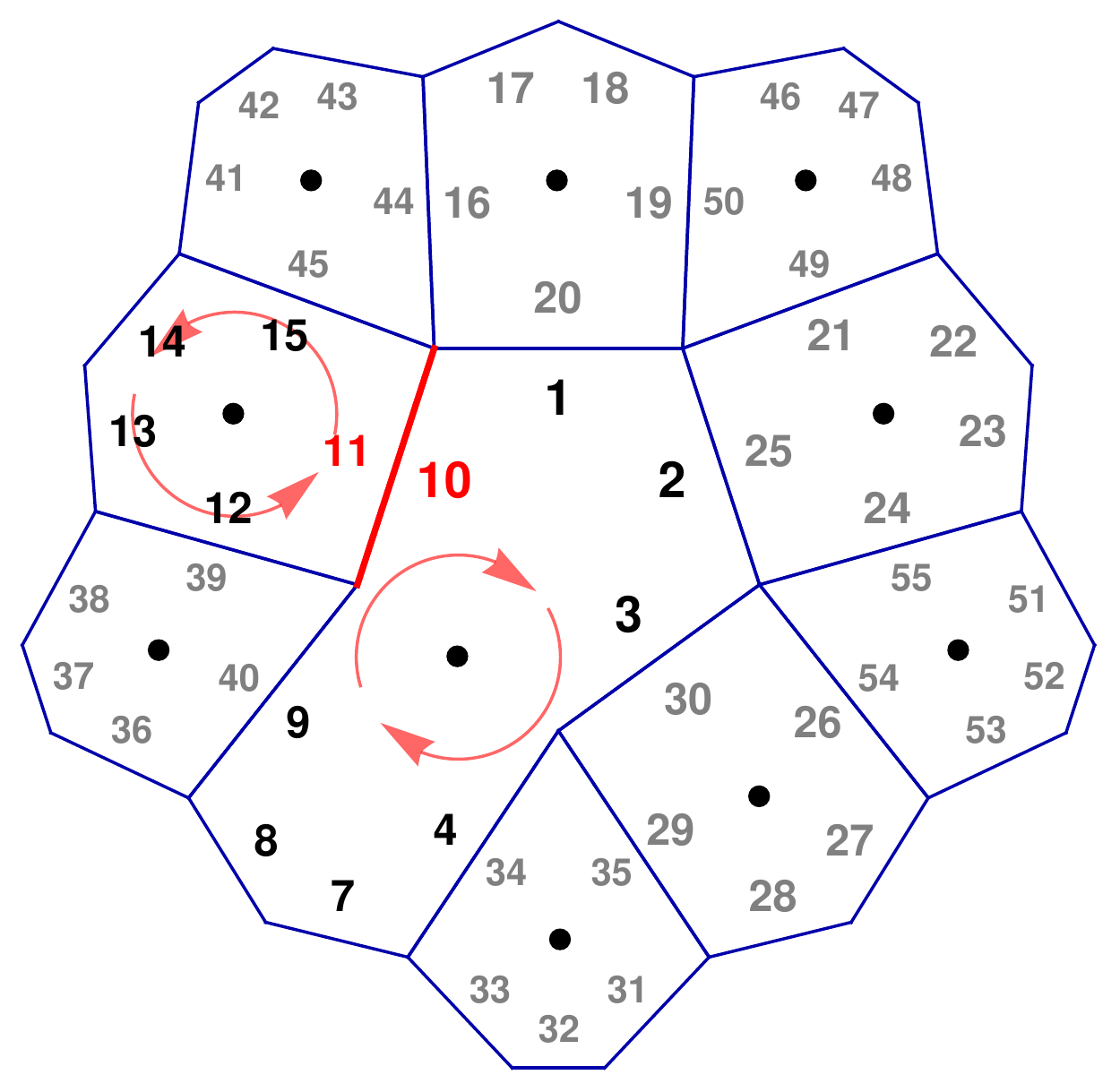}

\includegraphics[width=0.25\textwidth]{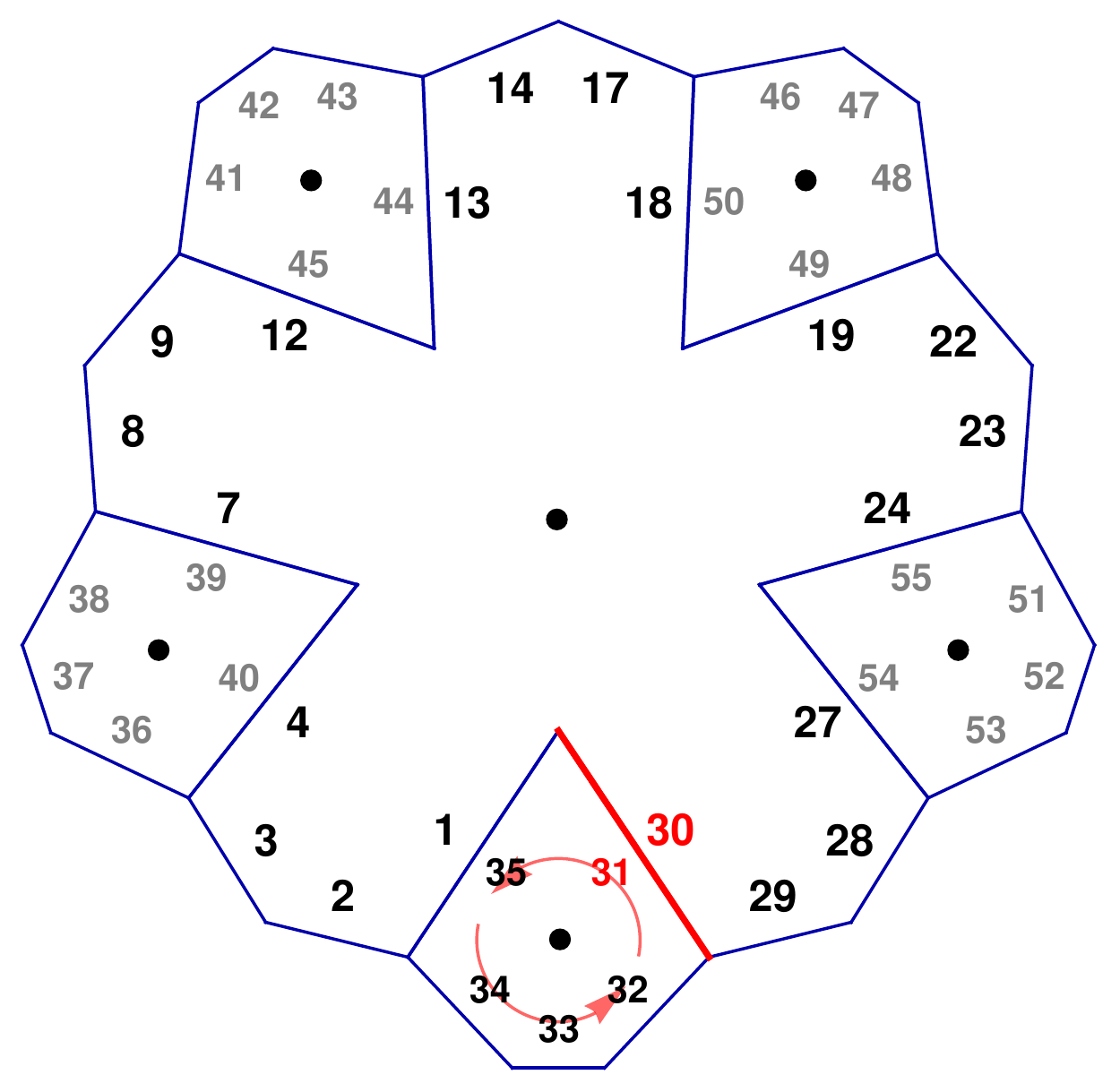}
\includegraphics[width=0.25\textwidth]{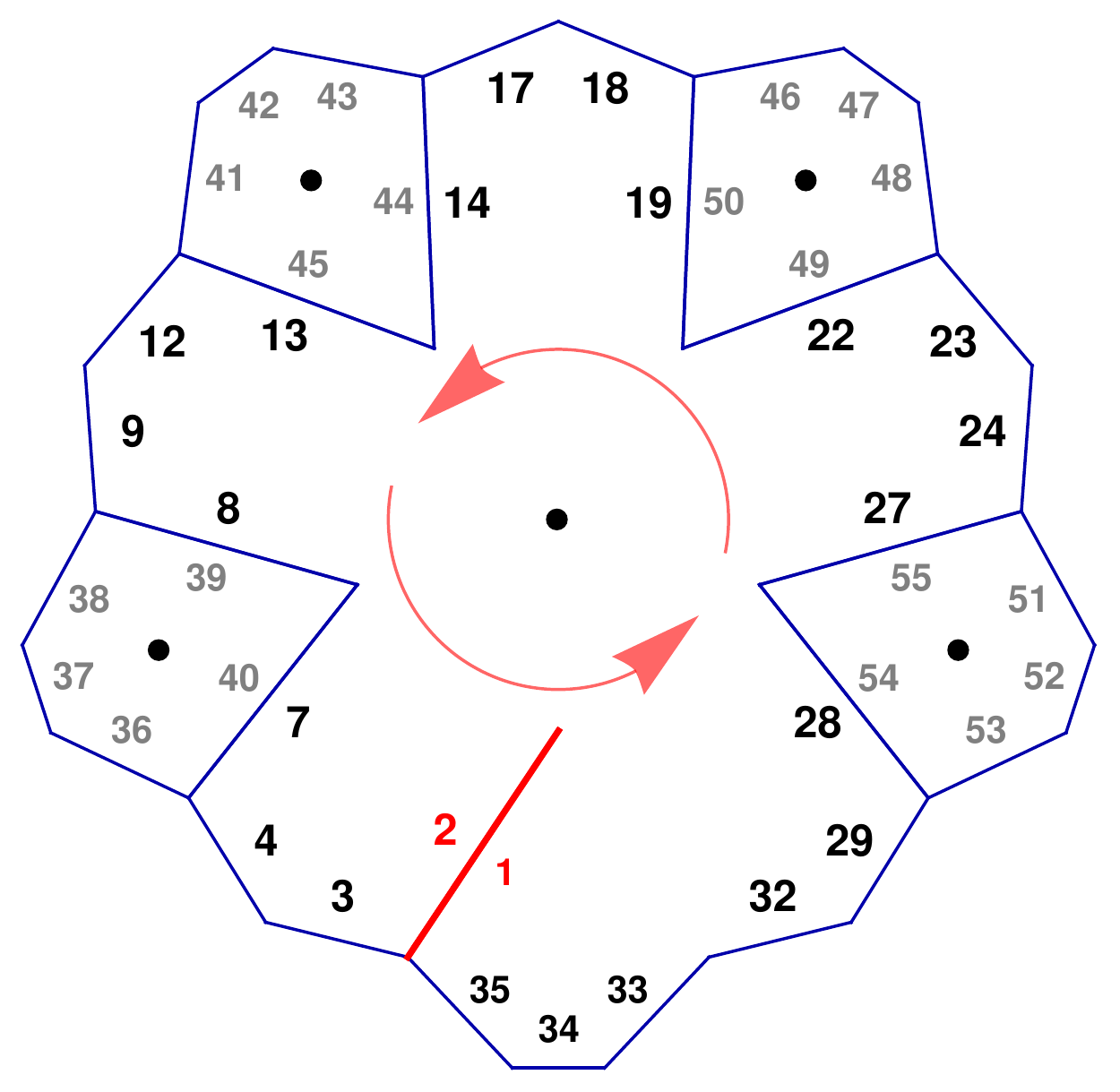}
\includegraphics[width=0.25\textwidth]{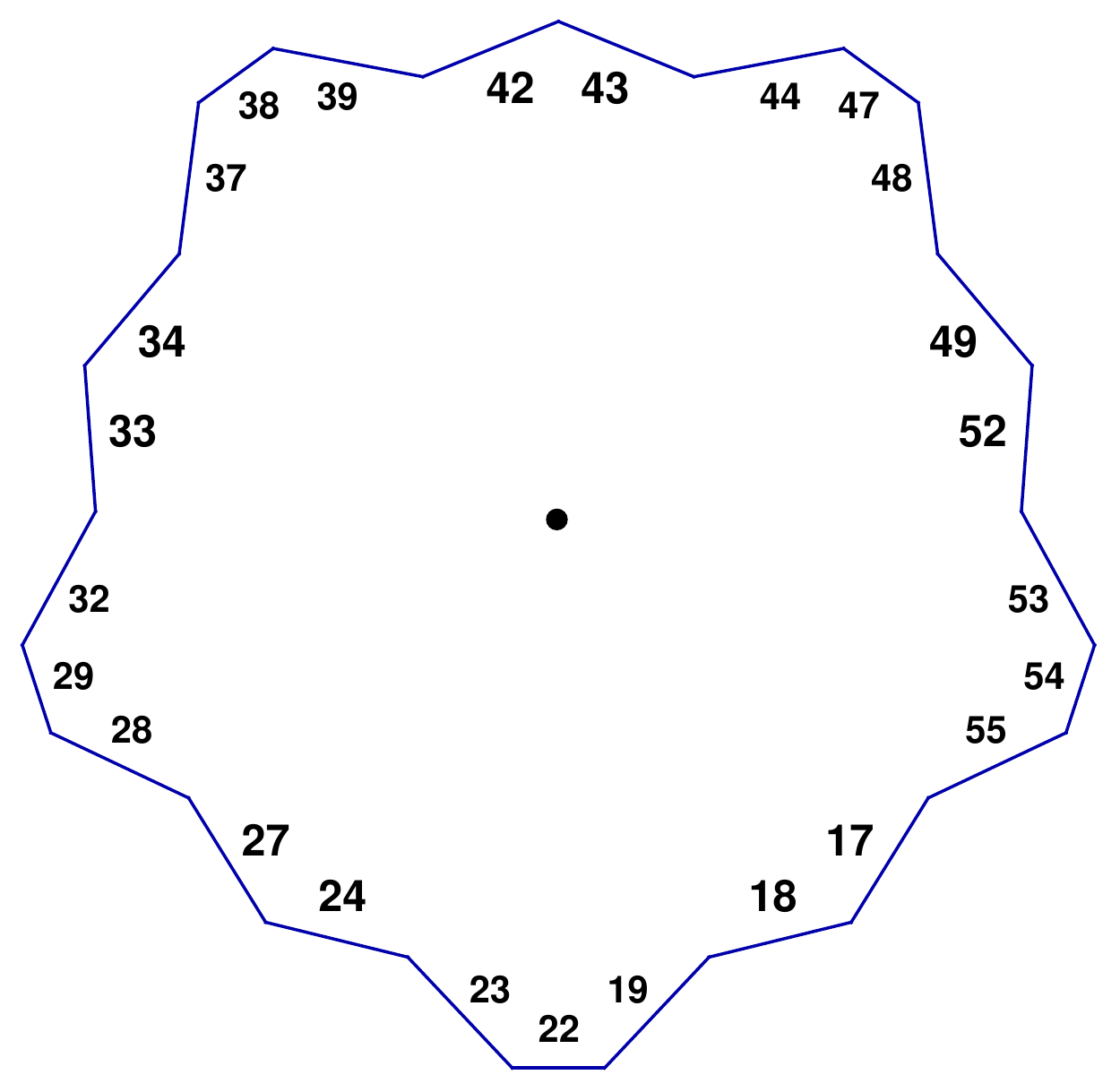}

\caption{Schematic contraction of 11 pentagons, using contractions over joint edges, self-contractions and cyclic permutations of indices. Contracted edges and indices are marked in red, while cyclic permutations are shown by circular arrows.}
\label{FIG_CONTR_STEPS}
\end{figure}

After all tensors are contracted, we are left with a boundary of edges whose indices are still clockwise-oriented, provided that our inital labeling of pentagon edges followed the order in which we contracted the respective tensors. We can think of the remaining indices as specifying our boundary sites, and the contracted indices being bulk sites that were ``integrated out''. 
Note that this contraction process is possible for planar graphs for which a Kasteleyn orientation is guaranteed to exist \cite{Kasteleyn1961}.

In principle, we have the freedom to cyclically permute the indices of each inital tensor. We can fix this freedom by using symmetry constraints on the generating matrix $A_{i,j}$. For anti-periodic boundary conditions, we require $A_{i,j}$ to be positive for $i>j$ and negative for $i<j$. The condition is retained for the full contracted state if we restrict ourselves to applying cyclic permutations only on the indices of the ``inner'' tensor from which we contract outwards, and affix the lowest index of each ``outer tensor'' to the edge over which it is first contracted. These conditions on $A_{i,j}$ allowed us to produce physical covariance matrices with $\Gamma_{i,j}>0$ for $i>j$, as well.

For periodic boundary conditions, $A_{i,j}$ should be positive for $|i-j| < L/2$ (with number of indices $L$) and negative otherwise. This can be achieved using the same index labeling rules, but choosing \textsl{only} the central tensor's generating matrix to produce a locally anti-periodic state, while keeping the states corresponding to all other local tensors as periodic.

Periodic boundary conditions are less convenient for numerical studies, as overlap between positive and negative correlations can occur for networks of finite size. For this reason, we have focused on anti-periodic boundary conditions in our work. In the infinite-size limit, of course, both choices of boundary conditions should lead to the same physical properties of the boundary states.

\section{Explicit generating matrices and numerical results}
\label{sec:app_explicit}
\subsection{Regular tilings}

We start by discussing the correlations achieved for regular tilings.
We can produce the boundary theory with $\sim 1/d$ falloff using a regular $\lbrace 3,k \rbrace$ tiling with $k \ge 6$.
The critical parameter $a=a_\text{crit}$ for each $k$ can be found by maximizing mean long-range correlations 
$({2}/{L} )\sum_{k=1}^{L/2} | \Gamma_{k,k+L/2} |$ in the covariance matrix. 
For $k>6$, the tilings can be embedded into the Poincar\'e disk with metric
\begin{equation}
\text{d}s^2 = 4\, \frac{\text{d}r^2 + r^2 \text{d}\phi^2}{(1-r^2)^2} \text{ ,}
\end{equation}
using polar coordinates $(r,\phi)$ with $0 \leq r < 1$ and $0 \leq \phi < 2 \pi$. 
As the Poincar\'e disk represents an infinite volume, we cut off our tilings at a radius $r=r_c$. For the flat case $k=6$, we simply cut off at Euclidean distance $d_c$ (with all edges set to unit length). We find the following values of $a_\text{crit}$ for a given $d$ ($k=6$) or $r_c$ ($k>6$):
\begin{center}
\begin{tabular}{@{\hspace{0.5cm}} c @{\hspace{0.5cm}} | @{\hspace{0.5cm}} c @{\hspace{0.5cm}} || @{\hspace{0.5cm}} c  @{\hspace{0.5cm}} | @{\hspace{0.5cm}} c @{\hspace{1cm}} c @{\hspace{1cm}} c @{\hspace{1cm}} c @{\hspace{1cm}} c @{\hspace{1cm}} c @{\hspace{0.5cm}}}
$d_c$ & $k=6$: & $r_c$  & $k=7$: & $k=8$: & $k=9$: & $k=10$: & $k=11$: & $k=12$: \\
 \hline\hline
10 & 0.5779 & 0.95 & 0.6063 & 0.6213 & 0.6378 & 0.6482 & 0.6529 & 0.6650 \\
15 & 0.5804 & 0.98 & 0.6051 & 0.6203 & 0.6404 & 0.6447 & 0.6618 & 0.6639 \\
20 & 0.5806 & 0.99 & 0.6082 & 0.6244 & 0.6393 & 0.6502 & 0.6575 & 0.6661 
\end{tabular}
\end{center}

Note that increasing $k$ leads to a larger $a_c$.
We argue that this may compensate for the ``leaking'' of correlations into the higher-curvature bulk. 
While in principle it is possible to extend this reasoning to positive-curvature (spherical) tilings, the largest triangular tiling $\lbrace 3, 5 \rbrace$ corresponds to an icosahedron with only 20 triangles. Thus, no proper choice of an asymptotic boundary can be made.

\subsection{MERA}
We now turn to discussing how the MERA framework can be related to our approach.
The MERA tensor network consists of two types of tensors, isometries and disentanglers with three and four legs, respectively. Thus, the lattice for the equivalent matchgate tensor consists of triangles and quadrilaterals. In the matchgate setting, the MERA tensors are thus fully specified by a $3 \times 3$ generating matrix $S$ and a $4 \times 4$ matrix $B$, corresponding to isometries and disentanglers, respectively. 
For norm-preserving tensors, i.e.\ unitary disentanglers and isometries, real generating matrices are restricted to the
components
%Mirror symmetry reduces the number of free parameters to five, giving us the generating matrices
%While in principle this leaves 9 free parameters, we find numerically that this can be reduced to three parameters $a,b,c \in \mathbb{R}$ which compose the generating matrices as
\begin{align} 
A &= \left(
\begin{array}{ccc}
\msp 0 &\msp \sqrt{1 + x^2} \cos\theta &\msp \sqrt{1 + x^2} \sin\theta \\ 
    -\sqrt{1 + x^2} \cos\theta &\msp 0 &\msp x \\
    -\sqrt{1 + x^2} \sin\theta &    -x &\msp 0 
\end{array}
\right) \text{ , }\\
B &= \left(
\begin{array}{cccc}
\msp 0 &\msp y &\msp \sqrt{1 + y^2} \cos\phi &\msp \sqrt{1 + y^2} \sin\phi \\
    -y &\msp 0 & - \sqrt{1 + y^2} \sin\phi &\msp \sqrt{1 + y^2} \cos\phi \\
    -\sqrt{1 + y^2} \cos\phi & \msp \sqrt{1 + y^2} \sin\phi &\msp 0 &\msp y \\
    -\sqrt{1 + y^2} \sin\phi & -\sqrt{1 + y^2} \cos\phi  &  -y &\msp 0	
\end{array}
\right) \text{ ,}
\end{align}
with $x,y \in \mathbb{R}$ and $\theta,\phi \in [0,2\pi]$.
These free parameters of the model can be set by numerically minimizing the ground-state energy of the translation-invariant Ising Hamiltonian \eqref{EQ_ISING_H}.
However, with these inputs we were unable to find boundary states that are any more translation-invariant than the regular tilings considered earlier. Instead, we consider a more generic ``matchgate MERA'' (mMERA) with three- and four-leg generating matrices 
\begin{align} 
\label{EQ_MERA_GEN_MAT}
A = \left(
\begin{array}{ccc}
\msp 0 &\msp a &\msp a \\  
    -a &\msp 0 &\msp b \\ 
    -a &    -b &\msp 0  
\end{array}
\right) \text{ , }\;
B = \left(
\begin{array}{cccc}
\msp 0 &\msp c &\msp e &\msp f \\ 
    -c &\msp 0 &\msp d &\msp e \\ 
    -e &    -d &\msp 0 &\msp c \\ 
    -f &    -e &    -c &\msp 0  
\end{array}
\right) \text{ ,}
\end{align}
with the parameters $a,b,\dots,f \in \mathbb{R}$. Now, again minimizing according to \eqref{EQ_ISING_H}, we find that numerical solutions obey the symmetries $c \approx e$ and $a \approx d \approx f$, thus leaving us with three free parameters to optimize. Intriguingly, these symmetries allow us to express the 4-leg ``disentanglers'' as contractions of a 3-leg tensor with its conjugate, visualized in Fig.\ \ref{FIG_MERA_TN_TO_MG}. 
While the individual tensors of our model are no longer norm-preserving, we show in the next section that for large networks, norm preservation can still be achieved. Note that while the usual MERA identities for isometries and disentanglers no longer hold, contractions of the mMERA are still efficient, owing to the matchgate setting.

\begin{figure}[htb]
\centering
\includegraphics[width=0.7\textwidth]{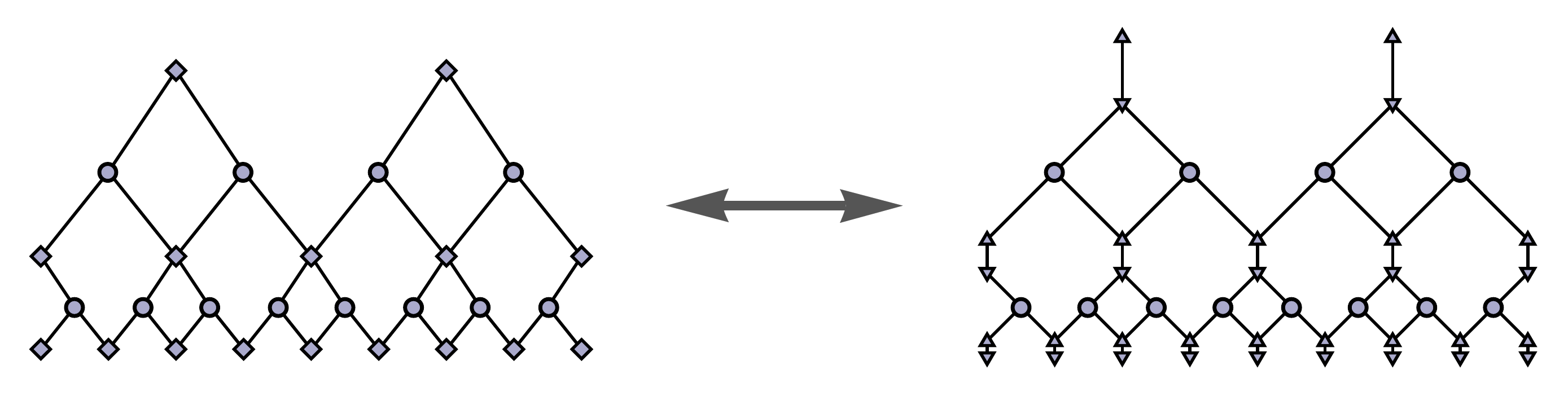}

\includegraphics[width=0.3\textwidth]{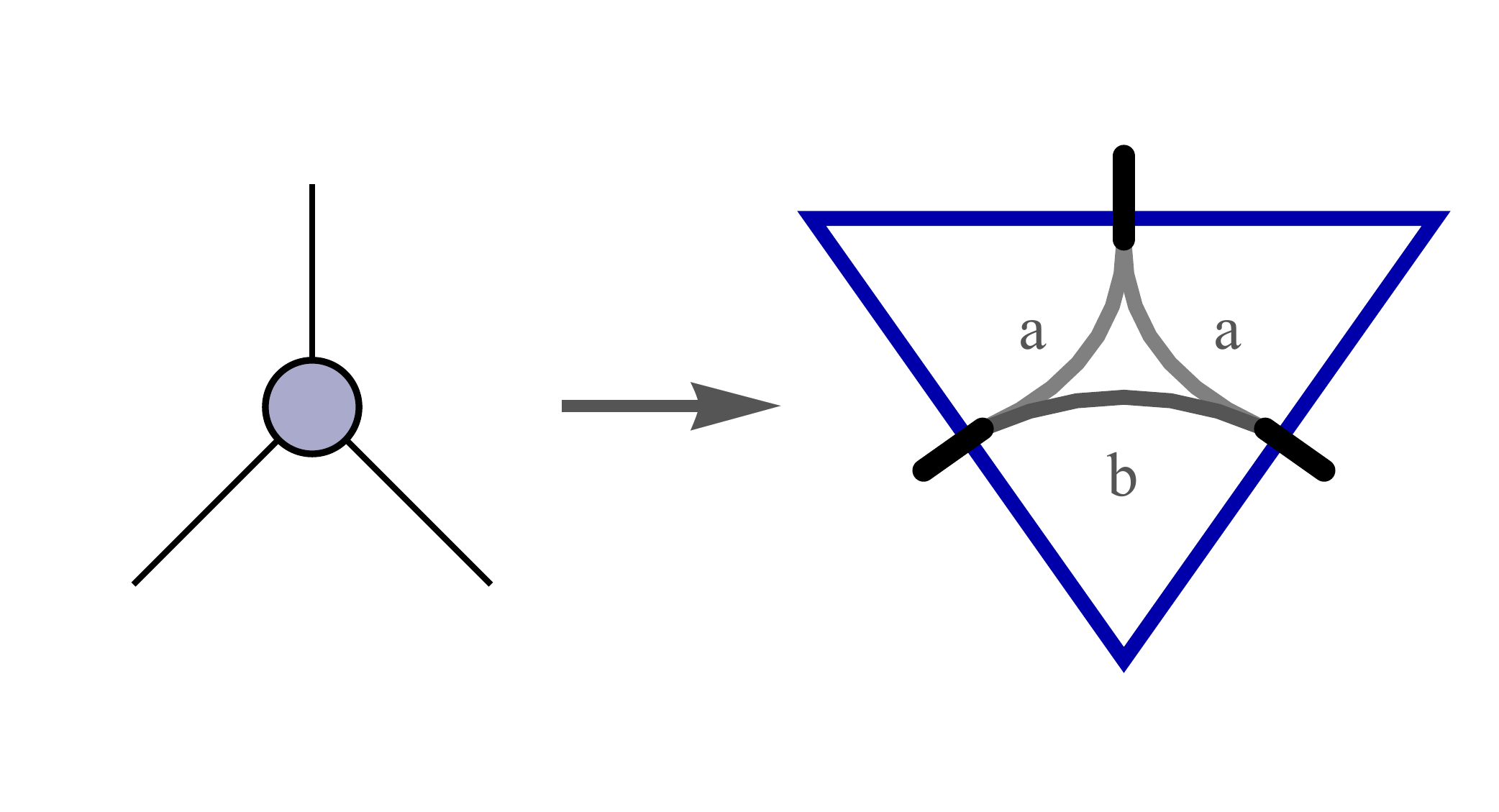}
\hspace{0.04\textwidth}
\includegraphics[width=0.3\textwidth]{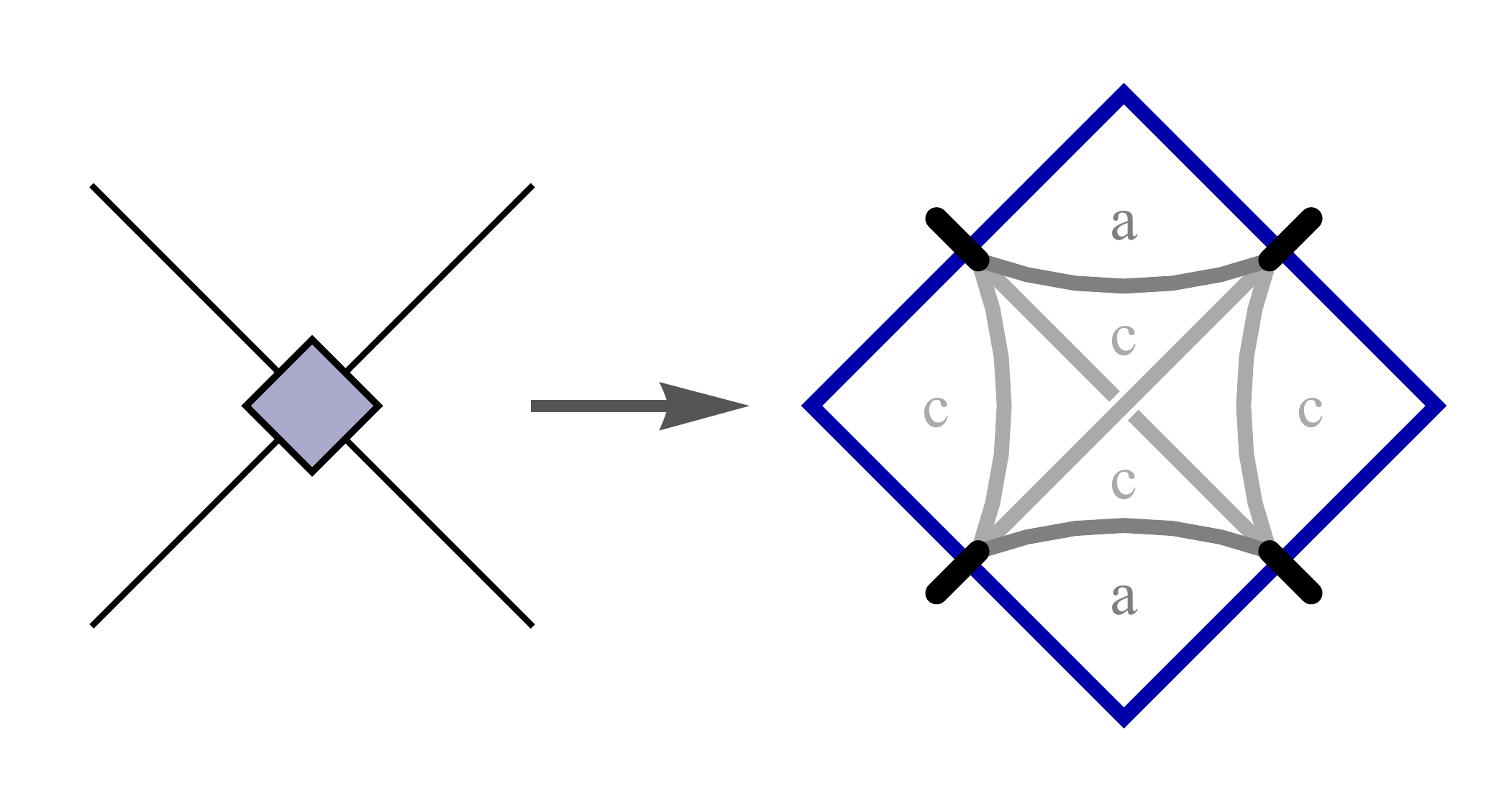}
\hspace{0.04\textwidth}
\includegraphics[width=0.3\textwidth]{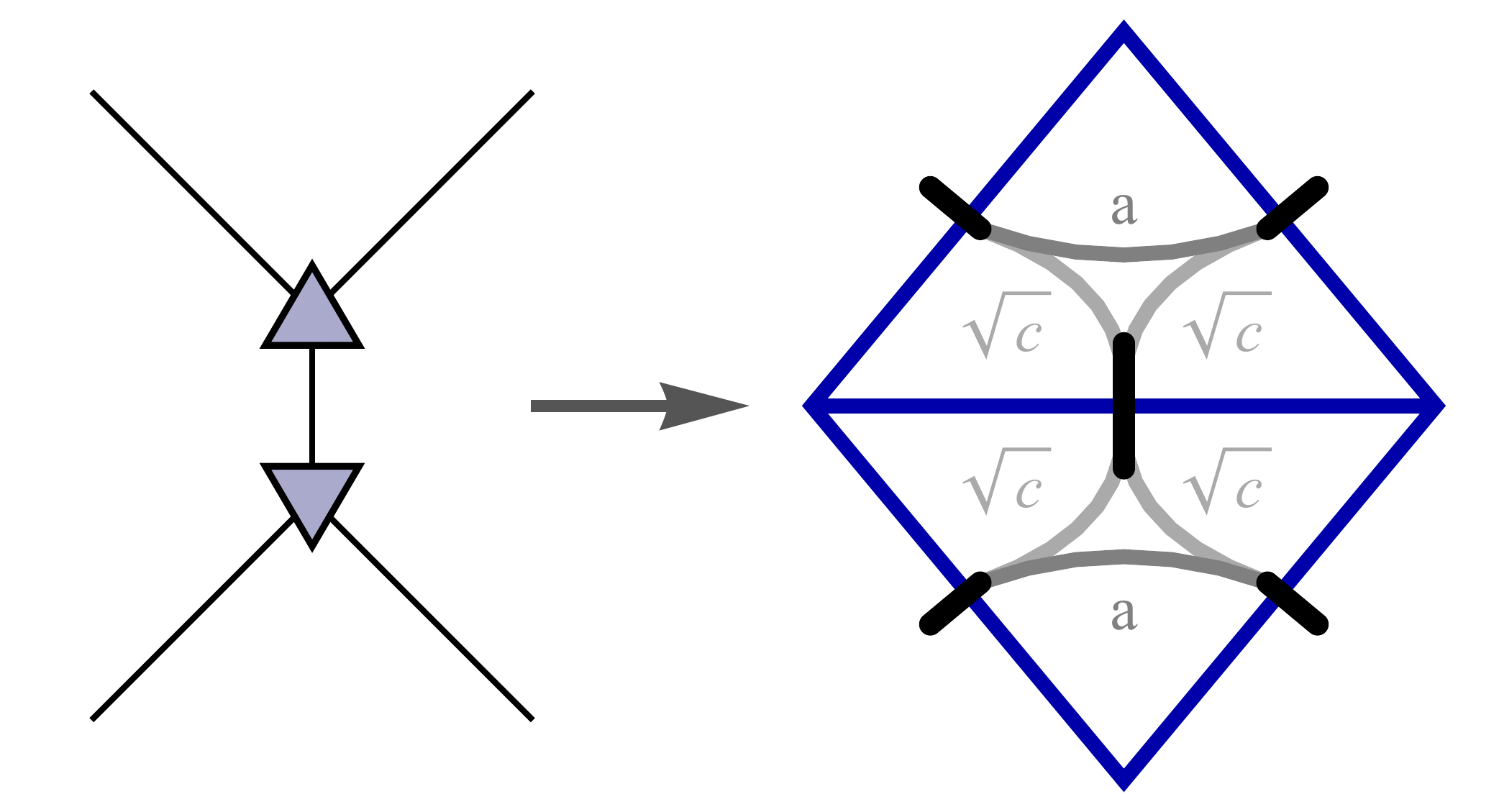}

\caption{\textsc{Top}: The standard MERA tensor network (left) in our numerical matchgate setting is equivalent to a network of purely 3-leg tensors (right). \textsc{Bottom}: Isometries, disentanglers, and triangulated disentanglers (from left to right) expressed as matchgate tensors. The free parameters $a,b,c$ fix the components of the generating matrices \eqref{EQ_MERA_GEN_MAT}.
}
\label{FIG_MERA_TN_TO_MG}
\end{figure}

The central 4-leg tensor of the MERA describes a CFT ground state on four sites, for which the generating matrix $A_0$ and normalization $c_0$ can be given explicitly:
\begin{equation}
A_0 = \left(
\begin{array}{cccc}
\msp 0   &\msp a_0 &\msp b_0 &\msp a_0 \\ 
    -a_0 &\msp 0   &\msp a_0 &\msp b_0 \\ 
    -b_0 &    -a_0 &\msp 0   &\msp a_0 \\ 
    -a_0 &    -b_0 &    -a_0 &\msp 0  
\end{array}
\right) \text{ ,} \quad
c_0=\frac{1}{\sqrt{1+4a_0^2-2b_0^2 + (2a_0^2 - b_0^2)^2}} \text{ ,}
\end{equation}
where the constants $a_0$ and $b_0$ are found by analytically minimizing \eqref{EQ_ISING_H}, yielding
\begin{equation}
a_0 = \sqrt{1 + 
\frac{1}{\sqrt{2}}} - 1 \approx 0.3066 \text{ ,} \quad
b_0 = \frac{68+8a_0+532a_0^2-616a_0^3-290a_0^4-58a_0^5}{43 + 16a_0 - 340 a_0^2 - 474 a_0^3 - 250 a_0^4 - 50 a_0^5} \approx 0.2346 \text{ .}
\end{equation}
All remaining tensors are numerically optimized within our three-parameter model. As shown in Fig.\ \ref{FIG_MERA_E_CONV}, the minimal energy density $\epsilon=\langle H \rangle / L$ converges quadratically with the number of boundary sites $L$. The optimal values for $a,b,c$ converge as well. At $L=1024$, those are given by $a = 0.6854$, $b = 0.5246$, and $c = 0.2172$, yielding a ground-state energy density $\epsilon_0 = -0.636533$ (decimals given up to convergent digits). The relative error with respect to the continuum solution $\epsilon_0 = 2/\pi$ is about $0.014 \%$. 
Note that this MERA model only has bond dimension $\chi = 2$, and that increasing $\chi$ would increase the size of the generating matrices and the number of free parameters, presumably allowing for even higher accuracy.

\begin{figure}[htb]
\centering
\includegraphics[width=0.32\textwidth]{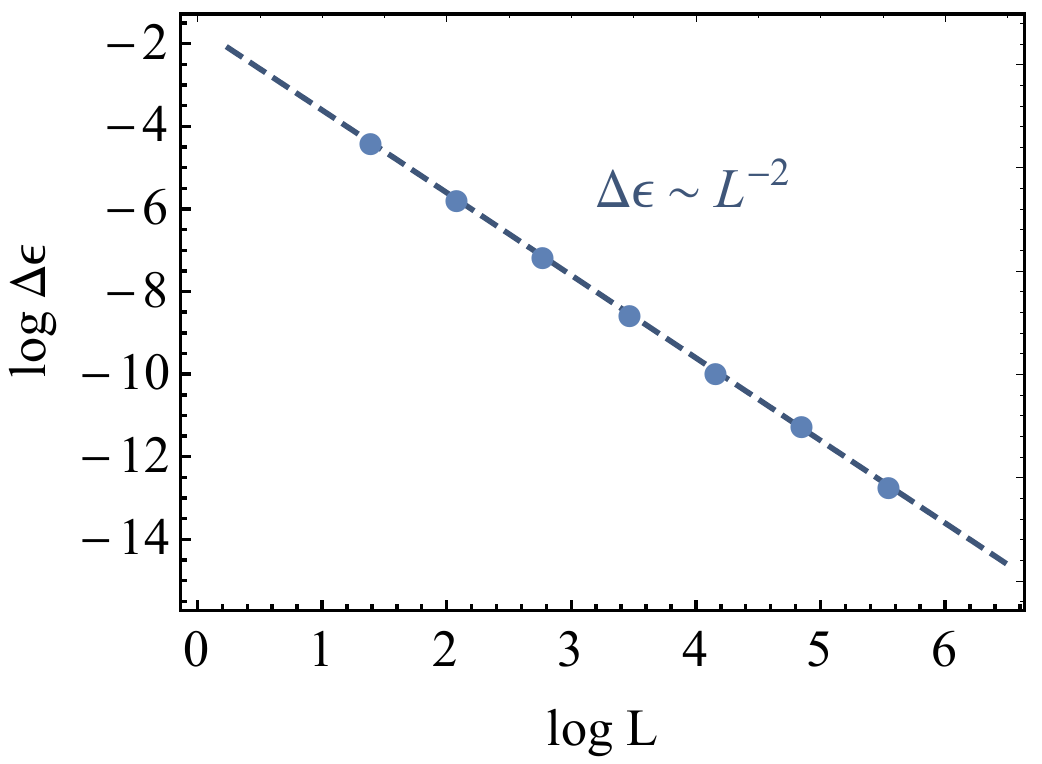}

\caption{Energy density $\epsilon$ of the MERA boundary state of $L$ sites, minimized with respect to the Ising Hamiltonian \eqref{EQ_ISING_H}. Plotted are the differences $\Delta\epsilon = \epsilon(2L)-\epsilon(L)$ between two MERA layers, with a quadratic falloff, fitted from the data, shown as a gray line.
}
\label{FIG_MERA_E_CONV}
\end{figure}

\subsection{Conformal data}
\label{ssec:app_conf_data}

In this subsection, we show how to obtain conformal data from the approach taken here.
The Ising theory at criticality can be described by a 1+1-dimensional conformal field theory (CFT) \cite{BigYellowBook1997}. The operator content of this theory is defined by its primary fields, whose scaling behavior is exactly known. This is because two-dimensional CFTs can be solved exactly, usually by mapping the space and time coordinates $(x,t)$ to a complex number $z=x+\i\,t$ and its complex conjugate $\bar{z}=x-\i\,t$. (Quasi-\/) primary fields $\phi(z)$ have associated \textsl{conformal weights} $h_\phi$ and $\bar{h}_\phi$, with correlations between different space-time points $z$ and $w$ being
given by
\begin{equation}
\langle \phi(z) \phi(w) \rangle = \frac{C_{\phi,\phi}}{(z-w)^{2 h_\phi} (\bar{z}-\bar{w})^{2\bar{h}_\phi}} \text{ .}
\end{equation}
The constant $C_{\phi,\phi}$ is not a fundamental CFT parameter, but determined by the normalization of $\phi$. As we are restricting ourselves to correlations on time-slices, we will find correlators of the form
\begin{equation}
\langle \phi(x) \phi(y) \rangle = \frac{C_{\phi,\phi}}{|x-y|^{2(h_\phi+\bar{h}_\phi)}} = \frac{C_{\phi,\phi}}{|x-y|^{2 \Delta_\phi}} \text{ ,}
\end{equation}
expressed in terms of the \textsl{scaling dimension} $\Delta_\phi = h_\phi + \bar{h}_\phi$. The three-point functions of primary fields $\phi, \chi$ and $\omega$ have the form
\begin{equation}
\label{EQ_CFT_THREE_POINT}
\langle \phi(x) \chi(y) \omega(z) \rangle 
= \frac{\sqrt{C_{\phi,\phi} C_{\chi,\chi} C_{\omega,\omega}}\; C_{\phi,\chi,\omega}}{|x-y|^{\Delta_\phi+\Delta_\chi-\Delta_\omega}\; |y-z|^{\Delta_\chi+\Delta_\omega-\Delta_\phi}\; |z-x|^{\Delta_\omega+\Delta_\phi-\Delta_\chi}} 
 \text{ ,}
\end{equation}
with the \textsl{structure constants} $C_{\phi,\chi,\omega}$ being fundamental CFT quantities. 

For the Ising CFT in two dimensions, there are three primary fields: The identity $\id$, the energy density $\epsilon$ and the spin (or ``order parameter'') $\sigma$. The Jordan-Wigner transformation gives us an alternative description in terms of the fermionic fields $\psi$ and $\bar{\psi}$. The corresponding scaling dimensions are the following.
\begin{center}
\begin{tabular}{@{\hspace{0.5cm}} c @{\hspace{0.5cm}} | @{\hspace{0.5cm}} c  @{\hspace{0.5cm}} @{\hspace{0.5cm}} c @{\hspace{1cm}} c @{\hspace{1cm}} c @{\hspace{1cm}} c @{\hspace{1cm}} c @{\hspace{0.5cm}}}
Field $\phi$ & $\id$ & $\epsilon$ & $\sigma$ & $\psi$ & $\bar{\psi}$ \\
 \hline
$h_\phi$       & 0 & 1/2 & 1/16 & 1/2 & 0 \\
$\bar{h}_\phi$ & 0 & 1/2 & 1/16 & 0   & 1/2 \\
$\Delta_\phi$  & 0 & 1   & 1/8  & 1/2 & 1/2
\end{tabular}
\end{center}
Furthermore, the structure constants in the spin sector are $C_{\sigma,\sigma,\id}=1$ and $C_{\sigma,\sigma,\epsilon}=\frac{1}{2}$.

For the Gaussian states produced by our matchgate tensor networks, all the information on correlators is stored in the Majorana covariance matrix with entries $\Gamma_{j,k}=\langle \tfrac{\i\,}{2} [\m_j,\m_k] \rangle$. Before calculating the scaling dimensions, let us first prove a useful identity regarding the covariance matrix of odd-pairing Hamiltonians of the form 
\begin{equation}
\label{EQ_HOP}
H_\text{OP}=\i\,\sum_{k,d} J_{k,d}\, \m_k \m_{k+2d-1} \text{ ,}
\end{equation}
with the couplings $J_{k,d} \in \mathbb{R}$ between Majorana sites at odd distance. In particular, this includes Hamiltonians with only nearest-neighbor Majorana coupling with $J_{k,d} = \delta_{d,1}\, J_k$, such as the Ising model considered above.
\begin{lemma}[Covariance matrices of odd-pairing Majorana Hamiltonians]\label{lm:COVMATRIX}
Eigenstates of Hamiltonians $H_\text{OP}$ of the form \eqref{EQ_HOP} are described by a covariance matrix $\Gamma$ whose entries $\Gamma_{j,k}$ 
vanish for even $j \spl k$.
\end{lemma}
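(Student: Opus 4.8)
The plan is to encode the ``odd-distance'' coupling condition as a single algebraic symmetry of the single-particle Hamiltonian matrix and then transfer that symmetry to the covariance matrix. First I would write $H_\text{OP}=\frac{\i}{4}\sum_{j,k=1}^{2N} h_{j,k}\m_j\m_k$ with a real antisymmetric coefficient matrix $h=-h^\top\in\RR^{2N\times 2N}$; the restriction to couplings at odd distance $k+2d-1$ means precisely that $h_{j,k}=0$ whenever $j\spl k$ is even, i.e.\ $h$ only connects Majorana indices of opposite parity. Introducing the diagonal sign matrix $S=\mathrm{diag}((-1)^1,\dots,(-1)^{2N})$, this bipartite structure is equivalent to the compact relation $ShS=-h$ (equivalently $Sh=-hS$), since conjugation by $S$ multiplies the $(j,k)$ entry of any matrix by $(-1)^{j+k}$.

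Next I would invoke the standard characterization of Gaussian eigenstates: a pure Gaussian state with covariance matrix $\Gamma$ (real, antisymmetric, $\Gamma^2=-\id$) is an eigenstate of $H_\text{OP}$ exactly when $[\Gamma,h]=0$. The target then becomes purely linear-algebraic: show that $ShS=-h$ together with $[\Gamma,h]=0$ forces the same bipartite structure on $\Gamma$, namely $S\Gamma S=-\Gamma$. Once this is established, reading off matrix entries gives $(-1)^{j+k}\Gamma_{j,k}=-\Gamma_{j,k}$, so $\Gamma_{j,k}=0$ whenever $j\spl k$ is even, which is exactly the assertion.

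To prove $S\Gamma S=-\Gamma$ I would simultaneously diagonalize the two commuting Hermitian matrices $\i h$ and $\i\Gamma$. Because $ShS=-h$, the eigenvectors of $\i h$ come in pairs: if $\i h\,\ket{k}=\lambda_k\ket{k}$ with $\lambda_k>0$, then $\ket{\bar k}:=S\ket{k}$ satisfies $\i h\,\ket{\bar k}=-\lambda_k\ket{\bar k}$; and since $h$ is real, $\ket{k}^*$ is also a $(-\lambda_k)$-eigenvector, hence proportional to $\ket{\bar k}$. Now $\i\Gamma$ has eigenvalues $\pm1$ (from $\Gamma^2=-\id$) and is diagonal in this basis. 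The reality of $\Gamma$ is the crucial ingredient: conjugating $\i\Gamma\ket{k}=s\ket{k}$ ($s=\pm1$) and using $(\i\Gamma)^*=-\i\Gamma$ yields $\i\Gamma\ket{\bar k}=-s\ket{\bar k}$, so the two members of each pair carry opposite $\i\Gamma$-eigenvalues. Since $S$ merely swaps $\ket{k}\leftrightarrow\ket{\bar k}$, it follows that $S(\i\Gamma)S=-\i\Gamma$ on every pair — and this holds for either choice $s=\pm1$, i.e.\ for every occupation pattern and hence every Gaussian eigenstate, not only the ground state.

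The main obstacle is conceptual rather than computational: the natural symmetry $\hat S$ implementing $\m_j\mapsto(-1)^j\m_j$ satisfies $\hat S H_\text{OP}\hat S^\dagger=-H_\text{OP}$, so it maps an energy-$E$ eigenstate to an energy-$(-E)$ one and does \emph{not} fix individual eigenstates; a naive ``invariant state'' argument therefore fails. What rescues the proof is that the reality constraint on $\Gamma$ pins down the relative sign of the $\i\Gamma$-eigenvalues within each $S$-paired block, forcing $S\Gamma S=-\Gamma$ regardless of which state in the $\pm E$ tower one picks. The only remaining care concerns degenerate spectra of $\i h$, in particular zero modes ($\lambda_k=0$): there one either selects the paired eigenbasis within the degenerate subspace compatibly with both $S$ and complex conjugation, or argues by continuity from the non-degenerate case, neither of which affects the conclusion.
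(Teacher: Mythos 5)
Your argument is correct in its essentials but follows a genuinely different route from the paper's. The paper works at the many-body level: it first notes that $H_\text{OP}$, rewritten in creation/annihilation operators, is invariant under complex conjugation in the occupation basis, so its eigenstates can be taken with real amplitudes; it then observes that for even $j\spl k$ the Hermitian operator $\tfrac{\i}{2}[\m_j,\m_k]$ has purely imaginary occupation-basis matrix elements, whence its expectation value in a real-amplitude state is simultaneously real and imaginary, hence zero. You instead work entirely at the single-particle level, packaging the odd-pairing condition into the sublattice symmetry $ShS=-h$ and deriving $S\Gamma S=-\Gamma$ from $[\Gamma,h]=0$ together with the reality of $h$ and $\Gamma$; this is more structural and makes the underlying chiral symmetry explicit. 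The trade-off is scope: your proof requires the eigenstate to be pure Gaussian (you use $\Gamma^2=-\id$ and the stationarity criterion $[\Gamma,h]=0$), whereas the paper's argument applies to the covariance matrix of an arbitrary real-amplitude eigenstate, Gaussian or not --- relevant because degenerate eigenspaces of a quadratic Hamiltonian generically contain non-Gaussian eigenstates. The degeneracy caveat you flag is genuine but is equally present in the paper: the step ``if the spectrum is degenerate then we can also choose real eigenstates'' only shows that a real basis of each eigenspace exists, not that every eigenstate is real up to phase, and in the extreme case $J_{k,d}\equiv 0$ the statement actually fails (e.g.\ the Gaussian state pairing $\m_1$ with $\m_3$ and $\m_2$ with $\m_4$ is an eigenstate of $H=0$ with $\Gamma_{1,3}\neq 0$), so continuity alone cannot repair it; a non-degeneracy assumption, or the restriction to real eigenstates, is implicit in both arguments. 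Within that shared assumption your derivation is sound.
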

\begin{proof}
Consider an eigenstate $\ket\psi =\sum_{x\in\{0,1\}^n}\mathcal T(x)\ket x$ of $H_\text{OP}$ with eigenenergy $E$.
We will first prove that $\mathcal T(x)\in \mathbb R$ for all $x\in\{0,1\}^n$.
We first note that we can write 
\begin{align}
H_\text{OP} &= \i\, \sum_{i,d} \left( J_{2i-1,d}\, \m_{2i-1}\m_{2i+2d-2} + J_{2i,d}\, \m_{2i}\m_{2i+2d-1} \right) \nonumber\\
&= \sum_{i,d} \left( J_{2i-1,d} (\fe_i + \fd_i)(\fe_{i+d-1} - \fd_{i+d-1}) + J_{2i,d} (\fe_i-\fd_i)(\fe_{i+d}+\fd_{i+d}) \right) \text{ ,}
\end{align} which implies that the Hamiltonian is not only Hermitian but also invariant under complex conjugation $H_\text{OP}^* = H_\text{OP}$.
We now decompose our eigenstate into its real and imaginary part denoted by $\ket\psi = \Re[\ket\psi]+\i\,\Im[\ket\psi]$.
The eigenequation reads
\begin{align}
  H_\text{OP}\ket \psi = E\ket\psi
\end{align}
and its complex conjugate can be expressed as
\begin{align}
 (H_\text{OP}\ket \psi)^* =  H_\text{OP}\ket \psi^* = H_\text{OP}\, \Re[\ket\psi]-\i\,H_\text{OP}\,\Im[\ket\psi] \overset{!}{=} E\, \Re[\ket\psi]-\i\, E\,\Im[\ket\psi]
\end{align}
where the only difference to the original equation is the minus sign.
Adding and subtracting these two equations yields
\begin{align}
    H_\text{OP}\, \Re[\ket\psi] = E\,\Re[\ket\psi]
\end{align}
and
\begin{align}
    H_\text{OP}\, \Im[\ket\psi] = E\,\Im[\ket\psi]
\end{align}
which means that $\Re[\ket\psi]$ and $\Im[\ket\psi]$ are both eigenvectors.
If the spectrum is non-degenerate then they are collinear and hence we can assume that $\ket\psi ^*=\ket\psi$ up to a phase.
If the spectrum is degenerate then we can also choose real eigenstates because $\text{span}(\ket\psi,\ket\psi^*)=\text{span}(\Re[\ket\psi],\Im[\ket\psi])$.
In the context of matchgates, this means that all eigenstates are expressed by real generating matrices, which are therefore a suitable ansatz for ground states of such Hamiltonians.

Next, we show that the matrix elements of the covariance matrix vanish for even $j \spl k$.
For $j \seq k$ this is true by definition, so we assume $j \sneq k$.
For even $j$ and $k$ we find $\frac{\i\,}{2} [\m_{j},\m_{k}]=\i\,\m_{j}\m_{k}=-\i\,(\fe_{j/2}-\fd_{j/2})(\fe_{k/2}-\fd_{k/2})$
while for odd $j$ and $k$ we have $\frac{\i\,}{2} [\m_{j},\m_{k}]=\i\, \m_{j} \m_{k}=\i\, (\fe_{(j+1)/2}+\fd_{(j+1)/2})(\fe_{(k+1)/2}+\fd_{(k+1)/2})$.
Either way, these Hermitian operators have purely imaginary coefficients in terms of creation and annihilation operators.
Evaluated in a state with real amplitudes, as shown above, the expectation value can only be imaginary. As all observables are real, it must therefore vanish altogether.
\end{proof}

Now let us relate the covariance matrix entries $\Gamma_{j,k}$ to the primary fields. By construction of our covariance matrix in Section \ref{sec:app_conversion}, $\braket{\psi}{\psi}=1$ and thus the identity $\id$ does not scale. However, the normalization factor $Z$ in \eqref{EQ_PSI_NORM} can in principle scale with the size of the contracted network. To ensure normalization, we have to act on each of the $N_T$ contracted tensors with a scaling factor $f=Z^{-1/(2 N_T)}$. We find that this factor $f$ converges for large systems, ensuring $\Delta_\id = 0$. Explicitly, $f_{\lbrace 3,6 \rbrace} \approx 0.972$ and $f_{\lbrace 3,7 \rbrace} \approx 0.941$ for the regular tilings and $f_\text{mMERA} \approx 0.959$ for the mMERA. 

We identify the fermionic fields $\psi$ and $\bar{\psi}$ with physical operators $\psi_k:=  \fe_k=\frac{1}{2} (\m_{2k-1} + \i\, \m_{2k})$ and $\bar{\psi}_k := \i\, \fd_k=\frac{1}{2} (\i\, \m_{2k-1} + \m_{2k})$. We then find that
\begin{equation}
\langle \psi_j \psi_k \rangle =  \langle \bar{\psi}_j \bar{\psi}_k \rangle = \frac{1}{4} \left( \Gamma_{2j,2k-1} + \Gamma_{2j-1,2k} \right) \text{ .}
\end{equation}
Note that we have used Lemma \ref{lm:COVMATRIX} to simplify the result. As we are considering Gaussian states, $\langle \psi_k \rangle=\langle \bar{\psi}_k \rangle=0$ for any $k$, so we do not have to consider expectation values of the individual fields.
Next, we compute the energy density $\epsilon$. On site $k$, we simply consider the local operator
 $\epsilon_k := \i\, \psi_k \bar{\psi}_k = \frac{\i\,}{2} \m_{2k-1}\m_{2k}$. Using Wick's theorem, the two-point functions follow as
\footnote{With our definition, $\langle \epsilon \rangle \neq 0$, so we need to subtract the field's expectation value, equivalent to using a field $\epsilon^\prime = \epsilon - \langle\epsilon\rangle$.}
\begin{equation}
\langle \epsilon_j \epsilon_k \rangle - \langle \epsilon_j \rangle \langle \epsilon_k \rangle 
= \frac{1}{4} \left( -\langle \m_{2j-1}\m_{2j} \m_{2k-1}\m_{2k} \rangle + \langle \m_{2j-1}\m_{2j} \rangle \langle \m_{2k-1}\m_{2k} \rangle \right) = \frac{1}{4} \Gamma_{2j-1,2k} \Gamma_{2j,2k-1} \text{ .}
\end{equation}
The order $\sigma$ is a nonlocal operator in the Majorana picture but corresponds to a $\sigma^x$ operator\footnote{In fact, $\epsilon$ can be related to the $\sigma^z$ operator, which conveniently acts locally in terms of Majorana operators, as well.}
 in the spin picture, obtained through a Jordan-Wigner transformation. A two-point correlator of $\sigma^x_k$ at different sites $k$ corresponds to a chain of Majorana operators,
\begin{align}
\langle \sigma^x_k \sigma^x_{k+1} \rangle &= -\i\,\langle \m_{2k} \m_{2k+1} \rangle 
= - \Pf{
\begin{array}{ll}
\Gamma_{2k,2k}   & \Gamma_{2k,2k+1} \\
\Gamma_{2k+1,2k} & \Gamma_{2k+1,2k+1}
\end{array} 
} \text{ ,} \\
\langle \sigma^x_k \sigma^x_{k+2} \rangle &= - \langle \,\m_{2k} \m_{2k+1} \m_{2k+2} \m_{2k+3} \rangle 
= \Pf{
\begin{array}{llll}
\Gamma_{2k,2k}   & \Gamma_{2k,2k+1}   & \Gamma_{2k,2k+2}  & \Gamma_{2k,2k+3} \\
\Gamma_{2k+1,2k} & \Gamma_{2k+1,2k+1} & \Gamma_{2k+1,2k+2}& \Gamma_{2k+1,2k+3} \\
\Gamma_{2k+2,2k} & \Gamma_{2k+2,2k+1} & \Gamma_{2k+2,2k+2}& \Gamma_{2k+2,2k+3} \\
\Gamma_{2k+3,2k} & \Gamma_{2k+3,2k+1} & \Gamma_{2k+3,2k+2}& \Gamma_{2k+3,2k+3} \\
\end{array} 
}\text{ ,} \\
\langle \sigma^x_j \sigma^x_{k} \rangle &= (-\i)^{k-j}\, \langle \m_{2j} \m_{2j+1} \dots \m_{2k-2} \m_{2k-1} \rangle = (-1)^{k-j}\, \Pf{\Gamma_{|[2j,2k-1]}} \text{ .}
\end{align}
The absolute value of the Pfaffian is given by $|\Pf{M}| = \sqrt{\det{M}}$. Note that because $\sigma^x_k$ is an odd product of Majorana operators, $\langle \sigma^x_k \rangle=0$.

Additionally, we compute the structure constant $C_{\sigma,\sigma,\epsilon}$ from the corresponding three-point correlator:
\begin{align}
\label{EQ_CORR_SSE}
\langle \sigma^x_j \sigma^x_k \epsilon_l \rangle - \langle \sigma^x_j \sigma^x_k \rangle \langle \epsilon_l \rangle &= \frac{-1}{2} \left( (-\i)^{k-j+1} \langle \m_{2j} \m_{2j+1} \dots \m_{2k-2} \m_{2k-1} \m_{2l-1} \m_{2l} \rangle + (-1)^{k-j}\, \Pf{\Gamma_{|[2j,2k-1]}} \Gamma_{2l-1,2l} \right) \\
&= \frac{(-1)^{k-j}}{2} \left( \Pf{\Gamma_{|[2j,2k-1] \cap \{ 2l-1,2l \}}} - \Pf{\Gamma_{|[2j,2k-1]}} \Gamma_{2l-1,2l} \right) \text{ .}
\end{align}
In order to use this result to compute the value of $C_{\sigma,\sigma,\epsilon}$ as in \eqref{EQ_CFT_THREE_POINT}, we consider the special case $k-j=l-k=d$, for some integer distance $d$. We then expect a scaling
\begin{equation}
\label{EQ_CORR_SSE2}
\langle \sigma^x_j \sigma^x_{j+d} \epsilon_{j+2d} \rangle - \langle \sigma^x_j \sigma^x_{j+d} \rangle \langle \epsilon_{j+2d} \rangle = \frac{C_{\sigma,\sigma}\sqrt{C_{\epsilon,\epsilon}}\; C_{\sigma,\sigma,\epsilon}}{2^{\Delta_\epsilon} d^{\,2\Delta_\sigma + \Delta_\epsilon}}
 \text{ .}
\end{equation}

Using these tools for extracting two- and three-point correlators, we compute the scaling powers $p_\phi$ for the various fields $\phi$ by fitting the dependence of $\langle \phi_i \phi_{i+d} \rangle$ on distance $d$. The resulting graphs for $\phi \in \lbrace \psi, \epsilon, \sigma \rbrace$ are presented for the regular $\lbrace 3,6 \rbrace$ and $\lbrace 3,7 \rbrace$ tilings as well as for the mMERA tiling in figures \ref{FIG_CRIT_SCALING_REG_FLAT}, \ref{FIG_CRIT_SCALING_REG} and \ref{FIG_CRIT_SCALING_MERA}, respectively. We also compute $C_{\sigma,\sigma,\epsilon}$ with \eqref{EQ_CORR_SSE} and \eqref{EQ_CORR_SSE2}, using the scaling dimensions $\Delta_\sigma, \Delta_\epsilon$ and normalizations $C_{\sigma,\sigma},C_{\epsilon,\epsilon}$ from the previous fits as inputs. Furthermore, we compute the energy density $\epsilon_0=\langle H \rangle / L$ with respect to the Ising Hamiltonian \eqref{EQ_ISING_H}.
Note that the regular $\lbrace 3,6 \rbrace$ and $\lbrace 3,7 \rbrace$ tilings are not translation invariant, leading to irregularities on small scales and amplified finite-size effects. This also leads to larger deviations from the exact ground state energy density $\epsilon_0=-2/\pi$.

Finally, we can compute the \textsl{central charge} $c$ characterizing the CFT. This is achieved by considering the scaling of the entanglement entropy $S_A$ with the subsystem size $\ell=|A|$. We expect the exact result \eqref{EQ_CALABRESE_CARDY} for a critical theory, with $c=1/2$ for an Ising CFT. For a subsystem $A=[ k, k+\ell ]$, $S_A$ can be computed from the symplectic eigenvalue spectrum of the partial covariance matrix $\Gamma_{|A}$ \cite{Serafini:2003ke}. In detail, one performs an orthogonal transformation $\Gamma_{|A} = Q\, \tilde{\Gamma}_{|A}\, Q^T$ into the form
\begin{equation}
\tilde{\Gamma}_{|A} = \bigoplus_{i=1}^L \left(\begin{matrix}
  0 & \lambda_i \\
  -\lambda_i & 0
\end{matrix}\right) \text{ ,}
\end{equation}
which is most conveniently achieved using numerical Schur decomposition, and then reading off the entanglement entropy as
\begin{equation}
S_A = \sum_{i=1}^L \left( - \frac{1+\lambda_i}{2} \log \frac{1+\lambda_i}{2} - \frac{1-\lambda_i}{2} \log \frac{1-\lambda_i}{2} \right) \text{ .}
\end{equation}
Our combined results for scaling dimensions, the structure constant $C_{\sigma,\sigma,\epsilon}$, the ground state energy $\epsilon_0$ and the central charge $c$ are summarized in Table \ref{TBL_CFT_DATA} of the main text.

%We find that the CFT parameters are very close to their expected values for the Ising CFT and comparable to the analytical wavelet approach, confirming that both the regular and matchgate MERA tensor networks contain the Ising CFT in their parameter space. Recall, however, that the translation invariance is broken for boundary states of regular tilings, so local scaling behavior can deviate from the mean expectation values shown here.

\begin{figure}[p]
\centering
\includegraphics[width=0.25\textwidth]{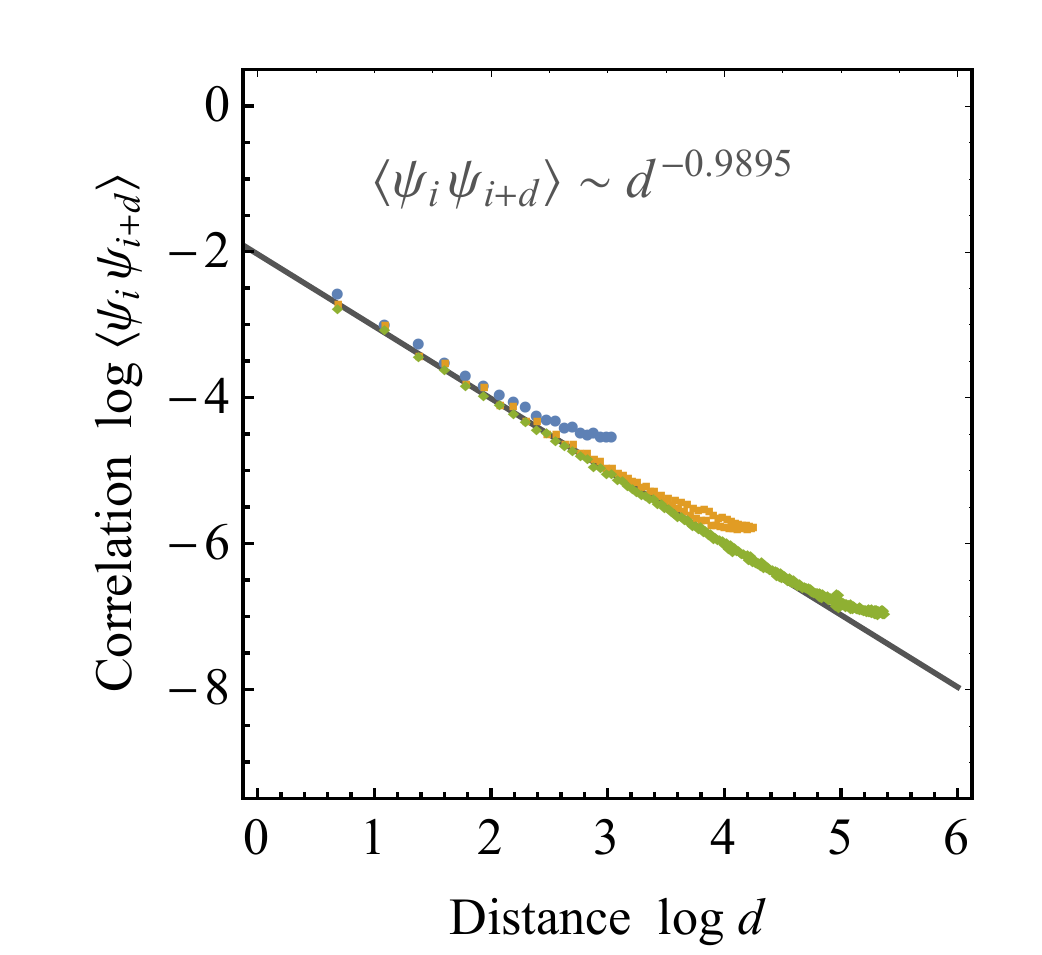}
\includegraphics[width=0.25\textwidth]{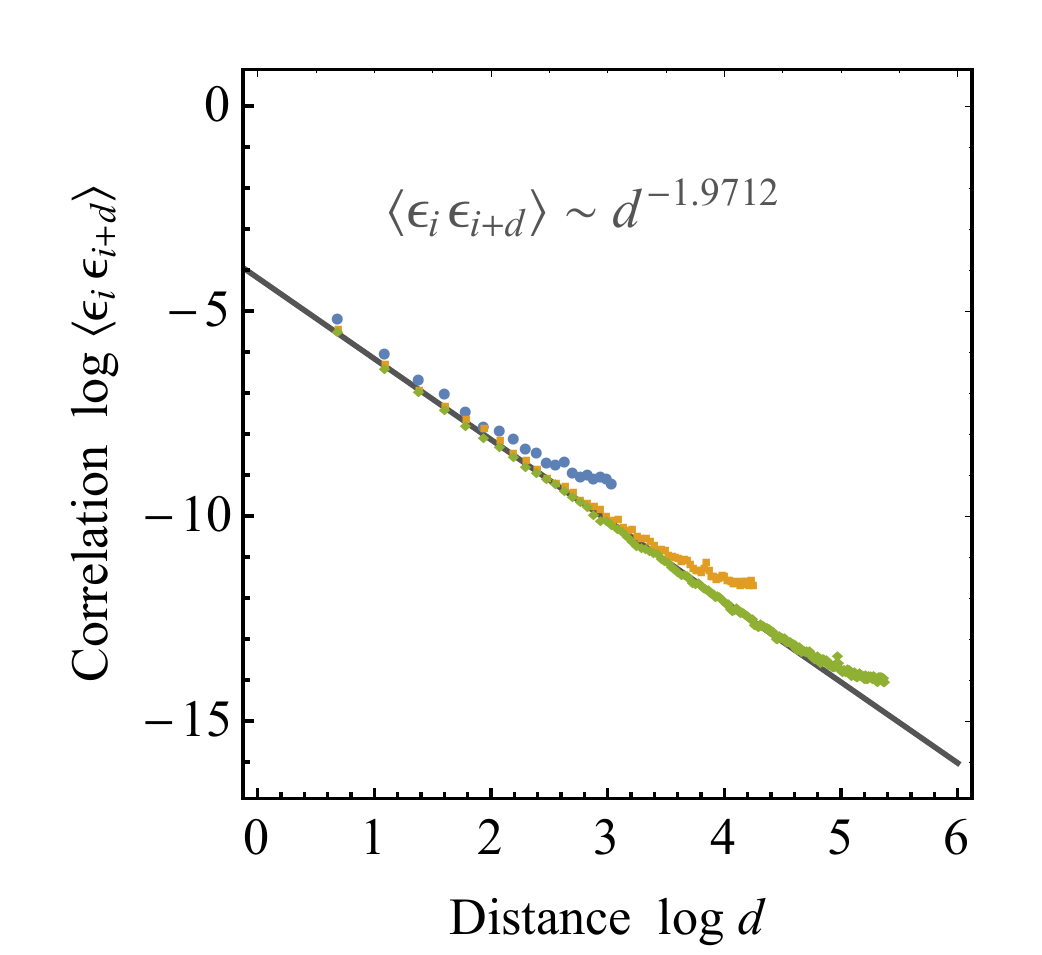}
\includegraphics[width=0.25\textwidth]{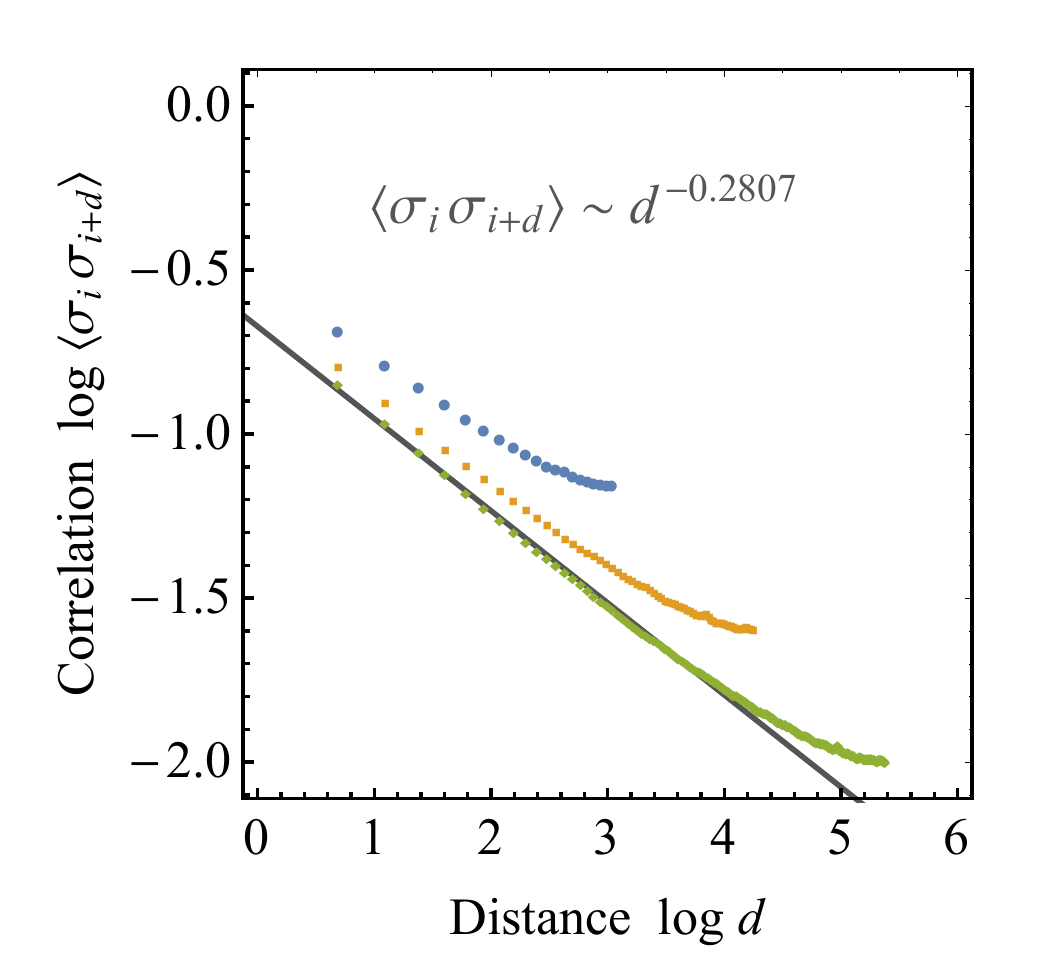}

\caption{Scaling of primary operators $\psi$, $\epsilon$ and $\sigma$ in the regular $\lbrace 3,6 \rbrace$ tiling for boundary states of 84, 282, and 870 Majorana sites (blue, yellow and green points, respectively). Numerical fit of scaling power law shown as grey line. Correlators $\langle \phi_i \phi_{i+d} \rangle$ of fields $\phi$ at distance $d$ are averaged over all sites $i$.}
\label{FIG_CRIT_SCALING_REG_FLAT}
\end{figure}

\begin{figure}[p]
\centering
\includegraphics[width=0.25\textwidth]{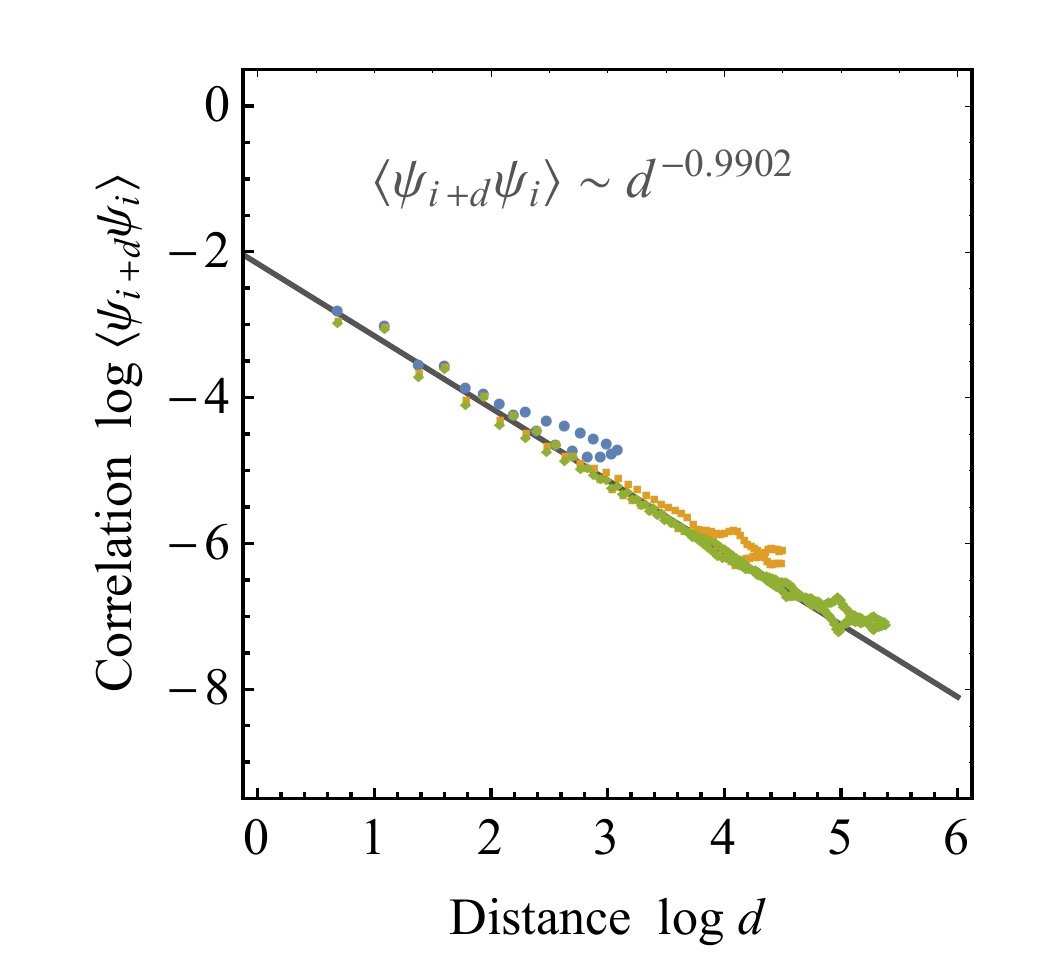}
\includegraphics[width=0.25\textwidth]{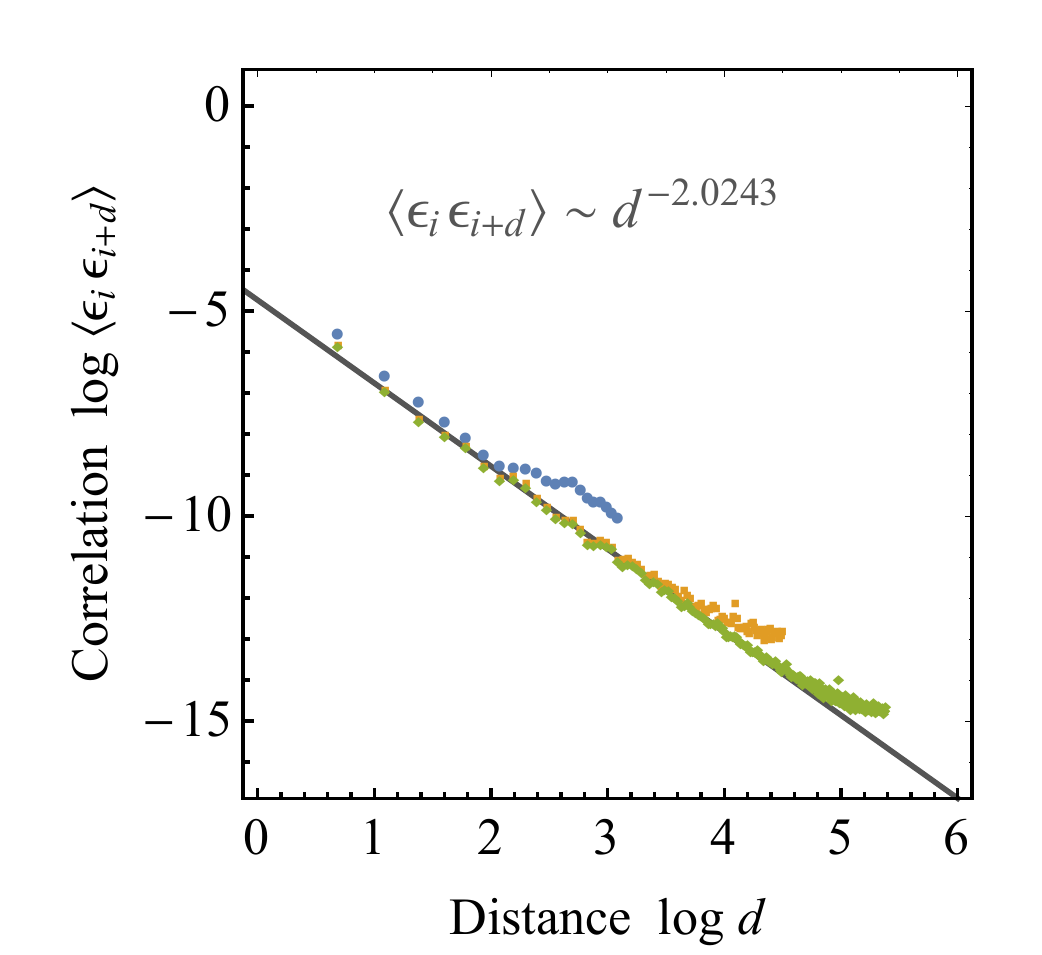}
\includegraphics[width=0.25\textwidth]{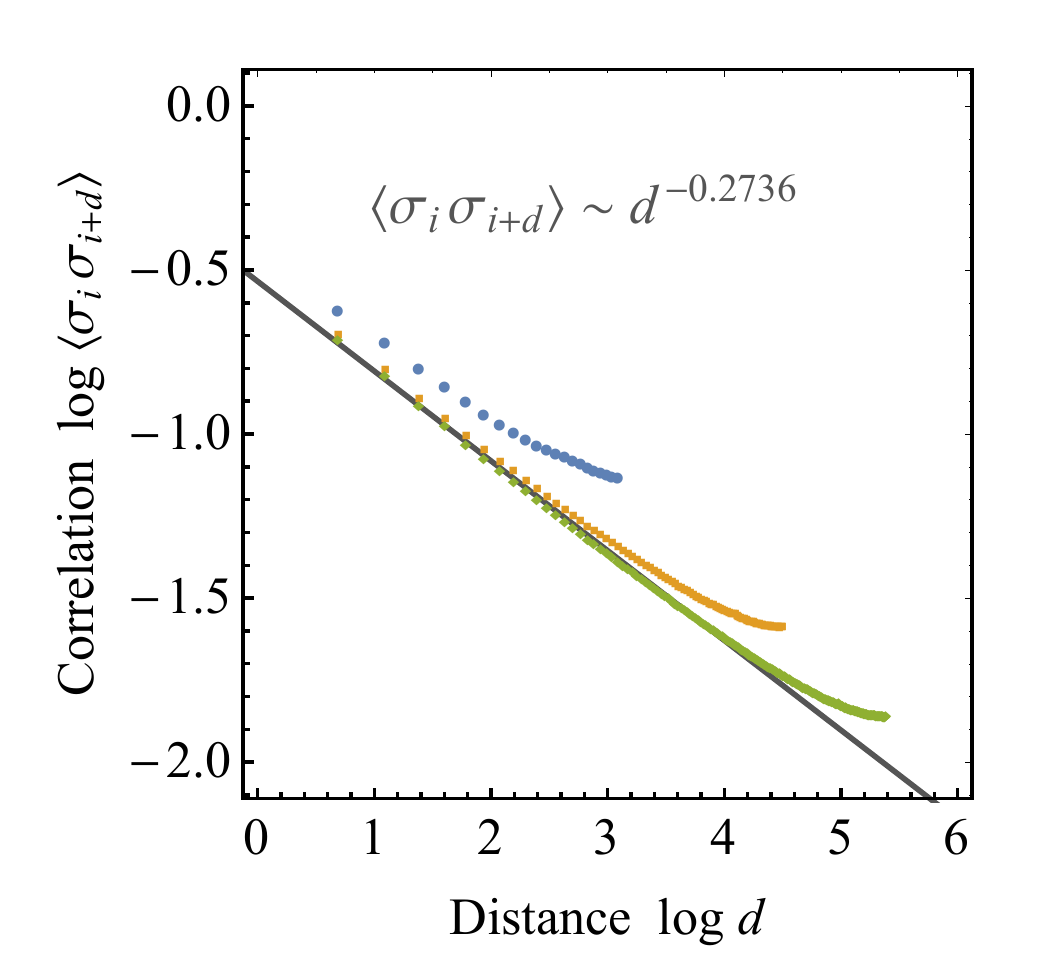}

\caption{Scaling of primary operators $\psi$, $\epsilon$ and $\sigma$ in the regular $\lbrace 3,7 \rbrace$ tiling for boundary states of 90, 360, and 876 Majorana sites (blue, yellow and green points, respectively). Numerical fit of scaling power law shown as grey line. Correlators $\langle \phi_i \phi_{i+d} \rangle$ of fields $\phi$ at distance $d$ are averaged over all sites $i$.}
\label{FIG_CRIT_SCALING_REG}
\end{figure}

\begin{figure}[p]
\centering
\includegraphics[width=0.25\textwidth]{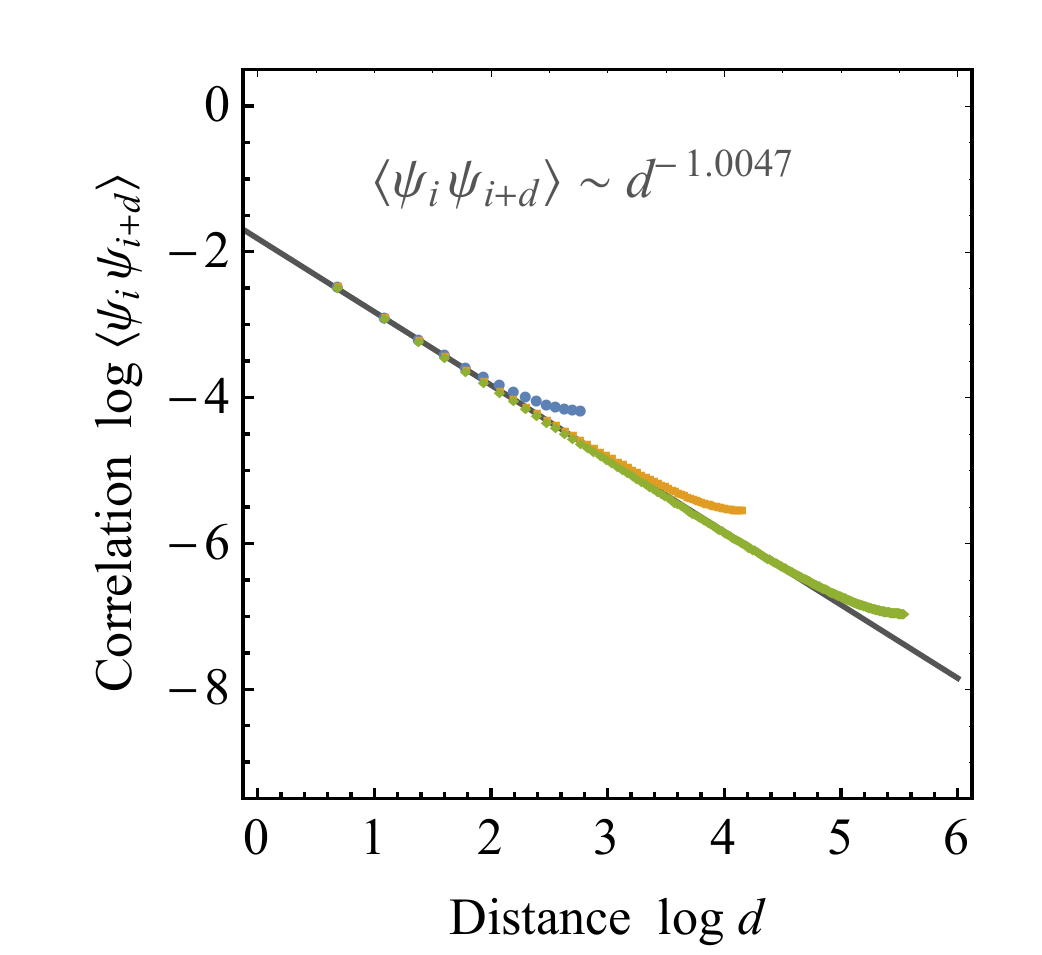}
\includegraphics[width=0.25\textwidth]{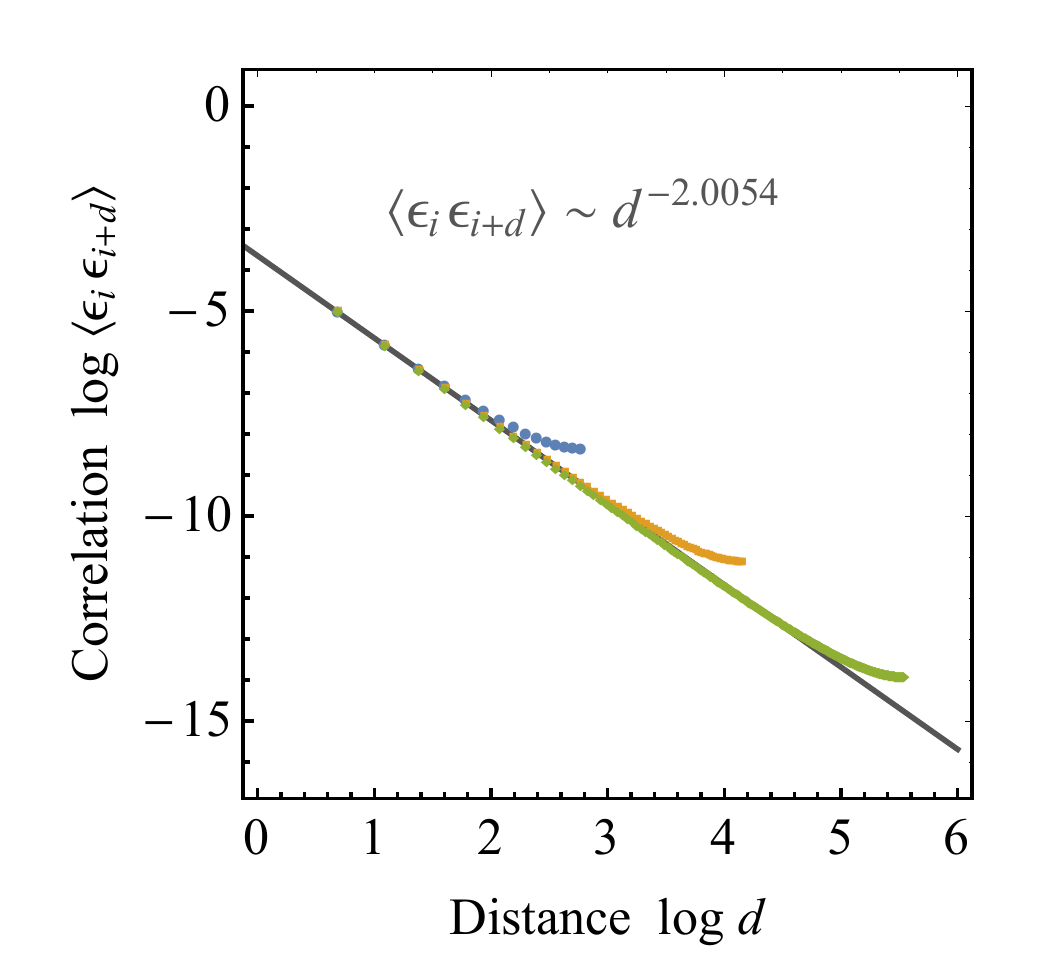}
\includegraphics[width=0.25\textwidth]{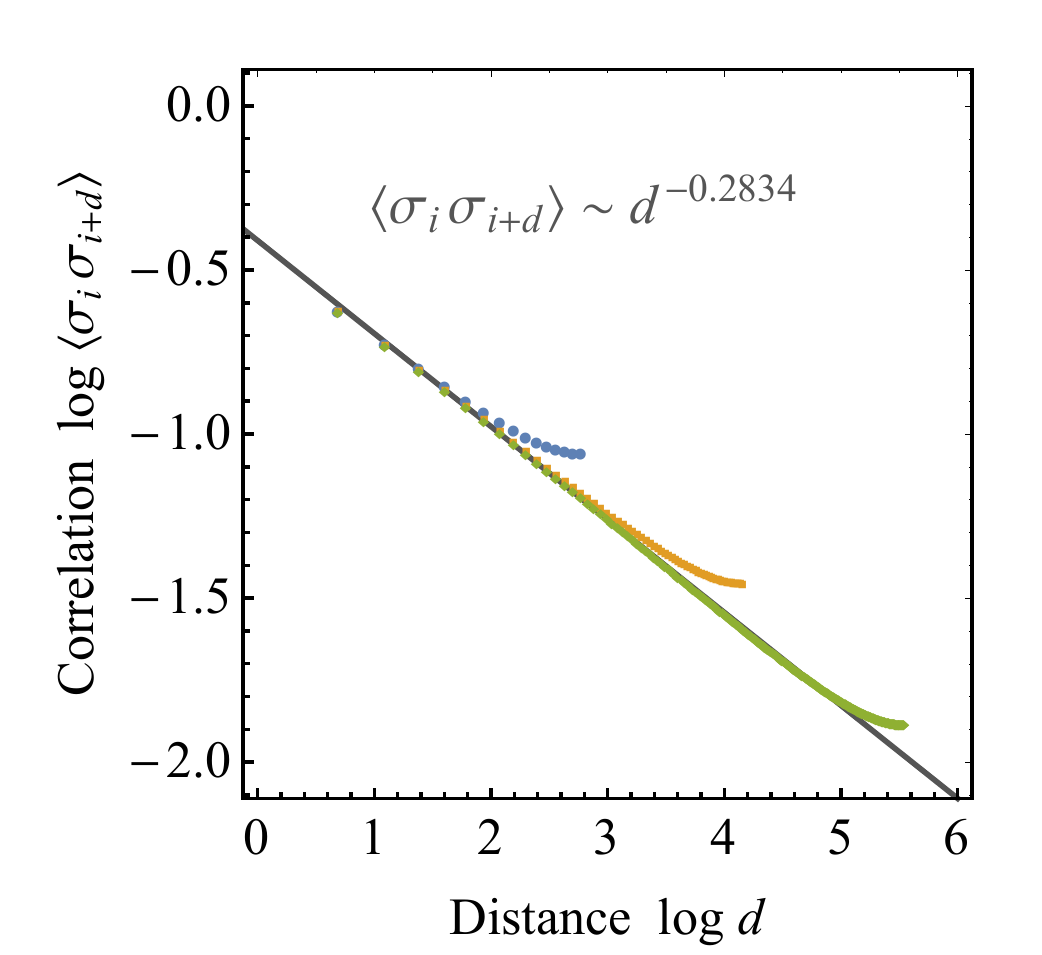}

\caption{Scaling of primary operators $\psi$, $\epsilon$ and $\sigma$ in the mMERA tiling for boundary states of 64, 256, and 1024 Majorana sites (blue, yellow and green points, respectively). Numerical fit of scaling power law shown as grey line. Correlators $\langle \phi_i \phi_{i+d} \rangle$ of fields $\phi$ at distance $d$ are averaged over all sites $i$.}
\label{FIG_CRIT_SCALING_MERA}
\end{figure}

\begin{figure}[p]
\centering
\includegraphics[width=0.25\textwidth]{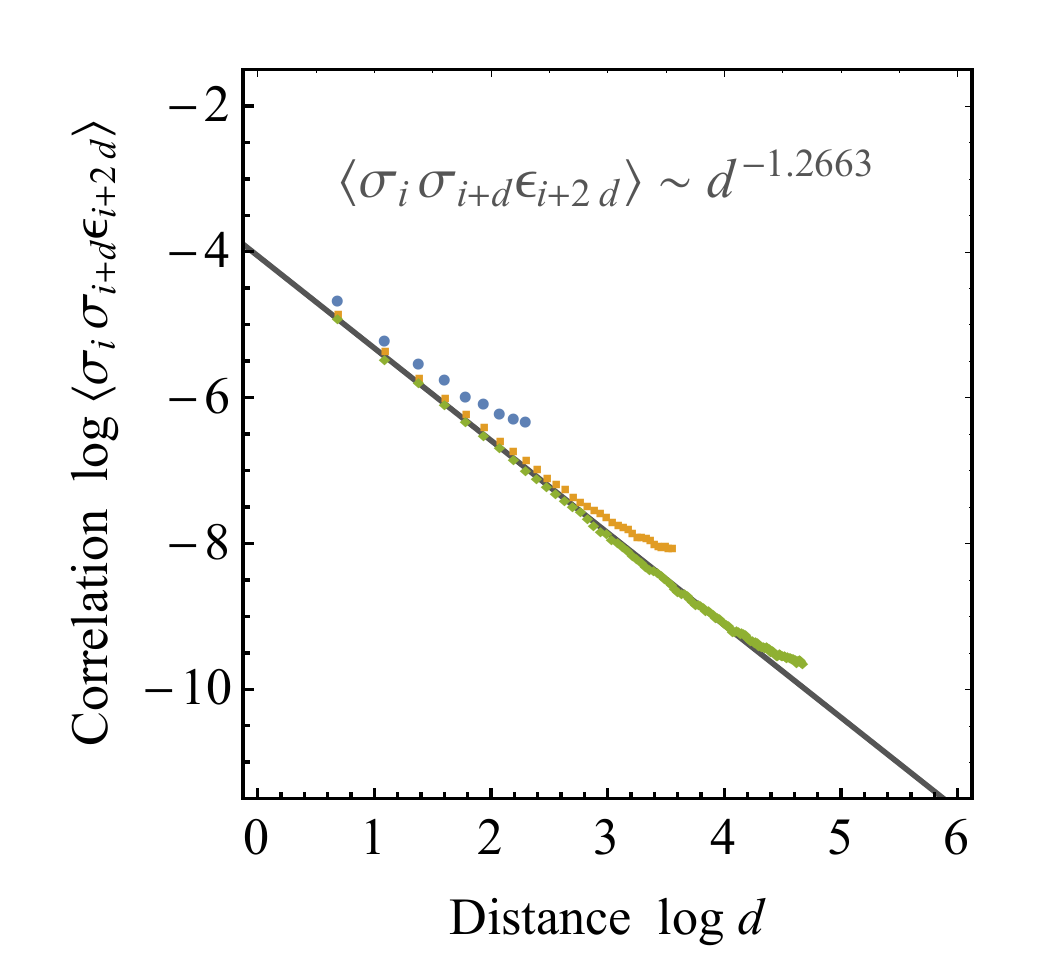}
\includegraphics[width=0.25\textwidth]{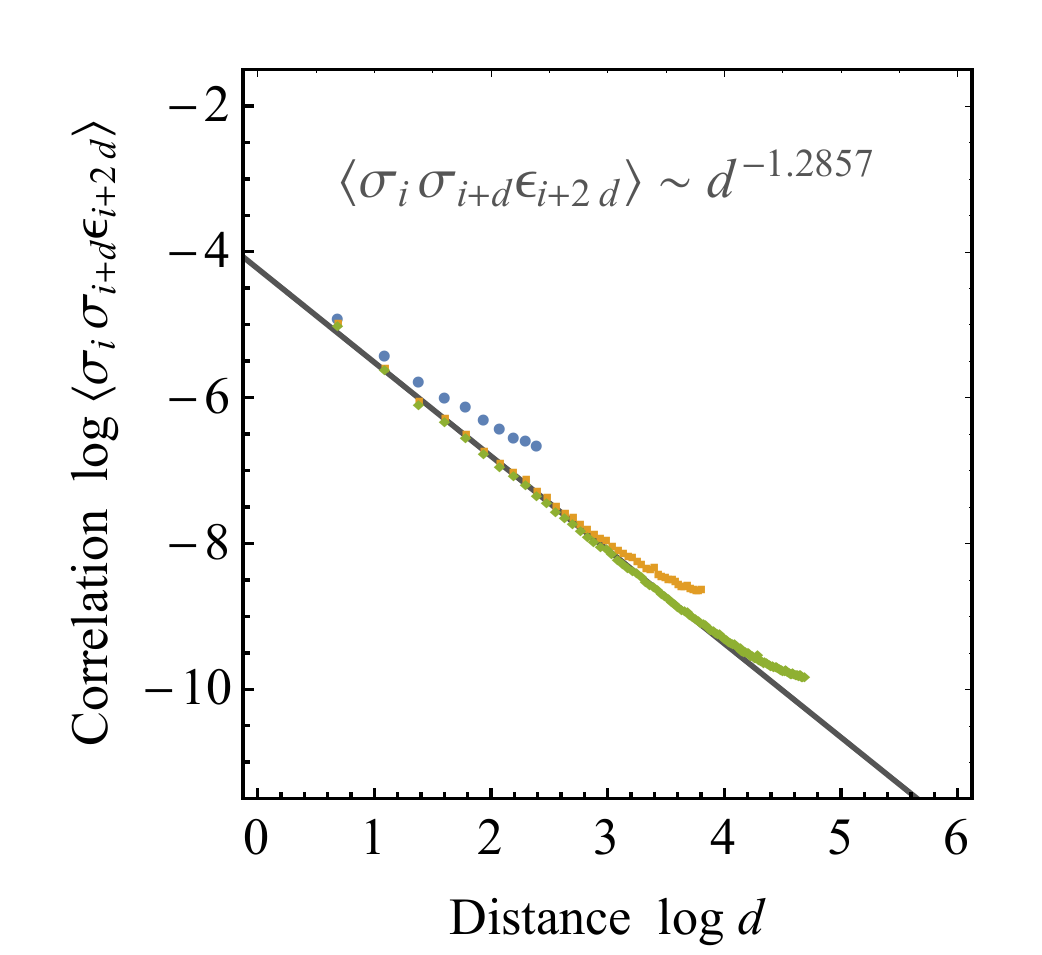}
\includegraphics[width=0.25\textwidth]{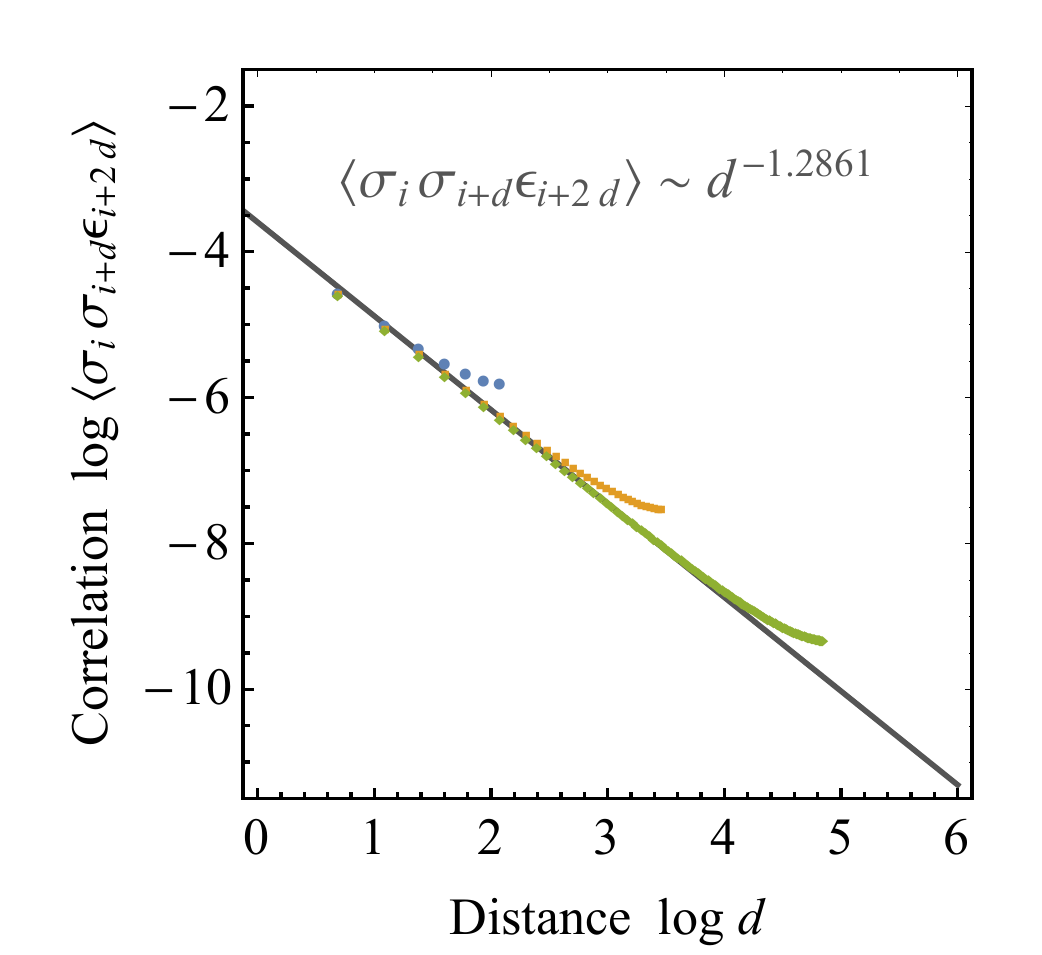}

\caption{Scaling of the three-point function $\langle \sigma_i\sigma_{i+d}\epsilon_{i+2d} \rangle$ at distance $d$, averaged over all sites $i$, for the regular $\lbrace 3,6 \rbrace$ and $\lbrace 3,7 \rbrace$, as well as the mMERA tilings (from left to right). Numerical fit of scaling power law, based on the data from Fig.\ \ref{FIG_CRIT_SCALING_REG_FLAT}-\ref{FIG_CRIT_SCALING_MERA}, shown as grey line.}
\label{FIG_CRIT_SCALING_COEFF}
\end{figure}

\subsection{IR cutoff}
\label{ssec:app_ir_cut}

While the matchgate model is restricted to planar graphs, it is possible to construct an effective IR cutoff, i.e.\ a ``black hole horizon'', by changing the tensor content of tensors in the center of the network. For a regular $\lbrace 3, k \rbrace$ tiling with $k \ge 7$ this cutoff is simply a cutoff radius $r_\text{cut}$ in the Poincar\'e disk  with $0 \leq r_\text{cut} < 1$. For a  flat $\lbrace 3, 6 \rbrace$ tiling $r_\text{cut}$ becomes a radius in the flat Euclidean plane with $0 \leq r_\text{cut} < \infty$. While the MERA can also be embedded in the Poincar\'e disk, it is more convenient to define a cutoff layer $n_\text{cut}$, with the first $n_\text{cut}$ MERA layers (isometries and disentanglers) and the central tensor being affected.

There are two natural choices for the tensor's generating matrices $A$ in the cutoff region: Either setting all components $A_{i,j}$ with $i<j$ to zero or to one, corresponding to a local vacuum or a fully occupied state, respectively. We find that both produce gapped states on the boundary, but that the former choice leads to periodic boundary conditions, while the latter produces anti-periodic ones. As we have been considering the anti-periodic case in the previous examples, we also choose this case here.

The results are shown in Fig.\ \ref{FIG_CRIT_SCALING_CUTOFF} with regard to the scaling of the fermionic field $\psi$ and the dependence of the entanglement entropy $S_A$ on the length $l$ of the subsystem $A$. Outside of the cutoff region, the tensor content is identical to the one used to produce a boundary Ising CFT in the previous section.

After a characteristic length scale $\xi$ depending on the cutoff, we see that the $\psi$ field's power law scaling transitions to an exponential falloff, as would be expected in a gapped (massive) theory. Furthermore, $S_A$ saturates for $l > \xi$, which allows us to directly extract $\xi$ from the entanglement entropy formula for a massive QFT \cite{CalabreseReview},
\begin{equation}
S_A = \frac{c}{3} \log\frac{\xi}{a} \text{ ,}
\end{equation}
which holds in the limit where $\xi$ is much larger than the lattice spacing $a$. The values for $c$ and $a$ are given by the full entanglement entropy scaling \eqref{EQ_CALABRESE_CARDY} at zero cutoff (note that $a$ depends on the tiling). Without a cutoff, $\xi$ can be identified with the length of the system, which is infinite in the CFT limit.

\begin{figure}[htb]
\centering
\includegraphics[width=0.25\textwidth]{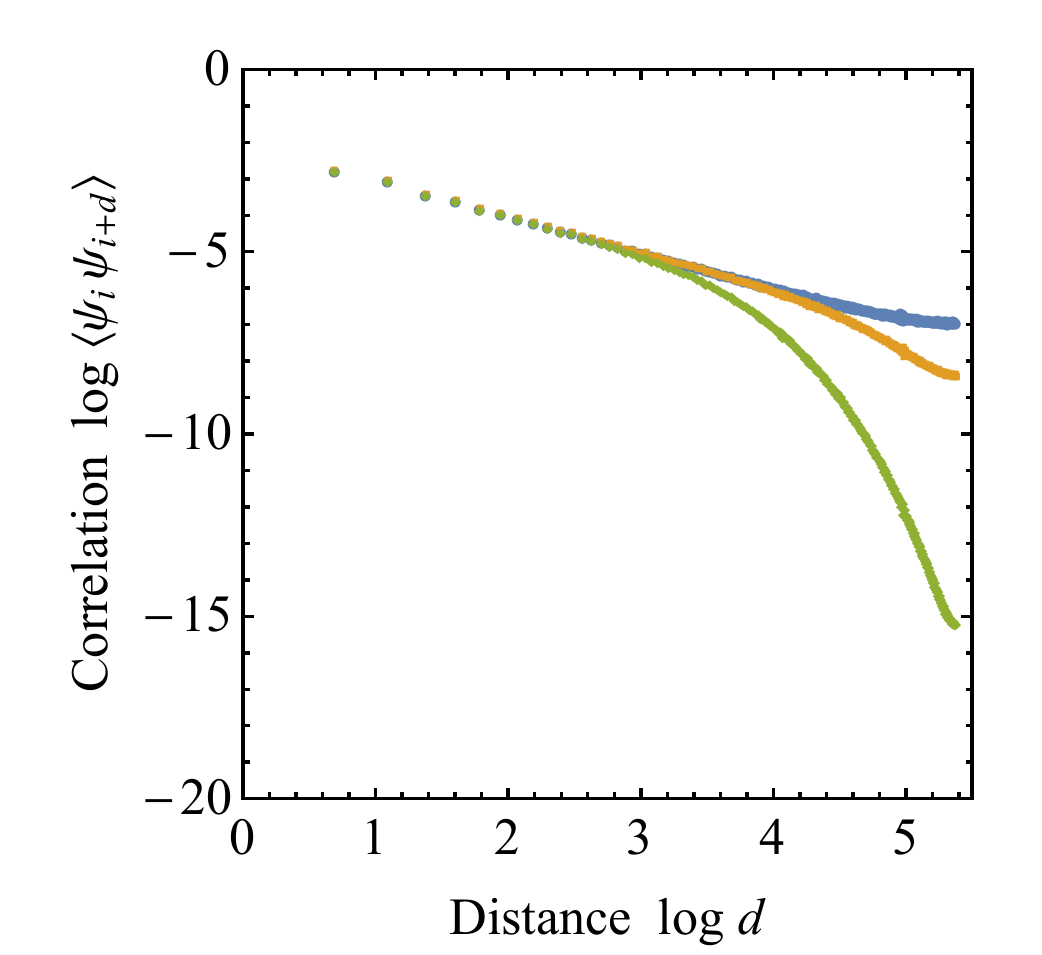}
\includegraphics[width=0.25\textwidth]{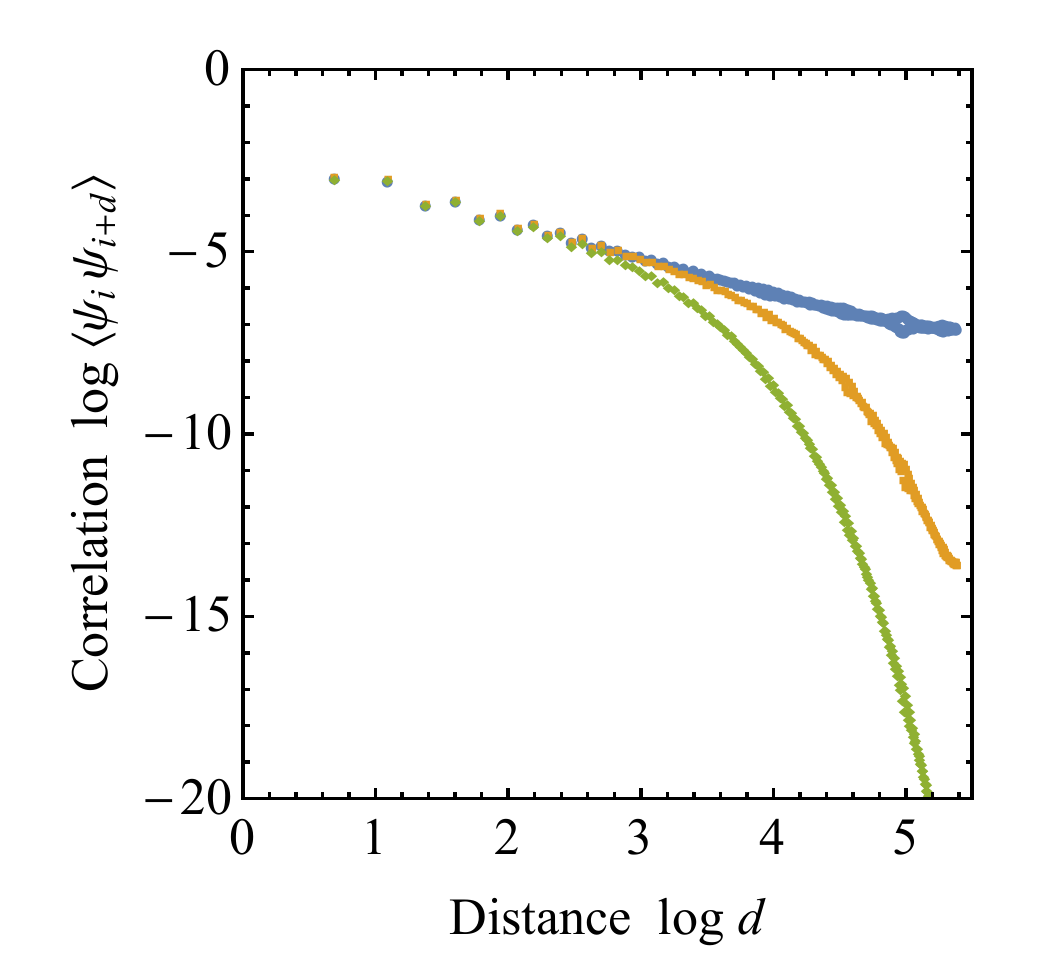} 
\includegraphics[width=0.25\textwidth]{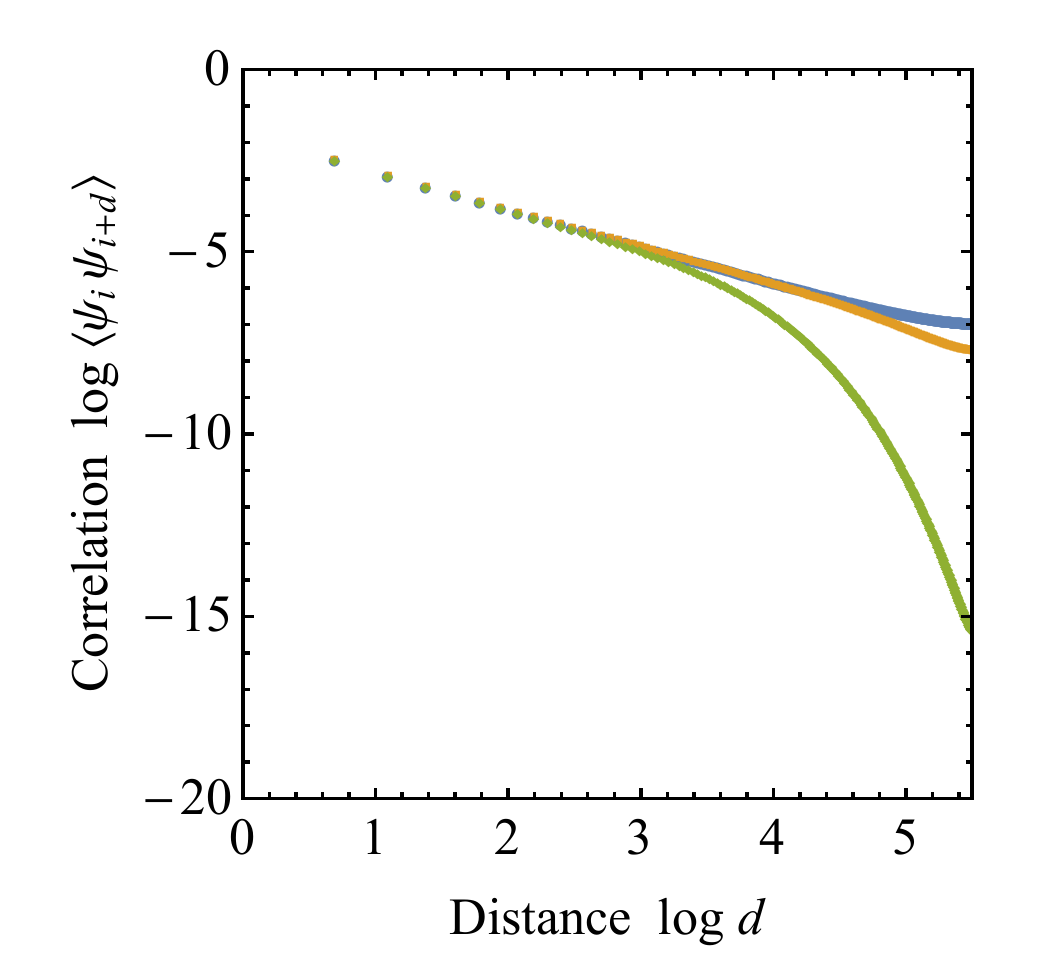}\\
\includegraphics[width=0.25\textwidth]{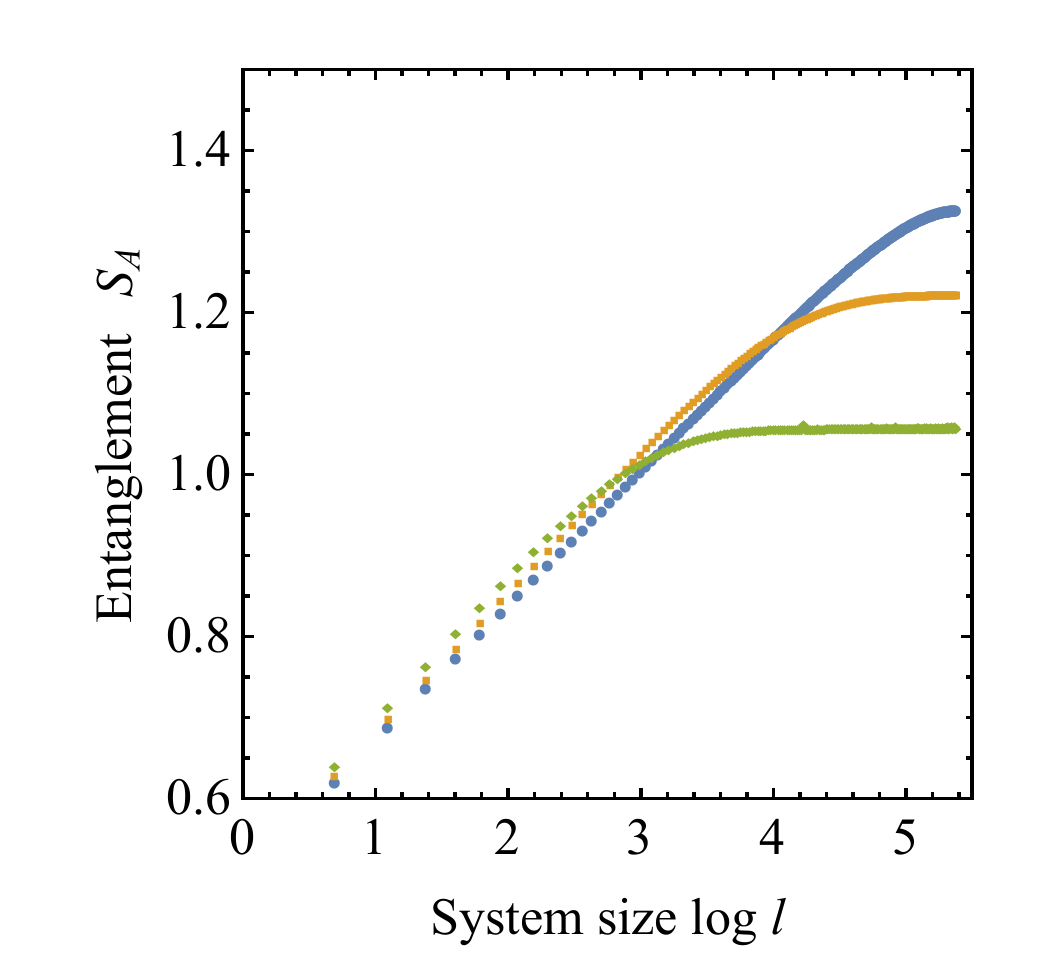}
\includegraphics[width=0.25\textwidth]{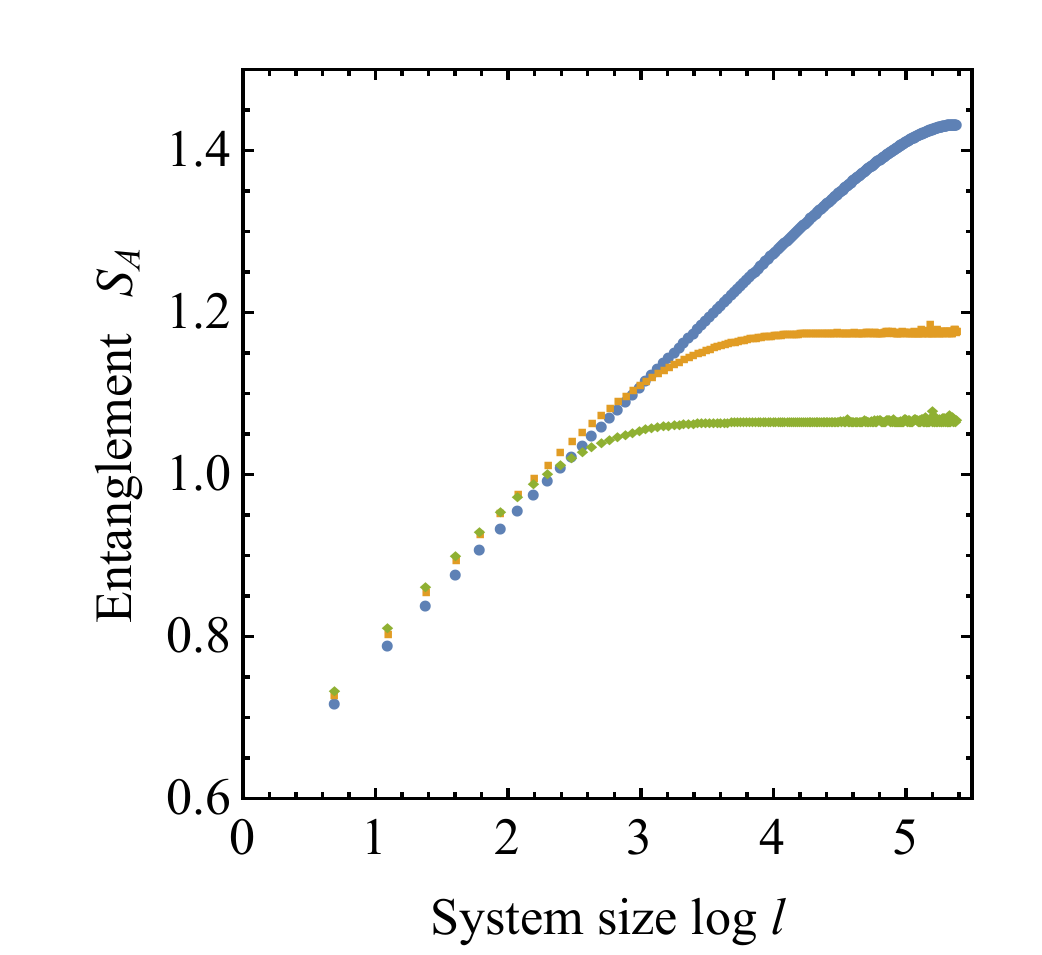}
\includegraphics[width=0.25\textwidth]{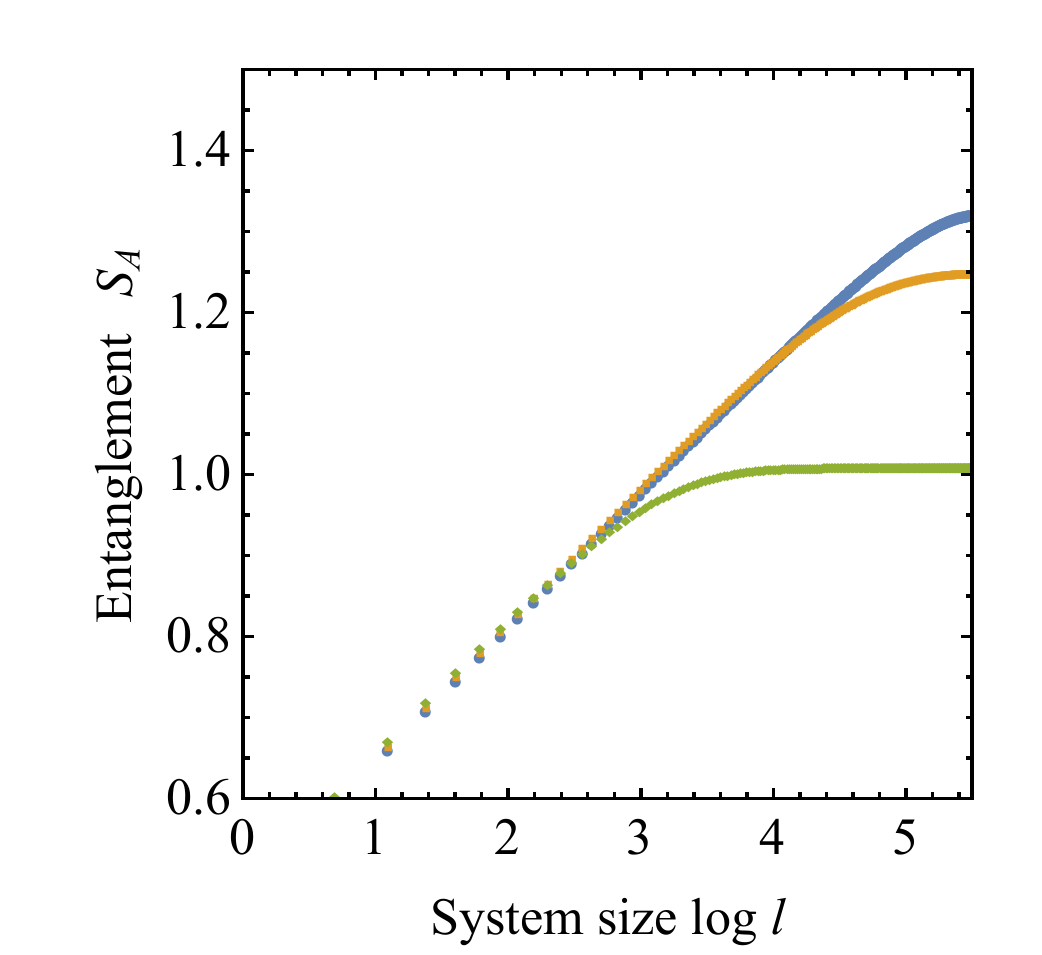}

\caption{Scaling of the $\langle \psi_i \psi_{i+d} \rangle$ correlator with distance $d$ (top) as well as the entanglement entropy $S_A$ with subsystem length $l=|A|$ (bottom) at various cutoffs for the $\lbrace 3,6 \rbrace$, $\lbrace 3,7 \rbrace$ and MERA tiling (left to right). The cutoffs are $r_c=0,50,75$ ($\lbrace 3,6 \rbrace$ case with 870 Majorana boundary sites), $r_c=0.0,0.8,0.9$ ($\lbrace 3,7 \rbrace$ case with 876 sites), and $n_c=0,2,4$ (MERA case with 1024 sites), the data for each cutoff plotted in blue, yellow and green, respectively.
}
\label{FIG_CRIT_SCALING_CUTOFF}
\end{figure}

\subsection{Pentagon code for quantum error correction}
\label{ssec:happy}
First, consider the boundary state of a single pentagon. 
Explicitly, the $+1$ logical state vector of the quantum error correcting code is given by
\begin{equation}
\ket{\overline0} = N \exp\left( \frac{1}{2} \sum_{i,j} A^+_{i,j} \fd_i \fd_j \right) \vacket \text{ ,}
\end{equation}
with a normalization factor $N=\frac{1}{4}$ and the $5 \times 5$ generating matrix
	\begin{equation*}
A^+ = \left(
\begin{array}{ccccc}
\msp 0 &    -1 &\msp 1 &\msp 1 &    -1 \\
\msp 1 &\msp 0 &    -1 &\msp 1 &\msp 1 \\
    -1 &\msp 1 &\msp 0 &    -1 &\msp 1 \\
    -1 &    -1 &\msp 1 &\msp 0 &    -1 \\
\msp 1 &    -1 &    -1 &\msp 1 &\msp 0
\end{array} 
\right) \text{ .}
    \end{equation*}
Correspondingly, the $-1$ logical state vector is given by
\begin{equation}
\ket {\overline 1} = N \exp\left( \frac{1}{2} \sum_{i,j} A^-_{i,j} \fd_i \fd_j \right) \int \text{d}\eta  \exp\left(\eta \sum_i B^-_i \fd_i \right) \widetilde{\vacket} \text{ ,}
\end{equation}
containing an integration over the auxiliary Grassmann variable $\eta$, fulfilling $\eta \fd_i = -\fd_i \eta$. The generating matrix $A^-$ and coupling matrix $B^-$ between $\eta$ and the $\fd_i$ are given by
\begin{equation}
A^- = \left(
\begin{array}{ccccc}
\msp 0   &\msp 0.2 &    -0.6 &\msp 0.6 &    -0.2 \\
    -0.2 &\msp 0   &\msp 0.2 &    -0.6 &\msp 0.6 \\
\msp 0.6 &    -0.2 &\msp 0   &\msp 0.2 &    -0.6 \\
    -0.6 &\msp 0.6 &    -0.2 &\msp 0   &\msp 0.2 \\
\msp 0.2 &    -0.6 &\msp 0.6 &    -0.2 &\msp 0
\end{array} 
\right) \text{ ,} \quad
B^- = \left(
\begin{array}{ccccc}
 1  & 1 &1 & 1 &1
\end{array} 
\right) \text{ .}
\end{equation}
However, we can also write this state in a purely Gaussian form by acting with annihilation operators on the fully occupied state vector $\widetilde{\vacket}= \prod_i \fd_i\vacket $,
\begin{equation}
\ket {\overline1} = -N \exp\left( \frac{1}{2} \sum_{i,j} A^+_{i,j} \fe_i \fe_j \right) \widetilde{\vacket} \text{ .}
\end{equation}
Note that the generating matrix $A^+$ in this form is the same as for the positive-parity state, highlighting the symmetry between the positive- and negative-parity eigenstate. The additional minus sign can be removed by redefining either $N$ or $| \tilde{0} \rangle$.    

\subsection{Higher central charges and critical scaling of entanglement entropies}
\label{ssec:app_higher_c}
By associating a higher bond dimension $\chi=2^n$ with each geometric edge, it is possible to increase the central charge $c$ of the conformal field theory capturing the boundary state. The corresponding $3n \times 3n$ correlation matrix $A$ of each triangle state can be chosen so that correlations separate into $n$ parts. 
An example for $\chi=4$ is given by
\begin{equation}
A = 
\left(
\begin{matrix}
 0 & 0 & a & 0 & b & 0 \\
 0 & 0 & 0 & a & 0 & b \\
-a & 0 & 0 & 0 & c & 0 \\
 0 &-a & 0 & 0 & 0 & c \\
-b & 0 &-c & 0 & 0 & 0 \\
 0 &-b & 0 &-c & \;0\; & \;0\; \\
\end{matrix}
\right)
\text{ ,}
\end{equation}
where $a=b=c$ again corresponds to a rotation-invariant state. The construction of states with higher $\chi$ is visualized in Fig.\ \ref{FIG_HIGH_C}. Note that this separation into $n$ independent ``channels'' can only be sustained in contracted $\lbrace p,q \rbrace$ tilings if $q$ is even; otherwise, self-contractions lead to mixing between different channels. Also shown in Fig.\ \ref{FIG_HIGH_C} is the entanglement entropy scaling of the boundary states of such triangular bulks, yielding a central charge of the equivalent CFT description of $c=n/2 = \log_2 \sqrt{\chi}$. The expected entanglement growth \eqref{EQ_CALABRESE_CARDY} is only reached when the subsystem size $l$ is larger than the size of one geometrical edge, i.e.\ $2n$ Majorana fermions. Indeed, by construction of our Gaussian model, a site of one Majorana fermion always has an entanglement entropy of $1/2 \log 2$, independent of $\chi$. 

\begin{figure}[htb]
\centering
\includegraphics[height=0.2\textheight]{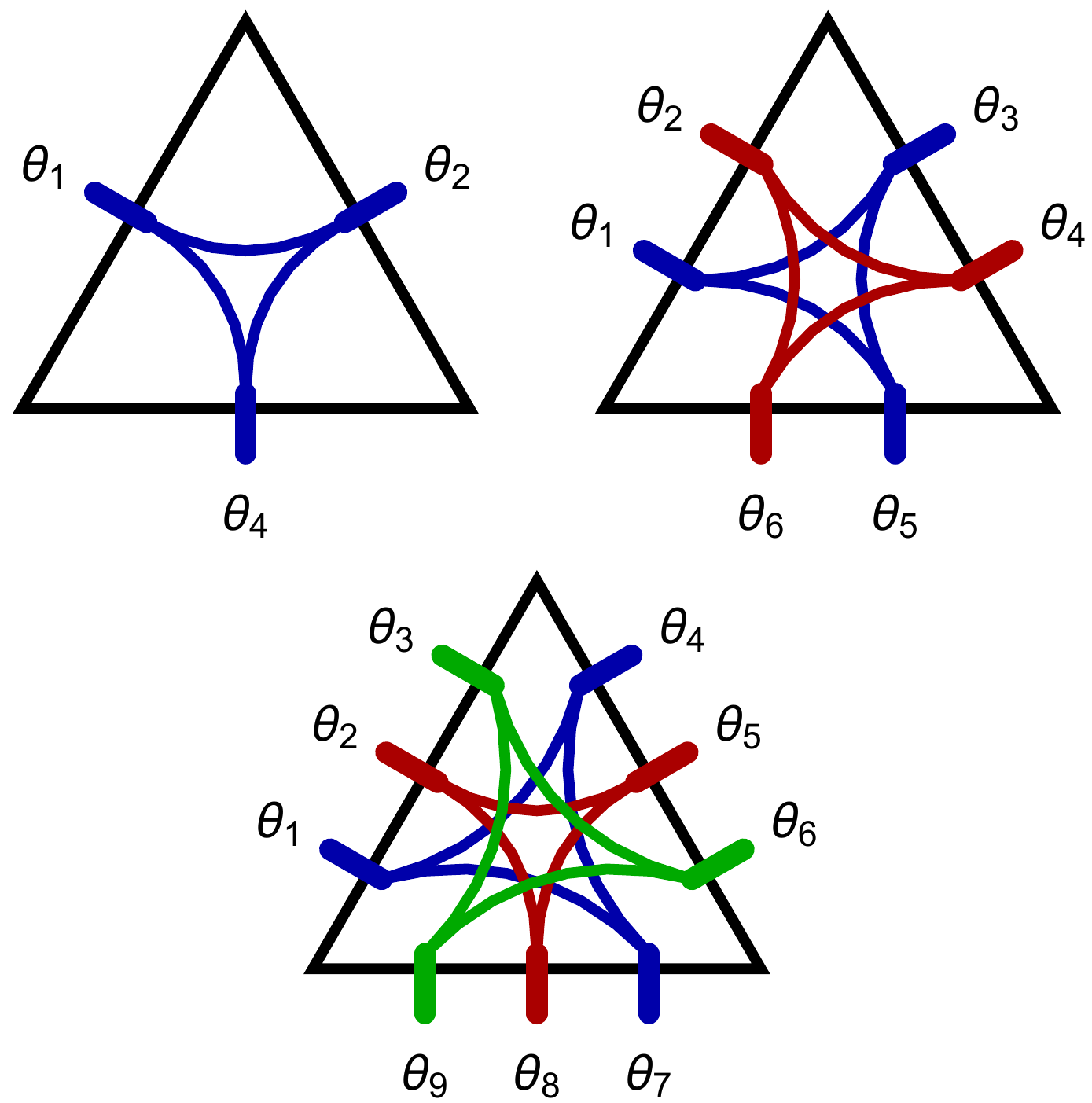}
\hspace{0.08\textheight}
\includegraphics[height=0.2\textheight]{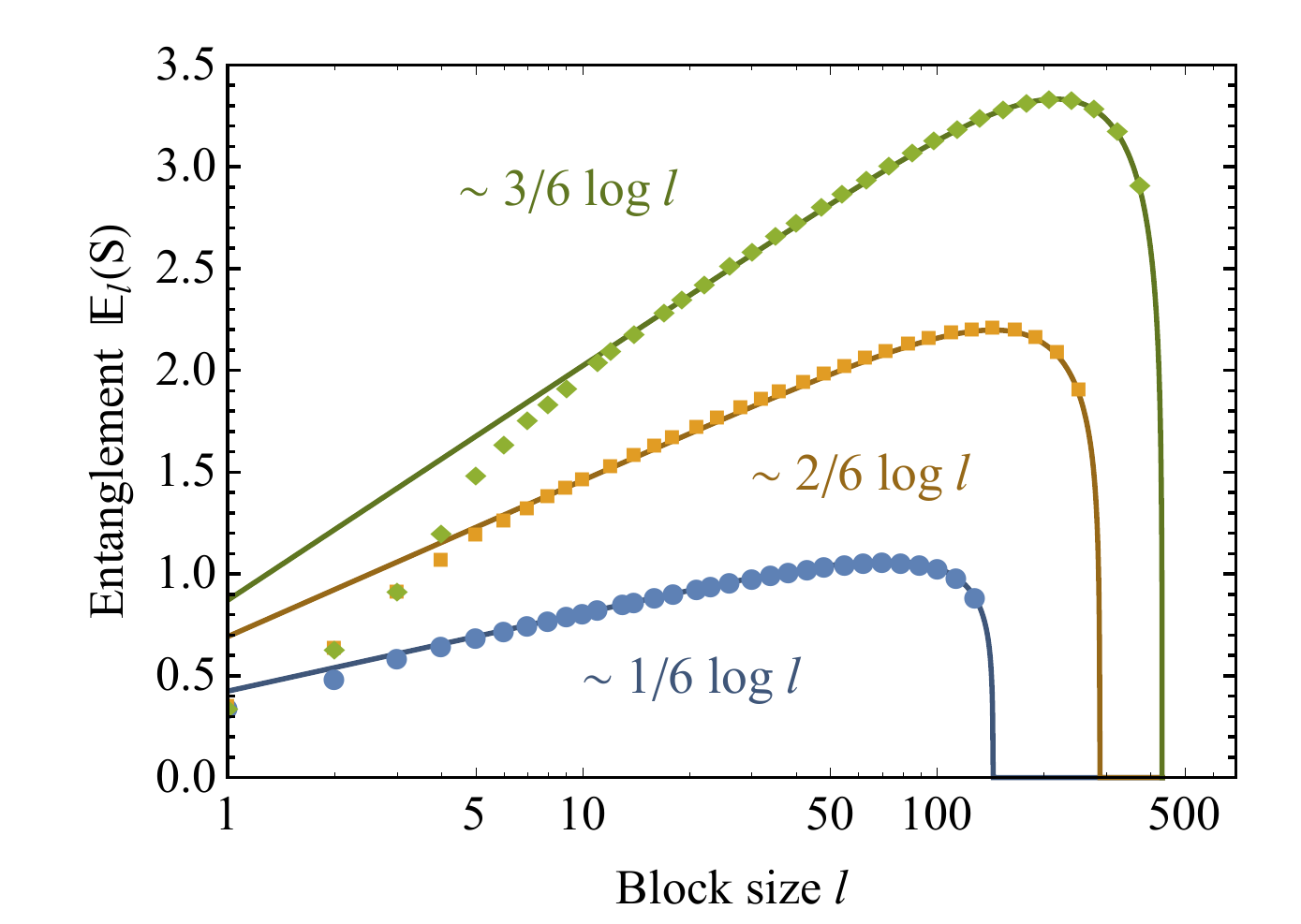}

\caption{\textsc{Left}: Construction of triangle states with bond dimension $\chi=2,4,8$. Colored connections between boundary sites $\{\theta_i\}$ denote non-zero correlations. \textsc{Right}: Mean value of entanglement entropy $\mathbb{E}_\ell(S) =$ $ \sum_{k=1}^L S_{[k,k+\ell]}$ of a boundary subsystem of size $\ell$. Results for $\lbrace 3,8 \rbrace$ tiling with bond dimension $\chi=2,4,8$ (bottom to top; with $144, 288, 432$ Majorana fermions, respectively).
}
\label{FIG_HIGH_C}
\end{figure}

\end{appendix}
\end{widetext}

\bibliography{FermionicPentagonCode} 

%merlin.mbs apsrev4-1.bst 2010-07-25 4.21a (PWD, AO, DPC) hacked
%Control: key (0)
%Control: author (8) initials jnrlst
%Control: editor formatted (1) identically to author
%Control: production of article title (-1) disabled
%Control: page (0) single
%Control: year (1) truncated
%Control: production of eprint (0) enabled
\begin{thebibliography}{51}%
\makeatletter
\providecommand \@ifxundefined [1]{%
 \@ifx{#1\undefined}
}%
\providecommand \@ifnum [1]{%
 \ifnum #1\expandafter \@firstoftwo
 \else \expandafter \@secondoftwo
 \fi
}%
\providecommand \@ifx [1]{%
 \ifx #1\expandafter \@firstoftwo
 \else \expandafter \@secondoftwo
 \fi
}%
\providecommand \natexlab [1]{#1}%
\providecommand \enquote  [1]{``#1''}%
\providecommand \bibnamefont  [1]{#1}%
\providecommand \bibfnamefont [1]{#1}%
\providecommand \citenamefont [1]{#1}%
\providecommand \href@noop [0]{\@secondoftwo}%
\providecommand \href [0]{\begingroup \@sanitize@url \@href}%
\providecommand \@href[1]{\@@startlink{#1}\@@href}%
\providecommand \@@href[1]{\endgroup#1\@@endlink}%
\providecommand \@sanitize@url [0]{\catcode `\\12\catcode `\$12\catcode
  `\&12\catcode `\#12\catcode `\^12\catcode `\_12\catcode `\%12\relax}%
\providecommand \@@startlink[1]{}%
\providecommand \@@endlink[0]{}%
\providecommand \url  [0]{\begingroup\@sanitize@url \@url }%
\providecommand \@url [1]{\endgroup\@href {#1}{\urlprefix }}%
\providecommand \urlprefix  [0]{URL }%
\providecommand \Eprint [0]{\href }%
\providecommand \doibase [0]{http://dx.doi.org/}%
\providecommand \selectlanguage [0]{\@gobble}%
\providecommand \bibinfo  [0]{\@secondoftwo}%
\providecommand \bibfield  [0]{\@secondoftwo}%
\providecommand \translation [1]{[#1]}%
\providecommand \BibitemOpen [0]{}%
\providecommand \bibitemStop [0]{}%
\providecommand \bibitemNoStop [0]{.\EOS\space}%
\providecommand \EOS [0]{\spacefactor3000\relax}%
\providecommand \BibitemShut  [1]{\csname bibitem#1\endcsname}%
\let\auto@bib@innerbib\@empty
%</preamble>
\bibitem [{\citenamefont {Maldacena}(1998)}]{Maldacena98}%
  \BibitemOpen
  \bibfield  {author} {\bibinfo {author} {\bibfnamefont {J.~M.}\ \bibnamefont
  {Maldacena}},\ }\href@noop {} {\bibfield  {journal} {\bibinfo  {journal}
  {Adv. Theor. Math. Phys.}\ }\textbf {\bibinfo {volume} {2}},\ \bibinfo
  {pages} {231} (\bibinfo {year} {1998})}\BibitemShut {NoStop}%
\bibitem [{\citenamefont {van Raamsdonk}(2010)}]{VanRaamsdonk2010}%
  \BibitemOpen
  \bibfield  {author} {\bibinfo {author} {\bibfnamefont {M.}~\bibnamefont {van
  Raamsdonk}},\ }\href {\doibase 10.1007/s10714-010-1034-0} {\bibfield
  {journal} {\bibinfo  {journal} {Gen. Rev. Grav.}\ }\textbf {\bibinfo {volume}
  {42}},\ \bibinfo {pages} {2323} (\bibinfo {year} {2010})},\ \Eprint
  {http://arxiv.org/abs/1005.3035} {arXiv:1005.3035} \BibitemShut {NoStop}%
\bibitem [{\citenamefont {Pastawski}\ and\ \citenamefont
  {Preskill}(2017)}]{Pastawski2016a}%
  \BibitemOpen
  \bibfield  {author} {\bibinfo {author} {\bibfnamefont {F.}~\bibnamefont
  {Pastawski}}\ and\ \bibinfo {author} {\bibfnamefont {J.}~\bibnamefont
  {Preskill}},\ }\href {\doibase 10.1103/PhysRevX.7.021022} {\bibfield
  {journal} {\bibinfo  {journal} {Phys. Rev. X}\ }\textbf {\bibinfo {volume}
  {7}},\ \bibinfo {pages} {021022} (\bibinfo {year} {2017})},\ \Eprint
  {http://arxiv.org/abs/1612.00017} {arXiv:1612.00017} \BibitemShut {NoStop}%
\bibitem [{\citenamefont {Eisert}\ \emph {et~al.}(2010)\citenamefont {Eisert},
  \citenamefont {Cramer},\ and\ \citenamefont {Plenio}}]{AreaReview}%
  \BibitemOpen
  \bibfield  {author} {\bibinfo {author} {\bibfnamefont {J.}~\bibnamefont
  {Eisert}}, \bibinfo {author} {\bibfnamefont {M.}~\bibnamefont {Cramer}}, \
  and\ \bibinfo {author} {\bibfnamefont {M.~B.}\ \bibnamefont {Plenio}},\
  }\href@noop {} {\bibfield  {journal} {\bibinfo  {journal} {Rev. Mod. Phys.}\
  }\textbf {\bibinfo {volume} {82}},\ \bibinfo {pages} {277} (\bibinfo {year}
  {2010})}\BibitemShut {NoStop}%
\bibitem [{\citenamefont {Ryu}\ and\ \citenamefont
  {Takayanagi}(2006)}]{PhysRevLett.96.181602}%
  \BibitemOpen
  \bibfield  {author} {\bibinfo {author} {\bibfnamefont {S.}~\bibnamefont
  {Ryu}}\ and\ \bibinfo {author} {\bibfnamefont {T.}~\bibnamefont
  {Takayanagi}},\ }\href {\doibase 10.1103/PhysRevLett.96.181602} {\bibfield
  {journal} {\bibinfo  {journal} {Phys. Rev. Lett.}\ }\textbf {\bibinfo
  {volume} {96}},\ \bibinfo {pages} {181602} (\bibinfo {year}
  {2006})}\BibitemShut {NoStop}%
\bibitem [{\citenamefont {Swingle}(2012)}]{PhysRevD.86.065007}%
  \BibitemOpen
  \bibfield  {author} {\bibinfo {author} {\bibfnamefont {B.}~\bibnamefont
  {Swingle}},\ }\href {\doibase 10.1103/PhysRevD.86.065007} {\bibfield
  {journal} {\bibinfo  {journal} {Phys. Rev. D}\ }\textbf {\bibinfo {volume}
  {86}},\ \bibinfo {pages} {065007} (\bibinfo {year} {2012})}\BibitemShut
  {NoStop}%
\bibitem [{\citenamefont {Vidal}(2008)}]{PhysRevLett.101.110501}%
  \BibitemOpen
  \bibfield  {author} {\bibinfo {author} {\bibfnamefont {G.}~\bibnamefont
  {Vidal}},\ }\href {\doibase 10.1103/PhysRevLett.101.110501} {\bibfield
  {journal} {\bibinfo  {journal} {Phys. Rev. Lett.}\ }\textbf {\bibinfo
  {volume} {101}},\ \bibinfo {pages} {110501} (\bibinfo {year}
  {2008})}\BibitemShut {NoStop}%
\bibitem [{\citenamefont {Evenbly}\ and\ \citenamefont
  {Vidal}(2009)}]{MERAAlgorithms}%
  \BibitemOpen
  \bibfield  {author} {\bibinfo {author} {\bibfnamefont {G.}~\bibnamefont
  {Evenbly}}\ and\ \bibinfo {author} {\bibfnamefont {G.}~\bibnamefont
  {Vidal}},\ }\href {\doibase 10.1103/PhysRevB.79.144108} {\bibfield  {journal}
  {\bibinfo  {journal} {Phys. Rev. B}\ }\textbf {\bibinfo {volume} {79}},\
  \bibinfo {pages} {144108} (\bibinfo {year} {2009})}\BibitemShut {NoStop}%
\bibitem [{\citenamefont {Dawson}\ \emph {et~al.}(2008)\citenamefont {Dawson},
  \citenamefont {Eisert},\ and\ \citenamefont
  {Osborne}}]{PhysRevLett.100.130501}%
  \BibitemOpen
  \bibfield  {author} {\bibinfo {author} {\bibfnamefont {C.~M.}\ \bibnamefont
  {Dawson}}, \bibinfo {author} {\bibfnamefont {J.}~\bibnamefont {Eisert}}, \
  and\ \bibinfo {author} {\bibfnamefont {T.~J.}\ \bibnamefont {Osborne}},\
  }\href {\doibase 10.1103/PhysRevLett.100.130501} {\bibfield  {journal}
  {\bibinfo  {journal} {Phys. Rev. Lett.}\ }\textbf {\bibinfo {volume} {100}},\
  \bibinfo {pages} {130501} (\bibinfo {year} {2008})}\BibitemShut {NoStop}%
\bibitem [{\citenamefont {Almheiri}\ \emph {et~al.}(2015)\citenamefont
  {Almheiri}, \citenamefont {Dong},\ and\ \citenamefont {Harlow}}]{Almheiri15}%
  \BibitemOpen
  \bibfield  {author} {\bibinfo {author} {\bibfnamefont {A.}~\bibnamefont
  {Almheiri}}, \bibinfo {author} {\bibfnamefont {X.}~\bibnamefont {Dong}}, \
  and\ \bibinfo {author} {\bibfnamefont {D.}~\bibnamefont {Harlow}},\
  }\href@noop {} {\bibfield  {journal} {\bibinfo  {journal} {JHEP}\ }\textbf
  {\bibinfo {volume} {1504}},\ \bibinfo {pages} {163} (\bibinfo {year}
  {2015})}\BibitemShut {NoStop}%
\bibitem [{\citenamefont {Lee}\ and\ \citenamefont {Qi}(2016)}]{Qi2013}%
  \BibitemOpen
  \bibfield  {author} {\bibinfo {author} {\bibfnamefont {C.~H.}\ \bibnamefont
  {Lee}}\ and\ \bibinfo {author} {\bibfnamefont {X.~L.}\ \bibnamefont {Qi}},\
  }\href@noop {} {\bibfield  {journal} {\bibinfo  {journal} {Phys. Rev. B}\
  }\textbf {\bibinfo {volume} {93}} (\bibinfo {year} {2016})},\ \Eprint
  {http://arxiv.org/abs/1503.08592} {1503.08592} \BibitemShut {NoStop}%
\bibitem [{\citenamefont {Pastawski}\ \emph {et~al.}(2015)\citenamefont
  {Pastawski}, \citenamefont {Yoshida}, \citenamefont {Harlow},\ and\
  \citenamefont {Preskill}}]{Pastawski2015}%
  \BibitemOpen
  \bibfield  {author} {\bibinfo {author} {\bibfnamefont {F.}~\bibnamefont
  {Pastawski}}, \bibinfo {author} {\bibfnamefont {B.}~\bibnamefont {Yoshida}},
  \bibinfo {author} {\bibfnamefont {D.}~\bibnamefont {Harlow}}, \ and\ \bibinfo
  {author} {\bibfnamefont {J.}~\bibnamefont {Preskill}},\ }\href {\doibase
  10.1007/JHEP06(2015)149} {\bibfield  {journal} {\bibinfo  {journal} {JHEP}\
  }\textbf {\bibinfo {volume} {2015}},\ \bibinfo {pages} {149} (\bibinfo {year}
  {2015})}\BibitemShut {NoStop}%
\bibitem [{\citenamefont {Hayden}\ \emph {et~al.}(2016)\citenamefont {Hayden},
  \citenamefont {Nezami}, \citenamefont {Qi}, \citenamefont {Thomas},
  \citenamefont {Walter},\ and\ \citenamefont {Yang}}]{Hayden2016}%
  \BibitemOpen
  \bibfield  {author} {\bibinfo {author} {\bibfnamefont {P.}~\bibnamefont
  {Hayden}}, \bibinfo {author} {\bibfnamefont {S.}~\bibnamefont {Nezami}},
  \bibinfo {author} {\bibfnamefont {X.-L.}\ \bibnamefont {Qi}}, \bibinfo
  {author} {\bibfnamefont {N.}~\bibnamefont {Thomas}}, \bibinfo {author}
  {\bibfnamefont {M.}~\bibnamefont {Walter}}, \ and\ \bibinfo {author}
  {\bibfnamefont {Z.}~\bibnamefont {Yang}},\ }\href@noop {} {\bibfield
  {journal} {\bibinfo  {journal} {J. High En. Phys.}\ }\textbf {\bibinfo
  {volume} {2016}},\ \bibinfo {pages} {9} (\bibinfo {year} {2016})}\BibitemShut
  {NoStop}%
\bibitem [{\citenamefont {Valiant}(2002{\natexlab{a}})}]{Valiant2002}%
  \BibitemOpen
  \bibfield  {author} {\bibinfo {author} {\bibfnamefont {L.~G.}\ \bibnamefont
  {Valiant}},\ }\href {\doibase 10.1137/S0097539700377025} {\bibfield
  {journal} {\bibinfo  {journal} {SIAM J. Comp.}\ }\textbf {\bibinfo {volume}
  {31}},\ \bibinfo {pages} {1229} (\bibinfo {year}
  {2002}{\natexlab{a}})}\BibitemShut {NoStop}%
\bibitem [{\citenamefont {Bravyi}(2009)}]{Bravyi2008}%
  \BibitemOpen
  \bibfield  {author} {\bibinfo {author} {\bibfnamefont {S.}~\bibnamefont
  {Bravyi}},\ }\href@noop {} {\bibfield  {journal} {\bibinfo  {journal} {Cont.
  Math.}\ }\textbf {\bibinfo {volume} {482}},\ \bibinfo {pages} {179} (\bibinfo
  {year} {2009})}\BibitemShut {NoStop}%
\bibitem [{\citenamefont {Evenbly}\ and\ \citenamefont
  {White}(2016)}]{PhysRevLett.116.140403}%
  \BibitemOpen
  \bibfield  {author} {\bibinfo {author} {\bibfnamefont {G.}~\bibnamefont
  {Evenbly}}\ and\ \bibinfo {author} {\bibfnamefont {S.~R.}\ \bibnamefont
  {White}},\ }\href {\doibase 10.1103/PhysRevLett.116.140403} {\bibfield
  {journal} {\bibinfo  {journal} {Phys. Rev. Lett.}\ }\textbf {\bibinfo
  {volume} {116}},\ \bibinfo {pages} {140403} (\bibinfo {year}
  {2016})}\BibitemShut {NoStop}%
\bibitem [{\citenamefont {Haegeman}\ \emph {et~al.}(2018)\citenamefont
  {Haegeman}, \citenamefont {Swingle}, \citenamefont {Walter}, \citenamefont
  {Cotler}, \citenamefont {Evenbly},\ and\ \citenamefont
  {Scholz}}]{RigorousWavelet}%
  \BibitemOpen
  \bibfield  {author} {\bibinfo {author} {\bibfnamefont {J.}~\bibnamefont
  {Haegeman}}, \bibinfo {author} {\bibfnamefont {B.}~\bibnamefont {Swingle}},
  \bibinfo {author} {\bibfnamefont {M.}~\bibnamefont {Walter}}, \bibinfo
  {author} {\bibfnamefont {J.}~\bibnamefont {Cotler}}, \bibinfo {author}
  {\bibfnamefont {G.}~\bibnamefont {Evenbly}}, \ and\ \bibinfo {author}
  {\bibfnamefont {V.~B.}\ \bibnamefont {Scholz}},\ }\href@noop {} {\bibfield
  {journal} {\bibinfo  {journal} {Phys. Rev. X}\ }\textbf {\bibinfo {volume}
  {8}},\ \bibinfo {pages} {011003} (\bibinfo {year} {2018})}\BibitemShut
  {NoStop}%
\bibitem [{\citenamefont {Evenbly}(2017)}]{HyperInvariantTensors}%
  \BibitemOpen
  \bibfield  {author} {\bibinfo {author} {\bibfnamefont {G.}~\bibnamefont
  {Evenbly}},\ }\href {\doibase 10.1103/PhysRevLett.119.141602} {\bibfield
  {journal} {\bibinfo  {journal} {Phys. Rev. Lett.}\ }\textbf {\bibinfo
  {volume} {119}},\ \bibinfo {pages} {141602} (\bibinfo {year}
  {2017})}\BibitemShut {NoStop}%
\bibitem [{\citenamefont {{Evenbly}}\ and\ \citenamefont
  {{Vidal}}(2015{\natexlab{a}})}]{Evenbly2015TNR1}%
  \BibitemOpen
  \bibfield  {author} {\bibinfo {author} {\bibfnamefont {G.}~\bibnamefont
  {{Evenbly}}}\ and\ \bibinfo {author} {\bibfnamefont {G.}~\bibnamefont
  {{Vidal}}},\ }\href {\doibase 10.1103/PhysRevLett.115.180405} {\bibfield
  {journal} {\bibinfo  {journal} {Phys. Rev. Lett.}\ }\textbf {\bibinfo
  {volume} {115}},\ \bibinfo {eid} {180405} (\bibinfo {year}
  {2015}{\natexlab{a}})}\BibitemShut {NoStop}%
\bibitem [{\citenamefont {{Evenbly}}\ and\ \citenamefont
  {{Vidal}}(2015{\natexlab{b}})}]{Evenbly2015TNR2}%
  \BibitemOpen
  \bibfield  {author} {\bibinfo {author} {\bibfnamefont {G.}~\bibnamefont
  {{Evenbly}}}\ and\ \bibinfo {author} {\bibfnamefont {G.}~\bibnamefont
  {{Vidal}}},\ }\href {\doibase 10.1103/PhysRevLett.115.200401} {\bibfield
  {journal} {\bibinfo  {journal} {Phys. Rev. Lett.}\ }\textbf {\bibinfo
  {volume} {115}},\ \bibinfo {eid} {200401} (\bibinfo {year}
  {2015}{\natexlab{b}})}\BibitemShut {NoStop}%
\bibitem [{\citenamefont {Orus}(2014)}]{Orus-AnnPhys-2014}%
  \BibitemOpen
  \bibfield  {author} {\bibinfo {author} {\bibfnamefont {R.}~\bibnamefont
  {Orus}},\ }\href@noop {} {\bibfield  {journal} {\bibinfo  {journal} {Ann.
  Phys.}\ }\textbf {\bibinfo {volume} {349}},\ \bibinfo {pages} {117} (\bibinfo
  {year} {2014})}\BibitemShut {NoStop}%
\bibitem [{\citenamefont {Schuch}(2013)}]{SchuchReview}%
  \BibitemOpen
  \bibfield  {author} {\bibinfo {author} {\bibfnamefont {N.}~\bibnamefont
  {Schuch}},\ }\href@noop {} {\bibfield  {journal} {\bibinfo  {journal}
  {Lecture notes for the 44th IFF Spring School ``Quantum Information
  Processing'' in Juelich}\ } (\bibinfo {year} {2013})}\BibitemShut {NoStop}%
\bibitem [{\citenamefont {Verstraete}\ \emph {et~al.}(2008)\citenamefont
  {Verstraete}, \citenamefont {Cirac},\ and\ \citenamefont
  {Murg}}]{VerstraeteBig}%
  \BibitemOpen
  \bibfield  {author} {\bibinfo {author} {\bibfnamefont {F.}~\bibnamefont
  {Verstraete}}, \bibinfo {author} {\bibfnamefont {J.~I.}\ \bibnamefont
  {Cirac}}, \ and\ \bibinfo {author} {\bibfnamefont {V.}~\bibnamefont {Murg}},\
  }\href@noop {} {\bibfield  {journal} {\bibinfo  {journal} {Adv. Phys.}\
  }\textbf {\bibinfo {volume} {57}},\ \bibinfo {pages} {143} (\bibinfo {year}
  {2008})}\BibitemShut {NoStop}%
\bibitem [{\citenamefont {Eisert}(2013)}]{EisertTensors}%
  \BibitemOpen
  \bibfield  {author} {\bibinfo {author} {\bibfnamefont {J.}~\bibnamefont
  {Eisert}},\ }\href@noop {} {\bibfield  {journal} {\bibinfo  {journal} {Mod.
  Sim.}\ }\textbf {\bibinfo {volume} {3}},\ \bibinfo {pages} {520} (\bibinfo
  {year} {2013})}\BibitemShut {NoStop}%
\bibitem [{\citenamefont {Berezin}(1966)}]{Berezin}%
  \BibitemOpen
  \bibfield  {author} {\bibinfo {author} {\bibfnamefont {F.}~\bibnamefont
  {Berezin}},\ }\href@noop {} {\emph {\bibinfo {title} {The method of second
  quantization}}}\ (\bibinfo  {publisher} {Academic Press},\ \bibinfo {year}
  {1966})\BibitemShut {NoStop}%
\bibitem [{\citenamefont {Cahill}\ and\ \citenamefont
  {Glauber}(1999)}]{cahill1999density}%
  \BibitemOpen
  \bibfield  {author} {\bibinfo {author} {\bibfnamefont {K.~E.}\ \bibnamefont
  {Cahill}}\ and\ \bibinfo {author} {\bibfnamefont {R.~J.}\ \bibnamefont
  {Glauber}},\ }\href@noop {} {\bibfield  {journal} {\bibinfo  {journal} {Phys.
  Rev. A}\ }\textbf {\bibinfo {volume} {59}},\ \bibinfo {pages} {1538}
  (\bibinfo {year} {1999})}\BibitemShut {NoStop}%
\bibitem [{\citenamefont {Bravyi}(2005)}]{bravyi2004lagrangian}%
  \BibitemOpen
  \bibfield  {author} {\bibinfo {author} {\bibfnamefont {S.}~\bibnamefont
  {Bravyi}},\ }\href@noop {} {\bibfield  {journal} {\bibinfo  {journal}
  {Quantum Inf. and Comp.}\ }\textbf {\bibinfo {volume} {5}},\ \bibinfo {pages}
  {216} (\bibinfo {year} {2005})}\BibitemShut {NoStop}%
\bibitem [{\citenamefont {Terhal}\ and\ \citenamefont
  {DiVincenzo}(2002)}]{terhal2002classical}%
  \BibitemOpen
  \bibfield  {author} {\bibinfo {author} {\bibfnamefont {B.~M.}\ \bibnamefont
  {Terhal}}\ and\ \bibinfo {author} {\bibfnamefont {D.~P.}\ \bibnamefont
  {DiVincenzo}},\ }\href@noop {} {\bibfield  {journal} {\bibinfo  {journal}
  {Phys. Rev. A}\ }\textbf {\bibinfo {volume} {65}},\ \bibinfo {pages} {032325}
  (\bibinfo {year} {2002})}\BibitemShut {NoStop}%
\bibitem [{\citenamefont {Laflamme}\ \emph {et~al.}(1996)\citenamefont
  {Laflamme}, \citenamefont {Miquel}, \citenamefont {Paz},\ and\ \citenamefont
  {Zurek}}]{Laflamme1996}%
  \BibitemOpen
  \bibfield  {author} {\bibinfo {author} {\bibfnamefont {R.}~\bibnamefont
  {Laflamme}}, \bibinfo {author} {\bibfnamefont {C.}~\bibnamefont {Miquel}},
  \bibinfo {author} {\bibfnamefont {J.~P.}\ \bibnamefont {Paz}}, \ and\
  \bibinfo {author} {\bibfnamefont {W.~H.}\ \bibnamefont {Zurek}},\ }\href
  {\doibase 10.1103/PhysRevLett.77.198} {\bibfield  {journal} {\bibinfo
  {journal} {Phys. Rev. Lett.}\ }\textbf {\bibinfo {volume} {77}},\ \bibinfo
  {pages} {198} (\bibinfo {year} {1996})}\BibitemShut {NoStop}%
\bibitem [{\citenamefont {Gottesman}(1996)}]{PhysRevA.54.1862}%
  \BibitemOpen
  \bibfield  {author} {\bibinfo {author} {\bibfnamefont {D.}~\bibnamefont
  {Gottesman}},\ }\href {\doibase 10.1103/PhysRevA.54.1862} {\bibfield
  {journal} {\bibinfo  {journal} {Phys. Rev. A}\ }\textbf {\bibinfo {volume}
  {54}},\ \bibinfo {pages} {1862} (\bibinfo {year} {1996})}\BibitemShut
  {NoStop}%
\bibitem [{\citenamefont {Terhal}(2015)}]{RevModPhys.87.307}%
  \BibitemOpen
  \bibfield  {author} {\bibinfo {author} {\bibfnamefont {B.~M.}\ \bibnamefont
  {Terhal}},\ }\href {\doibase 10.1103/RevModPhys.87.307} {\bibfield  {journal}
  {\bibinfo  {journal} {Rev. Mod. Phys.}\ }\textbf {\bibinfo {volume} {87}},\
  \bibinfo {pages} {307} (\bibinfo {year} {2015})}\BibitemShut {NoStop}%
\bibitem [{\citenamefont {Gottesman}(1997)}]{Gottesman1997}%
  \BibitemOpen
  \bibfield  {author} {\bibinfo {author} {\bibfnamefont {D.}~\bibnamefont
  {Gottesman}},\ }\href@noop {} {\  (\bibinfo {year} {1997})},\ \Eprint
  {http://arxiv.org/abs/quant-ph/9705052} {arXiv:quant-ph/9705052 [quant-ph]}
  \BibitemShut {NoStop}%
%%CITATION = QUANT-PH/9705052;%%
\bibitem [{\citenamefont {Holzhey}\ \emph {et~al.}(1994)\citenamefont
  {Holzhey}, \citenamefont {Larsen},\ and\ \citenamefont {Wilczek}}]{Holzhey}%
  \BibitemOpen
  \bibfield  {author} {\bibinfo {author} {\bibfnamefont {C.}~\bibnamefont
  {Holzhey}}, \bibinfo {author} {\bibfnamefont {F.}~\bibnamefont {Larsen}}, \
  and\ \bibinfo {author} {\bibfnamefont {F.}~\bibnamefont {Wilczek}},\
  }\href@noop {} {\bibfield  {journal} {\bibinfo  {journal} {Nucl. Phys. B}\
  }\textbf {\bibinfo {volume} {424}},\ \bibinfo {pages} {443} (\bibinfo {year}
  {1994})}\BibitemShut {NoStop}%
\bibitem [{\citenamefont {Calabrese}\ and\ \citenamefont
  {Cardy}(2004)}]{CalabreseReview}%
  \BibitemOpen
  \bibfield  {author} {\bibinfo {author} {\bibfnamefont {P.}~\bibnamefont
  {Calabrese}}\ and\ \bibinfo {author} {\bibfnamefont {J.}~\bibnamefont
  {Cardy}},\ }\href@noop {} {\bibfield  {journal} {\bibinfo  {journal} {J.
  Stat.Mech.}\ }\textbf {\bibinfo {volume} {0406}},\ \bibinfo {pages} {P06002}
  (\bibinfo {year} {2004})}\BibitemShut {NoStop}%
\bibitem [{\citenamefont {Ware}\ \emph {et~al.}(2016)\citenamefont {Ware},
  \citenamefont {Son}, \citenamefont {Cheng}, \citenamefont {Mishmash},
  \citenamefont {Alicea},\ and\ \citenamefont {Bauer}}]{PhysRevB.94.115127}%
  \BibitemOpen
  \bibfield  {author} {\bibinfo {author} {\bibfnamefont {B.}~\bibnamefont
  {Ware}}, \bibinfo {author} {\bibfnamefont {J.~H.}\ \bibnamefont {Son}},
  \bibinfo {author} {\bibfnamefont {M.}~\bibnamefont {Cheng}}, \bibinfo
  {author} {\bibfnamefont {R.~V.}\ \bibnamefont {Mishmash}}, \bibinfo {author}
  {\bibfnamefont {J.}~\bibnamefont {Alicea}}, \ and\ \bibinfo {author}
  {\bibfnamefont {B.}~\bibnamefont {Bauer}},\ }\href {\doibase
  10.1103/PhysRevB.94.115127} {\bibfield  {journal} {\bibinfo  {journal} {Phys.
  Rev. B}\ }\textbf {\bibinfo {volume} {94}},\ \bibinfo {pages} {115127}
  (\bibinfo {year} {2016})}\BibitemShut {NoStop}%
\bibitem [{\citenamefont {Balasubramanian}\ and\ \citenamefont
  {Ross}(2000)}]{Balasubramanian:1999zv}%
  \BibitemOpen
  \bibfield  {author} {\bibinfo {author} {\bibfnamefont {V.}~\bibnamefont
  {Balasubramanian}}\ and\ \bibinfo {author} {\bibfnamefont {S.~F.}\
  \bibnamefont {Ross}},\ }\href {\doibase 10.1103/PhysRevD.61.044007}
  {\bibfield  {journal} {\bibinfo  {journal} {Phys. Rev.}\ }\textbf {\bibinfo
  {volume} {D61}},\ \bibinfo {pages} {044007} (\bibinfo {year} {2000})},\
  \Eprint {http://arxiv.org/abs/hep-th/9906226} {arXiv:hep-th/9906226 [hep-th]}
  \BibitemShut {NoStop}%
%%CITATION = HEP-TH/9906226;%%
\bibitem [{\citenamefont {Caputa}\ \emph {et~al.}(2017)\citenamefont {Caputa},
  \citenamefont {Kundu}, \citenamefont {Miyaji}, \citenamefont {Takayanagi},\
  and\ \citenamefont {Watanabe}}]{Caputa:2017urj}%
  \BibitemOpen
  \bibfield  {author} {\bibinfo {author} {\bibfnamefont {P.}~\bibnamefont
  {Caputa}}, \bibinfo {author} {\bibfnamefont {N.}~\bibnamefont {Kundu}},
  \bibinfo {author} {\bibfnamefont {M.}~\bibnamefont {Miyaji}}, \bibinfo
  {author} {\bibfnamefont {T.}~\bibnamefont {Takayanagi}}, \ and\ \bibinfo
  {author} {\bibfnamefont {K.}~\bibnamefont {Watanabe}},\ }\href {\doibase
  10.1103/PhysRevLett.119.071602} {\bibfield  {journal} {\bibinfo  {journal}
  {Phys. Rev. Lett.}\ }\textbf {\bibinfo {volume} {119}},\ \bibinfo {pages}
  {071602} (\bibinfo {year} {2017})},\ \Eprint
  {http://arxiv.org/abs/1703.00456} {arXiv:1703.00456 [hep-th]} \BibitemShut
  {NoStop}%
%%CITATION = ARXIV:1703.00456;%%
\bibitem [{\citenamefont {Barthel}\ \emph {et~al.}(2009)\citenamefont
  {Barthel}, \citenamefont {Pineda},\ and\ \citenamefont
  {Eisert}}]{PhysRevA.80.042333}%
  \BibitemOpen
  \bibfield  {author} {\bibinfo {author} {\bibfnamefont {T.}~\bibnamefont
  {Barthel}}, \bibinfo {author} {\bibfnamefont {C.}~\bibnamefont {Pineda}}, \
  and\ \bibinfo {author} {\bibfnamefont {J.}~\bibnamefont {Eisert}},\ }\href
  {\doibase 10.1103/PhysRevA.80.042333} {\bibfield  {journal} {\bibinfo
  {journal} {Phys. Rev. A}\ }\textbf {\bibinfo {volume} {80}},\ \bibinfo
  {pages} {042333} (\bibinfo {year} {2009})}\BibitemShut {NoStop}%
\bibitem [{\citenamefont {Corboz}\ \emph {et~al.}(2010)\citenamefont {Corboz},
  \citenamefont {Orus}, \citenamefont {Bauer},\ and\ \citenamefont
  {Vidal}}]{CorbozPEPSFermions}%
  \BibitemOpen
  \bibfield  {author} {\bibinfo {author} {\bibfnamefont {P.}~\bibnamefont
  {Corboz}}, \bibinfo {author} {\bibfnamefont {R.}~\bibnamefont {Orus}},
  \bibinfo {author} {\bibfnamefont {B.}~\bibnamefont {Bauer}}, \ and\ \bibinfo
  {author} {\bibfnamefont {G.}~\bibnamefont {Vidal}},\ }\href@noop {}
  {\bibfield  {journal} {\bibinfo  {journal} {Phys. Rev. B}\ }\textbf {\bibinfo
  {volume} {81}},\ \bibinfo {pages} {165104} (\bibinfo {year}
  {2010})}\BibitemShut {NoStop}%
\bibitem [{\citenamefont {Kraus}\ \emph {et~al.}(2010)\citenamefont {Kraus},
  \citenamefont {Schuch}, \citenamefont {Verstraete},\ and\ \citenamefont
  {Cirac}}]{PhysRevA.81.052338}%
  \BibitemOpen
  \bibfield  {author} {\bibinfo {author} {\bibfnamefont {C.~V.}\ \bibnamefont
  {Kraus}}, \bibinfo {author} {\bibfnamefont {N.}~\bibnamefont {Schuch}},
  \bibinfo {author} {\bibfnamefont {F.}~\bibnamefont {Verstraete}}, \ and\
  \bibinfo {author} {\bibfnamefont {J.~I.}\ \bibnamefont {Cirac}},\ }\href
  {\doibase 10.1103/PhysRevA.81.052338} {\bibfield  {journal} {\bibinfo
  {journal} {Phys. Rev. A}\ }\textbf {\bibinfo {volume} {81}},\ \bibinfo
  {pages} {052338} (\bibinfo {year} {2010})}\BibitemShut {NoStop}%
\bibitem [{\citenamefont {Wille}\ \emph {et~al.}(2017)\citenamefont {Wille},
  \citenamefont {Buerschaper},\ and\ \citenamefont
  {Eisert}}]{PhysRevB.95.245127}%
  \BibitemOpen
  \bibfield  {author} {\bibinfo {author} {\bibfnamefont {C.}~\bibnamefont
  {Wille}}, \bibinfo {author} {\bibfnamefont {O.}~\bibnamefont {Buerschaper}},
  \ and\ \bibinfo {author} {\bibfnamefont {J.}~\bibnamefont {Eisert}},\ }\href
  {\doibase 10.1103/PhysRevB.95.245127} {\bibfield  {journal} {\bibinfo
  {journal} {Phys. Rev. B}\ }\textbf {\bibinfo {volume} {95}},\ \bibinfo
  {pages} {245127} (\bibinfo {year} {2017})}\BibitemShut {NoStop}%
\bibitem [{\citenamefont {Bultinck}\ \emph {et~al.}(2017)\citenamefont
  {Bultinck}, \citenamefont {Williamson}, \citenamefont {Haegeman},\ and\
  \citenamefont {Verstraete}}]{PhysRevB.95.075108}%
  \BibitemOpen
  \bibfield  {author} {\bibinfo {author} {\bibfnamefont {N.}~\bibnamefont
  {Bultinck}}, \bibinfo {author} {\bibfnamefont {D.~J.}\ \bibnamefont
  {Williamson}}, \bibinfo {author} {\bibfnamefont {J.}~\bibnamefont
  {Haegeman}}, \ and\ \bibinfo {author} {\bibfnamefont {F.}~\bibnamefont
  {Verstraete}},\ }\href {\doibase 10.1103/PhysRevB.95.075108} {\bibfield
  {journal} {\bibinfo  {journal} {Phys. Rev. B}\ }\textbf {\bibinfo {volume}
  {95}},\ \bibinfo {pages} {075108} (\bibinfo {year} {2017})}\BibitemShut
  {NoStop}%
\bibitem [{\citenamefont {Valiant}(2002{\natexlab{b}})}]{valiant2002quantum}%
  \BibitemOpen
  \bibfield  {author} {\bibinfo {author} {\bibfnamefont {L.~G.}\ \bibnamefont
  {Valiant}},\ }\href@noop {} {\bibfield  {journal} {\bibinfo  {journal} {SIAM
  J. Comp.}\ }\textbf {\bibinfo {volume} {31}},\ \bibinfo {pages} {1229}
  (\bibinfo {year} {2002}{\natexlab{b}})}\BibitemShut {NoStop}%
\bibitem [{\citenamefont {Temperley}\ and\ \citenamefont
  {Fisher}(1961)}]{Temperley1961}%
  \BibitemOpen
  \bibfield  {author} {\bibinfo {author} {\bibfnamefont {H.~N.~V.}\
  \bibnamefont {Temperley}}\ and\ \bibinfo {author} {\bibfnamefont {M.~E.}\
  \bibnamefont {Fisher}},\ }\href {\doibase 10.1080/14786436108243366}
  {\bibfield  {journal} {\bibinfo  {journal} {Phil. Mag.}\ }\textbf {\bibinfo
  {volume} {6}},\ \bibinfo {pages} {1061} (\bibinfo {year} {1961})}\BibitemShut
  {NoStop}%
\bibitem [{\citenamefont {Kasteleyn}(1961)}]{Kasteleyn1961}%
  \BibitemOpen
  \bibfield  {author} {\bibinfo {author} {\bibfnamefont {P.}~\bibnamefont
  {Kasteleyn}},\ }\href {\doibase https://doi.org/10.1016/0031-8914(61)90063-5}
  {\bibfield  {journal} {\bibinfo  {journal} {Physica}\ }\textbf {\bibinfo
  {volume} {27}},\ \bibinfo {pages} {1209 } (\bibinfo {year}
  {1961})}\BibitemShut {NoStop}%
\bibitem [{\citenamefont {Cai}\ \emph {et~al.}(2007)\citenamefont {Cai},
  \citenamefont {Choudhary},\ and\ \citenamefont {Lu}}]{cai2007theory}%
  \BibitemOpen
  \bibfield  {author} {\bibinfo {author} {\bibfnamefont {J.-Y.}\ \bibnamefont
  {Cai}}, \bibinfo {author} {\bibfnamefont {V.}~\bibnamefont {Choudhary}}, \
  and\ \bibinfo {author} {\bibfnamefont {P.}~\bibnamefont {Lu}},\ }in\
  \href@noop {} {\emph {\bibinfo {booktitle} {Computational Complexity, 2007.
  CCC'07. Twenty-Second Annual IEEE Conference on}}}\ (\bibinfo {organization}
  {IEEE},\ \bibinfo {year} {2007})\ pp.\ \bibinfo {pages}
  {305--318}\BibitemShut {NoStop}%
\bibitem [{\citenamefont {Bloch}\ and\ \citenamefont
  {Messiah}(1962)}]{bloch1962canonical}%
  \BibitemOpen
  \bibfield  {author} {\bibinfo {author} {\bibfnamefont {C.}~\bibnamefont
  {Bloch}}\ and\ \bibinfo {author} {\bibfnamefont {A.}~\bibnamefont
  {Messiah}},\ }\href@noop {} {\bibfield  {journal} {\bibinfo  {journal} {Nucl.
  Phys.}\ }\textbf {\bibinfo {volume} {39}},\ \bibinfo {pages} {95} (\bibinfo
  {year} {1962})}\BibitemShut {NoStop}%
\bibitem [{\citenamefont {Kraus}(2009)}]{kraus2009quantum}%
  \BibitemOpen
  \bibfield  {author} {\bibinfo {author} {\bibfnamefont {C.~V.}\ \bibnamefont
  {Kraus}},\ }\emph {\bibinfo {title} {A quantum information perspective of
  fermionic quantum many-body systems}},\ \href@noop {} {Ph.D. thesis},\
  \bibinfo  {school} {Universit{\"a}t M{\"u}nchen} (\bibinfo {year}
  {2009})\BibitemShut {NoStop}%
\bibitem [{\citenamefont {Bravyi}\ and\ \citenamefont
  {Gosset}(2016)}]{bravyi2016complexity}%
  \BibitemOpen
  \bibfield  {author} {\bibinfo {author} {\bibfnamefont {S.}~\bibnamefont
  {Bravyi}}\ and\ \bibinfo {author} {\bibfnamefont {D.}~\bibnamefont
  {Gosset}},\ }\href@noop {} {\bibfield  {journal} {\bibinfo  {journal}
  {arXiv:1609.00735}\ } (\bibinfo {year} {2016})}\BibitemShut {NoStop}%
\bibitem [{\citenamefont {Francesco}\ \emph {et~al.}(1997)\citenamefont
  {Francesco}, \citenamefont {Mathieu},\ and\ \citenamefont
  {Senechal}}]{BigYellowBook1997}%
  \BibitemOpen
  \bibfield  {author} {\bibinfo {author} {\bibfnamefont {P.}~\bibnamefont
  {Francesco}}, \bibinfo {author} {\bibfnamefont {P.}~\bibnamefont {Mathieu}},
  \ and\ \bibinfo {author} {\bibfnamefont {D.}~\bibnamefont {Senechal}},\
  }\href@noop {} {\emph {\bibinfo {title} {Conformal Field Theory}}}\ (\bibinfo
   {publisher} {Springer},\ \bibinfo {year} {1997})\ pp.\ \bibinfo {pages}
  {439--486}\BibitemShut {NoStop}%
\bibitem [{\citenamefont {Serafini}\ \emph {et~al.}(2004)\citenamefont
  {Serafini}, \citenamefont {Illuminati},\ and\ \citenamefont
  {De~Siena}}]{Serafini:2003ke}%
  \BibitemOpen
  \bibfield  {author} {\bibinfo {author} {\bibfnamefont {A.}~\bibnamefont
  {Serafini}}, \bibinfo {author} {\bibfnamefont {F.}~\bibnamefont
  {Illuminati}}, \ and\ \bibinfo {author} {\bibfnamefont {S.}~\bibnamefont
  {De~Siena}},\ }\href {\doibase 10.1088/0953-4075/37/2/L02} {\bibfield
  {journal} {\bibinfo  {journal} {J. Phys.}\ }\textbf {\bibinfo {volume}
  {B37}},\ \bibinfo {pages} {L21} (\bibinfo {year} {2004})},\ \Eprint
  {http://arxiv.org/abs/quant-ph/0307073} {arXiv:quant-ph/0307073 [quant-ph]}
  \BibitemShut {NoStop}%
%%CITATION = QUANT-PH/0307073;%%
\end{thebibliography}%

\end{document}